\theoremstyle{plain}
\newtheorem{theorem}{Theorem}
\newtheorem{exa}{Example}
\newtheorem{proposition}{Proposition}
\newtheorem{lemma}{Lemma}
\newtheorem{corollary}{Corollary}
\theoremstyle{definition}
\newtheorem{definition}{Definition}
\newtheorem{remark}{Remark}
\newcommand{\I}{\mathbb{I}} 
\newcommand{\C}{\mathcal{C}} 
\newcommand{\R}{\mathbb{R}} 
\newcommand{\N}{\mathbb{N}} 
\newcommand{\ec}{\hat{C}_n} 
\newcommand{\PP}{\mathbb{P}} 
\newtheorem{conjecture}{Conjecture}
\newtheorem{condition}{Condition}
\numberwithin{equation}{section}
\begin{document}

\title{A novel positive dependence property and its impact \\ on a popular class of concordance measures}
\author{S. Fuchs\footnote{{Department for Artificial Intelligence \& Human Interfaces, University of Salzburg, Hellbrunnerstrasse 34, 5020 Salzburg, Austria. sebastian.fuchs@plus.ac.at}}, \;
M. Tschimpke\footnote{{Department for Artificial Intelligence \& Human Interfaces, University of Salzburg, Hellbrunnerstrasse 34, 5020 Salzburg, Austria. marco.tschimpke@plus.ac.at}}}

\date{}

\maketitle


\begin{abstract}
\noindent
A novel positive dependence property is introduced, 
called positive measure inducing (PMI for short), 
being fulfilled by numerous copula classes, 
including Gaussian, Fr\'echet, Farlie-Gumbel-Morgenstern and Frank copulas; 
it is conjectured that even all positive quadrant dependent Archimedean copulas meet this property.
From a geometric viewpoint, a PMI copula concentrates more mass near the main diagonal than in the opposite diagonal.
A striking feature of PMI copulas is that they impose an ordering on a certain class of copula-induced measures of concordance, 
the latter originating in \citet{edwards2004measures} and including Spearman's rho $\rho$ and Gini's gamma $\gamma$, 
leading to numerous new inequalities such as $3 \gamma \geq 2 \rho$.
The measures of concordance within this class are estimated using (classical) empirical copulas and the intrinsic construction via empirical checkerboard copulas, and the estimators' asymptotic behaviour is determined.
Building upon the presented inequalities, 
asymptotic tests are constructed having the potential of being used for detecting whether the underlying dependence structure of a given sample is PMI, which in turn can be used for excluding certain copula families from model building.
The excellent performance of the tests is demonstrated in a simulation study and by means of a real-data example.
\end{abstract}

\noindent\textit{Keywords: }Copula; Dependence property; Estimator; Measures of concordance; Test

\section{Introduction\label{sec:1}}

For capturing dependence relationships between (continuous) random variables $X$ and $Y$,
it is quite common to use single value quantities such as the concordance measures Spearman's rho $\rho_{XY}$, Kendall's tau $\tau_{XY}$ and Gini's gamma $\gamma_{XY}$ (see, e.g., \cite{nelsen2007introduction,durante2016principles,genest2010spearman}), 
the various tail dependence coefficients and functions (see, e.g., \cite{joe2014dependence}), or, more recently, measures of predictability (being capable of detecting directed relationships) as presented in \cite{fgwt2020,siburg2013,sfx2022phi,chatterjee2020,ansari2022}.
Another option consists in examining whether certain (positive) dependence properties are fulfilled such as positive quadrant dependence (PQD), left tail decreasingness (LTD), right tail increasingness (RTI), stochastic increasingness (SI), or total positivity of order 2, the latter considered either for a copula, 
its Markov kernel (%
see \cite{fuchs2023total}), or (if existent) its density. 

The two approaches are linked in that specific dependence properties impose a relationship between certain dependence measures.
For example, 
$3\tau_{XY} \geq \rho_{XY} \geq 0$ whenever the random variables $X$ and $Y$ are PQD (see, e.g., \cite{nelsen2007introduction}),
$\rho_{XY} \geq \tau_{XY} \geq 0$ whenever $X$ and $Y$ are LTD \& RTI (see, e.g., \cite{caperaa1993spearman,fredricks2007relationship}),
and $3\tau_{XY} \geq 2\rho_{XY}$ whenever the connecting copula $C_{XY}$ is absolutely continuous and fulfills $C_{XY} \neq \Pi$ as well as $(\partial^2/\partial u \partial v) \ln |C_{XY}(u,v) - \Pi(u,v)| \geq 0$ (see \cite{fredricks2007relationship}) with $\Pi$ denoting the independence copula.
Further inequalities between Spearman's rho and Kendall's tau comprise the well-known universal inequalities 
$$
    |3\tau_{XY} - 2\rho_{XY}| \leq 1
    \qquad \textrm{ and } \qquad
    \frac{(1+\tau_{XY})^2}{2} - 1 \leq \rho_{XY} \leq 1 - \frac{(1-\tau_{XY})^2}{2}
$$
going back to \citet{daniels1950,durbin1951inversions} (see also \cite{schreyer2017exact}), \pagebreak
inequalities concerning extreme order statistics (see, e.g., \cite{chen2007note,kochar2005}), and
the popular Hutchinson-Lai inequalities 
$$
    -1 + \sqrt{1+3\tau_{XY}} \leq \rho_{XY} \leq \min\left\{ 3 \, \frac{\tau_{XY}}{2}, 2 \tau_{XY} - \tau_{XY}^2 \right\}
$$
conjectured to hold for random variables $X$ and $Y$ being SI (see \cite{hutchinson1990}),
but were disproved in \cite{munroe2010,nelsen2007introduction}.
In this context, it was shown in \cite{hurlimann2003hutchinson,trutschnig2018sharp} that the Hutchinson-Lai inequalities and the inequality
$\tfrac{3 \tau_{XY}}{2+\tau_{XY}} \leq \rho_{XY}$
hold for those SI random variables whose connecting copula is extreme value.
\\
In the authors' understanding, Spearman's rho and Kendall's tau have been considered almost exclusively so far, as they exhibit a high compatibility with the above-mentioned dependence properties.
In contrast, Gini's gamma, for example, fulfills $2 \tau_{XY} \geq \gamma_{XY} \geq 0$
whenever $C_{XY} \geq (M+W)/2$ (with $M$ and $W$ denoting the lower and upper Fr\'echet-Hoeffding bounds), the latter condition being in no relation to the properties PQD, LTD and RTI.

Instead of establishing inequalities for dependence measures in the presence of certain dependence properties, in the present paper we choose a different approach and consider a class of measures of concordance $(\kappa_A)_{A \in \mathcal{C}^\ast}$ (going back to \citet{edwards2004measures} and including Spearman's rho and Gini's gamma) that is generated by an ordered set of copulas $(\mathcal{C}^\ast, \preceq)$ and tackle the question: 
What kind of property needs to be fulfilled by a dependence structure for the values of the measures within this class to be ordered?
\\
As a result, in Section \ref{Sec.PMI}, we come up with a novel dependence property, 
called \emph{positive measure inducing} (PMI for short, Definition \ref{Def.PMI}),
that is based on a copula's reflections, fails to be generally linked to the above mentioned dependence properties PQD, LTD, RTI, etc., and is fulfilled by numerous copula classes including
Gaussian copulas, 
Fr\'echet copulas,
Farlie-Gumbel-Morgenstern copulas and 
Frank copulas. 
We even conjecture that all those Archimedean copulas that are PQD meet this property.
Geometrically speaking, a PMI copula concentrates more mass near the main diagonal than in the opposite diagonal. 
\\
In Section \ref{SectionConcordanceMeasureByCopula} we then recapitulate how so-called invariant (with respect to permutations and reflections) copulas $A \in \mathcal{C}^\ast$ are used to construct measures of concordance $\kappa_A$ and resume an ordering $\preceq$ on invariant copulas which then allows to prove that, for each pair of copulas $A$ and $B$ with $A \preceq B$ and each pair of random variables $X$ and $Y$ whose connecting copula $C_{XY}$ is PMI, the corresponding measures of concordance $\kappa_A$ and $\kappa_B$ are ordered, 
leading to numerous new inequalities including, for example, the following inequality involving Spearman's rho and Gini's gamma
\begin{align} \label{Intro.Ineq.GammaRho}
  3 \gamma_{XY} \geq 2 \rho_{XY}\,.
\end{align}
Building upon this main result, in Section \ref{SectionAsymptoticTesting} we then construct asymptotic tests 
that have the potential of being used for detecting whether the underlying dependence structure of a given sample is PMI.
This is of particular interest in practice since, for example in the case of a rejection, certain families of PMI copulas such as 
Gaussian, Fr\'echet, FGM and Frank copulas
(or as we conject those Archimedean copulas that are PQD) may be excluded for model building.
Such asymptotic tests require estimators for the measures of concordance being asymptotically normal. This is achieved in Section \ref{SectionEstimation} where two different approaches for estimating these quantities are employed, one is the construction via (classical) empirical copulas and the other is the intrinsic construction via empirical checkerboard copulas (also known as empirical bilinear copulas); see, e.g., \cite{genest2017asymptotic}.
The excellent performance of the tests is demonstrated in a simulation study and by means of a real-data example. All proofs are deferred to the Appendix \ref{Sec.App}.

\bigskip
Throughout this paper we will write $\I := [0,1]$ and denote by $\lambda$ the Lebesgue measure, be it $1$- or $2$-dimensional.
Bold symbols will be used to denote vectors, e.g., $\mathbf{x}=(x_1,x_2) \in \mathbb{R}^2$.  
\\
We will let $\mathcal{C}$ denote the family of all (bivariate) copulas. 
For every $C \in \mathcal{C}$ the corresponding probability measure will be denoted by $\mu_C$, 
i.e., $\mu_C([0,u] \times [0,v]) = C(u,v)$ for all $(u,v) \in \I^2$;
for more background on copulas and copula measures we refer to \cite{durante2016principles,nelsen2007introduction}. 
For every metric space $(\Delta,\delta)$ the Borel $\sigma$-field on $\Delta$ will be denoted by $\mathcal{B}(\Delta)$.\pagebreak

According to \cite[Theorem 3.4.3]{durante2016principles} and due to disintegration, 
every copula $C$ fulfills
$$
  C(u,v)
	 = \int_{[0,u]} K_C(p,[0,v]) \; \mathrm{d} \lambda(p),
$$
where $K_C$ is (a version of) the Markov kernel of $C$:
A Markov kernel from $\mathbb{I}$ to $\mathcal{B}(\mathbb{I})$ is a mapping 
$K: \mathbb{I}\times\mathcal{B}(\mathbb{I}) \rightarrow \mathbb{I}$ such that for every fixed 
$F\in\mathcal{B}(\mathbb{I})$ the mapping 
$u\mapsto K(u,F)$ is measurable and for every fixed $u \in\mathbb{I}$ the mapping 
$F\mapsto K(u ,F)$ is a probability measure. 
Given a random vector $(U,V)$ with uniformly distributed univariate marginals and a uniformly distributed random variable $V$ on a probability space $(\Omega, \mathcal{A}, \PP)$ 
we say that a Markov kernel $K$ is a regular conditional distribution of $V$ given $U$ if 
$K (U(\omega), F) = \PP( V \in F \,|\, U ) (\omega) $ holds $\PP$-almost surely for every $F\in \mathcal{B}(\mathbb{I})$. 
It is well-known that for each such random vector $(U,V)$ a regular conditional distribution 
$K(.,.)$ of $V$ given $U$ always exists and is unique for $\PP^U$-a.e. $u\in\mathbb{I}$,
where $\PP^U$ denotes the push-forward of $\PP$ under $U$.
For more background on conditional expectation and general disintegration we refer to \cite{kallenberg2002,klenke2008};
for more information on Markov kernels in the context of copulas we refer to 
\cite{durante2016principles, kasper2021weak, sfx2021vine}.

A map $\phi: \C \to \C$ is said to be a \emph{transformation} on $\C$. Let $\Phi$ denote the collection of all transformations on $\C$ and define the \emph{composition} $\circ: \Phi \times \Phi \to \Phi$ by letting $(\phi_1 \circ \phi_2)(C) := \phi_1(\phi_2(C))$. The composition is associative and the transformation $\iota \in \Phi$ given by $\iota(C):=C$ satisfies $\iota \circ \phi = \phi \circ \iota = \phi $ for every $\phi \in \Phi$ and is therefore called the \emph{identity} on $\C$. Thus, $(\Phi, \circ)$ is a semigroup with neutral element $\iota$. The \textit{permutation} $\pi: \C \to \C$  and the \textit{partial reflection} $\nu_1: \C \to \C$ are defined by
$\pi(C)(u,v) := C(v,u)$ and $\nu_1(C)(u,v) := v - C(1-u,v)$
and lead to the transformations $\nu_2 := \pi \circ \nu_1 \circ \pi$ and $\nu := \nu_1 \circ \nu_2$.
Note that the \textit{total reflection} $\nu$ maps a copula $C$ to its survival copula. Let $\Gamma$ denote the smallest subgroup containing $\pi$ and $\nu_1$, i.e. $\Gamma = \{\iota, \nu_1, \nu_2, \nu, \pi, \pi \circ \nu_1, \pi \circ \nu_2, \pi \circ \nu \}$. 
Proofs and further details on the group of transformations may be found in \cite{durante2019reflection,fuchs2014multivariate,fuchs2014bivariate}.  
A copula $C$ is said to be \emph{invariant} (with respect to the group $\Gamma$) if $\gamma(C) = C$ holds for every $\gamma \in \Gamma$.
\begin{exa}[Invariant copulas]\label{Ex.InvariantCop}~~
\begin{enumerate}
    \item The independence copula $\Pi$ is invariant.
    \item For any copula $C$, the arithmetic mean 
      $$
        C_\Gamma := \frac{1}{\vert \Gamma \vert} \sum_{\gamma \in \Gamma} \gamma(C)
      $$
      is invariant.
      In particular, the arithmetic mean $M_\Gamma = (M+W)/2$ of the lower and upper Fr\'echet-Hoeffding bounds $W$ an $M$ is invariant.
      \item The copula $V$ defined by 
        $$
            V(u,v) := 
            \begin{cases}
                M(u,v), &\text{ if } \vert u - v \vert > \frac{1}{2}, \\
                W(u,v), &\text{ if } \vert u + v - 1 \vert > \frac{1}{2}, \\
                \frac{u}{2} + \frac{v}{2} - \frac{1}{4}, &\text{ otherwise}
            \end{cases}
        $$
        is invariant (see, e.g., \cite{nelsen2007introduction}).
\end{enumerate}
Figure \ref{ExampleGammaInvariant} depicts samples of the invariant copulas $\Pi$, $V$ and $M_\Gamma$ introduced above.
\end{exa}

\begin{figure}[!ht]
    \centering
    \begin{subfigure}{.215\textwidth}
        \centering
        \includegraphics[width=1\textwidth]{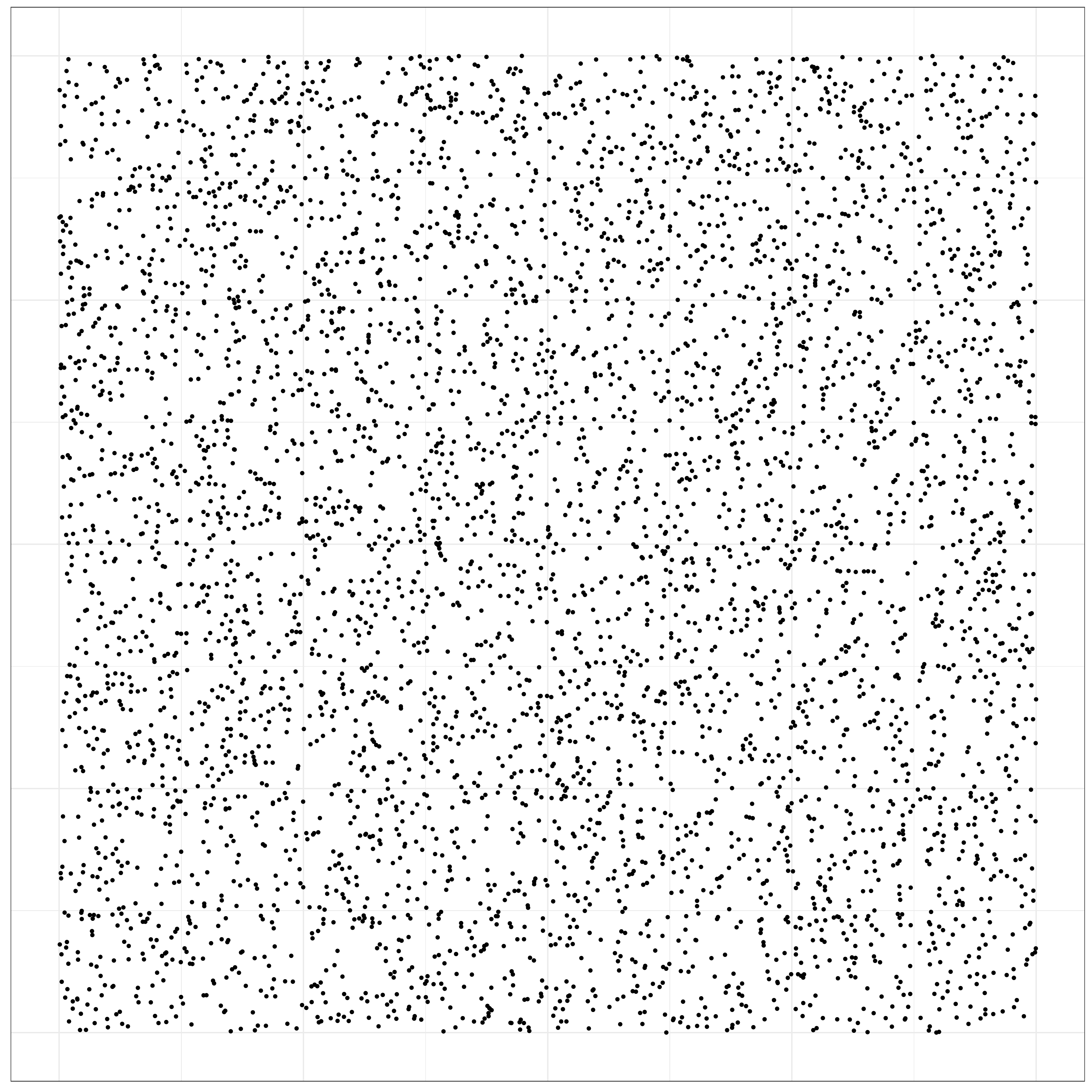}
	    \caption{$\Pi$}
    \end{subfigure}%
    \hspace{0.25cm}
    \begin{subfigure}{.215\textwidth}
        \centering
        \includegraphics[width=1\textwidth]{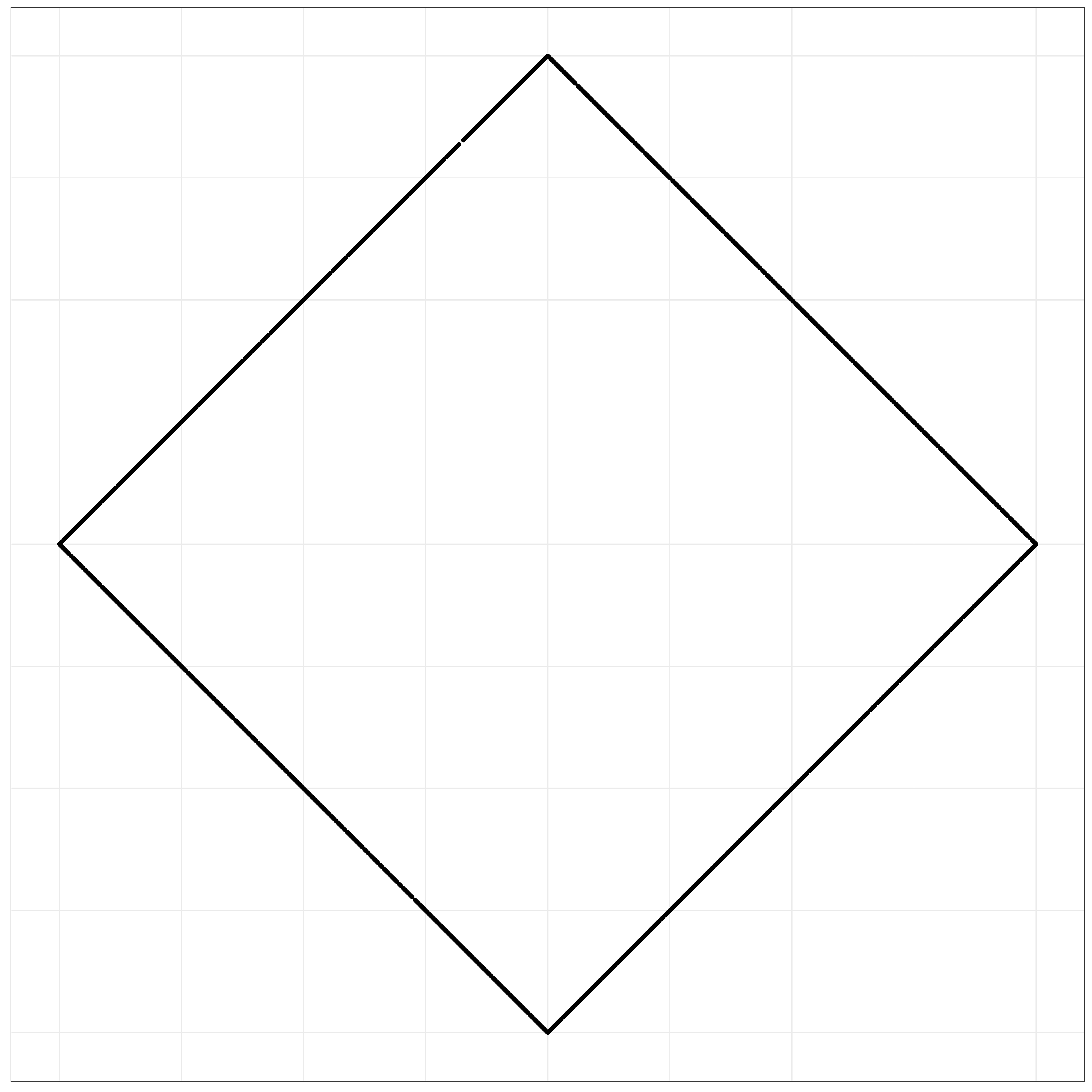}
	    \caption{$V$}
    \end{subfigure}
    \hspace{0.25cm}
    \begin{subfigure}{.215\textwidth}
        \centering
        \includegraphics[width=1\textwidth]{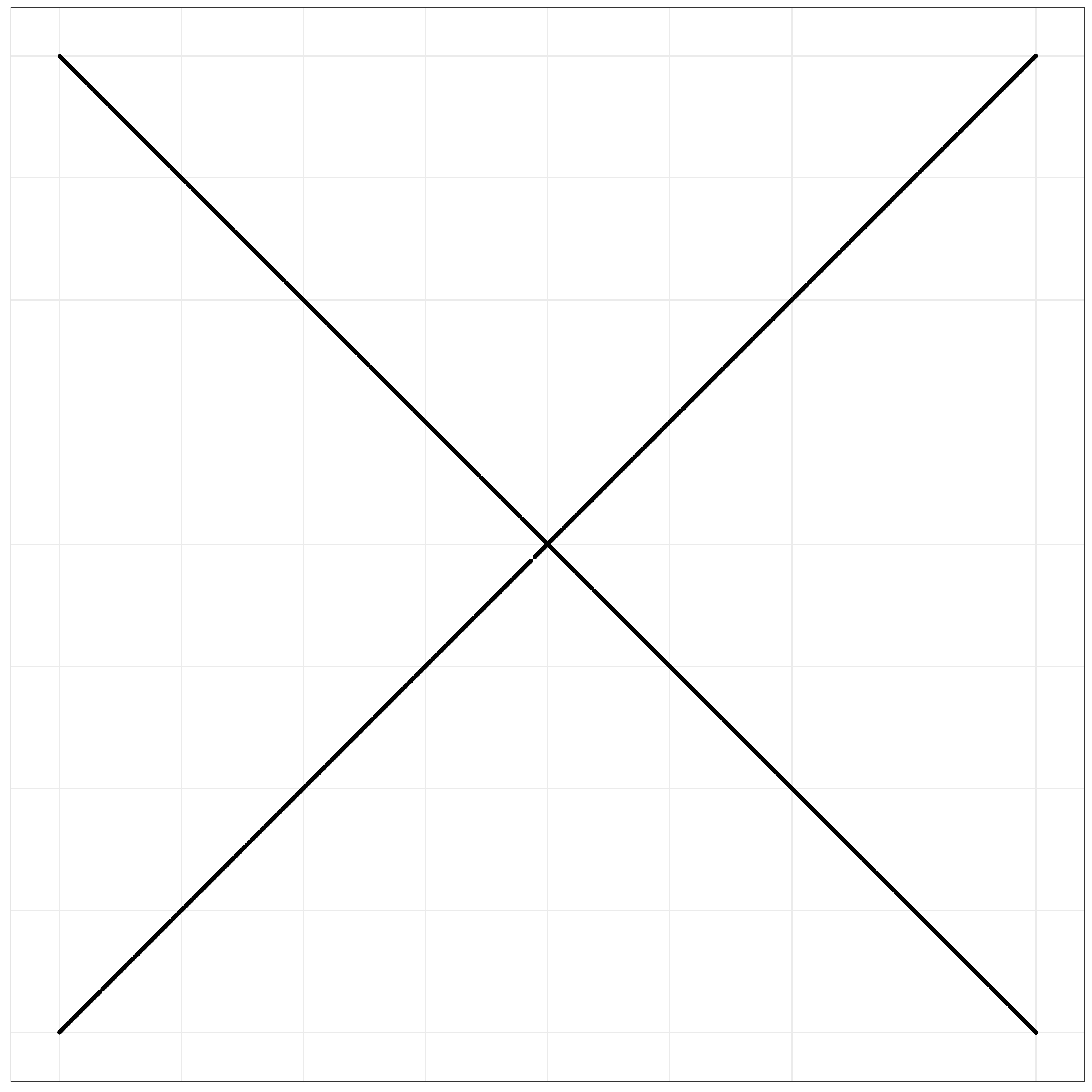}
	    \caption{$M_\Gamma$}
    \end{subfigure}%
    \caption{Samples of size $n = 5.000$ of the invariant copulas $\Pi$, $V$ and $M_\Gamma$ given in Example \ref{Ex.InvariantCop}.} \label{ExampleGammaInvariant}
\end{figure}

\section{The dependence property PMI} \label{Sec.PMI}

In this section a novel dependence property called \emph{positive measure inducing} (PMI for short) is introduced, which is fulfilled by numerous copula classes including
Gaussian copulas, 
Fr\'echet copulas,
Farlie-Gumbel-Morgenstern copulas and 
Frank copulas. 
We even conjecture that all those Archimedean copulas that are positively quadrant dependent (i.e., $C(u,v) \geq \Pi(u,v)$ for all $(u,v) \in (0, 1)^2$) meet this property.
Copulas exhibiting this property concentrate more mass close to the main diagonal than to the opposite diagonal; for an illustration see Figure \ref{IllustraionPMI} below.

\bigskip
Before introducing the novel dependence property, 
we briefly resume some well-known dependence properties for copulas according to 
\cite{fuchs2023total, nelsen2007introduction}: A copula $C\in\C$ is said to be 
\begin{enumerate}[({P}1)]
    \item\label{P1} \textit{positively quadrant dependent (PQD)} if $C(u,v) \geq \Pi(u,v)$ for all $(u,v) \in (0, 1)^2$.
    \item\label{P2} \textit{left tail decreasing (LTD)} if, for any $v \in (0,1)$, the mapping $u \mapsto \frac{C(u,v)}{u}$ is non-increasing, or equivalently, $u \mapsto \frac{\nu_2(C)(u,v)}{u}$ is non-decreasing.
    \item\label{P3} \textit{right tail increasing (RTI)} if, for any $v \in (0,1)$, the mapping $u \mapsto \frac{\nu(C)(u,v)}{u}$ is non-increasing, or equivalently, $u \mapsto \frac{\nu_1(C)(u,v)}{u}$ is non-decreasing.
    \item\label{P4} \textit{stochastically increasing (SI)} if, for (a version of) the Markov kernel $K_C$ and any $v \in (0,1)$, the mapping $u \mapsto K_C(u, [0,v])$ is non-increasing.
\end{enumerate}
For the second type of dependence properties, we recall the notion of a \textit{totally positive of order 2} function 
$f: \Delta \to \mathbb{R}$ with $\Delta \subseteq \I^2$, that is $f$ fulfills
\begin{equation}\label{Prop.TP2}
  f(u_1, v_1) \cdot f(u_2, v_2) - f(u_1, v_2) \cdot f(u_2, v_1) \geq 0
\end{equation}
for all $u_1 \leq u_2$ and all $v_1 \leq v_2$ such that $[u_1,u_2] \times [v_1,v_2] \subseteq \Delta$; see \cite{marshallolkin2011}.
A copula $C\in\C$ is said to be
\begin{enumerate}[({P}1)] \setcounter{enumi}{4}
    \item\label{P5} \textit{TP2} if the copula $C$ is totally positive of order 2 on $(0,1)^2$.
    \item\label{P6} \textit{MK-TP2} if (a version of) the Markov kernel $K_C$ is totally positive of order 2 on $(0,1)^2$.
    \item\label{P7} \textit{d-TP2} if the copula $C$ has a density which is totally positive of order 2 on $(0,1)^2$.
\end{enumerate}
Figure \ref{fig.Implications.Intro} illustrates the relations between the above-mentioned dependence properties,
and a probabilistic interpretation may be found in 
\cite{fuchs2023total,nelsen2007introduction}.

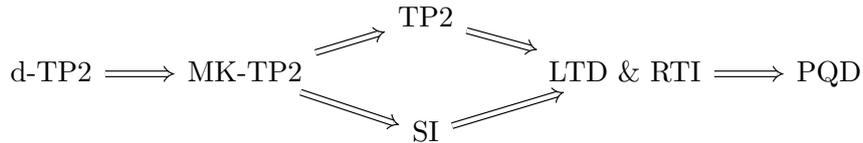
\begin{figure}[!ht]
\centering
  \begin{tikzcd}[row sep=tiny]
            && \textrm{TP2} \arrow[dr, Rightarrow] &\\
            \textrm{d-TP2} \arrow[r, Rightarrow] & \textrm{MK-TP2} \arrow[ur, Rightarrow] \arrow[dr, Rightarrow] & & \textrm{LTD \& RTI} \arrow[r, Rightarrow] & \textrm{PQD} \\
            && \textrm{SI} \arrow[ur, Rightarrow] &\\
  \end{tikzcd}
\caption{Relations between the different notions of positive dependence.}
\label{fig.Implications.Intro}
\end{figure}

We introduce a novel dependence property that is based on a copula's reflections and, for a copula $C$, is formulated in terms of the mapping $E_C: \I^2 \to \R $ defined by 
\begin{align}\label{def.EC}
   E_C := C - \nu_1(C) - \nu_2(C) + \nu(C)
\end{align}

\begin{definition}[Positive measure inducing copulas]\label{Def.PMI}~~\\
A copula $C\in\C$ is said to be 
\begin{enumerate}
\item \textit{positive measure inducing (PMI)} if $E_C$ induces a measure on $ (0,\frac{1}{2})^2$.  
\item \textit{negative measure inducing (NMI)} if $-E_C$ induces a measure on $ (0,\frac{1}{2})^2$.  
\end{enumerate}
\end{definition}

\noindent 
Notice that $E_C$ induces a measure on $(0,\frac{1}{2})^2$ 
if and only if $E_C$ induces a measure on $ (0,\frac{1}{2})^2 \cup (\frac{1}{2},1)^2$ 
if and only if $-E_C$ induces a measure on $ (0,\frac{1}{2}) \times (\frac{1}{2},1) \cup (\frac{1}{2},1) \times (0,\frac{1}{2})$.
In the present paper we mainly focus on property PMI and note here that analogous results can be obtained for the property NMI.

\begin{remark} 
The property PMI can alternatively be formulated either in terms of a copula's Markov kernel or (if existent) in terms of a copula's Lebesgue density:
\begin{enumerate}
\item
A copula $C$ is PMI if and only if, for every $u \in (0,\frac{1}{2})$, the mapping
\begin{align} \label{PMI.MK}
  v \mapsto K_C(u,[0,v]) - K_C(1-u,[0,v]) + K_C(u,[0,1-v]) - K_C(1-u,[0,1-v])
\end{align}
is non-decreasing on $(0,\frac{1}{2})$. An equivalent property applies in the case of interchanged arguments.

\item
An absolutely continuous copula $C$ with Lebesgue density $c$ is PMI if and only if the inequality
\begin{align}\label{PMI.density}
  c(u,v) - c(1-u,v) - c(u,1-v) + c(1-u,1-v) \geq 0
\end{align}
holds for all $(u,v) \in (0,\frac{1}{2})^2$; see left-hand side of Figure \ref{IllustraionPMI} for an illustration.
Obviously, if the density is $2$-increasing, then $C$ is PMI.
\end{enumerate}
Ineq. \eqref{PMI.density} indicates that a PMI copula concentrates more mass near the main diagonal than in the opposite diagonal.
If $C$ is even symmetric, $C$ being PMI is equivalent to
$c(u,v) + c(1-v,1-u) \geq c(1-u,v) + c(1-v,u,)$ for all $(u,v) \in (0,\frac{1}{2})^2$, meaning that $C$ concentrates more mass on parallel lines that are closer to the main diagonal; see right-hand side of Figure \ref{IllustraionPMI} for an illustration.
\end{remark}

\begin{figure}[!ht]
    \centering
    \begin{tikzpicture}[scale=3.25]
	        \draw[-,line width=1] (0,0) -- (1,0);
	        \draw[-,line width=1] (0,0) -- (0,1);
	        \draw[-,line width=1] (1,0) -- (1,1);
	        \draw[-,line width=1] (0,1) -- (1,1);
	        \node[below=1pt of {(-0.04,0.02)}, scale= 0.75, outer sep=2pt] {$0$};
	        \node[below=1pt of {(1,0.02)}, scale= 0.75, outer sep=2pt] {$1$};
	        \node[below=1pt of {(-0.04,1.05)}, scale= 0.75, outer sep=2pt] {$1$};
                \draw[-,line width=1, dotted] (0.5,0) -- (0.5,1);
                \draw[-,line width=1, dotted] (0,0.5) -- (1,0.5);
                \draw[-,line width=1, dotted] (0,0) -- (1,1);
                \draw[-,line width=1] (0.3,0) -- (0.3,-0.02);
	        \node[below=1pt of {(0.3,-0.02)}, scale= 0.75, outer sep=2pt,fill=white] {$u$};
                \draw[-,line width=1] (0.7,0) -- (0.7,-0.02);
	        \node[below=1pt of {(0.7,-0.02)}, scale= 0.75, outer sep=2pt,fill=white] {$1-u$};
                \draw[-,line width=1] (0,0.15) -- (-0.02,0.15);
	        \node[left=1pt of {(-0.02,0.15)}, scale= 0.75, outer sep=2pt,fill=white] {$v$};
                \draw[-,line width=1] (0,0.85) -- (-0.02,0.85);
	        \node[left=1pt of {(-0.02,0.85)}, scale= 0.75, outer sep=2pt,fill=white] {$1-v$};
                \node at (0.3,0.15)[circle,fill,inner sep=1.5pt, blue]{};
                \node at (0.7,0.85)[circle,fill,inner sep=1.5pt, blue]{};
                \node at (0.7,0.15)[circle,fill,inner sep=1.5pt, red]{};
                \node at (0.3,0.85)[circle,fill,inner sep=1.5pt, red]{};
                
                \draw[->] (1.1,1/2) -- (1.9, 1/2) ;
                \node[above=1pt of {(1.5,1/2)}, scale= 0.75, outer sep=2pt,fill=white] {\footnotesize Mirroring on main diagonal};
                
	        \draw[-,line width=1] (2,0) -- (3,0);
	        \draw[-,line width=1] (2,0) -- (2,1);
	        \draw[-,line width=1] (3,0) -- (3,1);
	        \draw[-,line width=1] (2,1) -- (3,1);
	        \node[below=1pt of {(1.96,0.02)}, scale= 0.75, outer sep=2pt] {$0$};
	        \node[below=1pt of {(3,0.02)}, scale= 0.75, outer sep=2pt] {$1$};
	        \node[below=1pt of {(1.96,1.05)}, scale= 0.75, outer sep=2pt] {$1$};
                \draw[-,line width=1, dotted] (2.5,0) -- (2.5,1);
                \draw[-,line width=1, dotted] (2,0.5) -- (3,0.5);
                \draw[-,line width=1, dotted] (2,0) -- (3,1);
                \draw[-,line width=1] (2.3,0) -- (2.3,-0.02);
	        \node[below=1pt of {(2.3,-0.02)}, scale= 0.75, outer sep=2pt,fill=white] {$u$};
                \draw[-,line width=1] (2.7,0) -- (2.7,-0.02);
	        \node[below=1pt of {(2.7,-0.02)}, scale= 0.75, outer sep=2pt,fill=white] {$1-u$};
                \draw[-,line width=1] (2,0.15) -- (1.98,0.15);
	        \node[left=1pt of {(1.98,0.15)}, scale= 0.75, outer sep=2pt,fill=white] {$v$};
                \node at (2.3,0.15)[circle,fill,inner sep=1.5pt, blue]{};
                \node at (2.85, 0.7)[circle,fill,inner sep=1.5pt, blue]{};
                \node at (2.7,0.15)[circle,fill,inner sep=1.5pt, red]{};
                \node at (2.85,0.3)[circle,fill,inner sep=1.5pt, red]{};
	    \end{tikzpicture}
    \caption{Illustration of the novel dependence property PMI for general copulas (left figure) and for symmetric copulas (right figure).}
    \label{IllustraionPMI}
\end{figure}
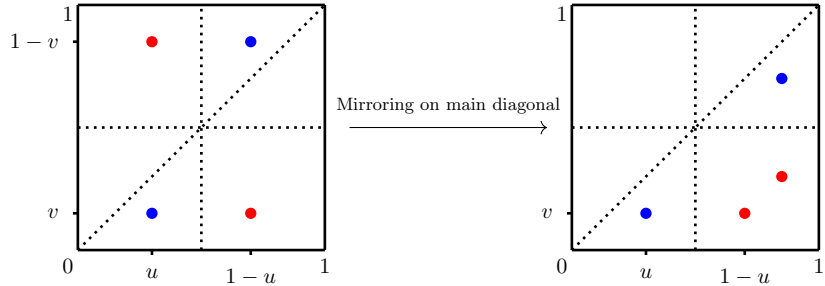

Before discussing examples of PMI copulas we relate the mapping $E_C$ to the dependence properties (P\ref{P1}) - (P\ref{P7}):

\begin{remark} \label{Rem.RelationsP1P4}
\begin{enumerate}
    \item If $C$ is PQD, then $E_C \geq 0$.
    \item IF $C$ is LTD and RTI, then, for any $v \in (0,1)$, the mapping $u \to \frac{E_C(u,v)}{u}$ is non-increasing.
    \item IF $C$ is SI, then, for (a version of) the Markov kernel $K_C$ and any $v \in (0,1)$, the mapping 
    \begin{align} \label{EC.SI}
        u \mapsto K_C(u,[0,v]) - K_C(1-u,[0,v]) + K_C(u,[0,1-v]) - K_C(1-u,[0,1-v])
    \end{align} 
    is non-increasing and $u \mapsto E_C(u,v)$ is concave.
\end{enumerate}
\noindent 
Comparing \eqref{PMI.MK} and \eqref{EC.SI}, it immediately becomes apparent that the properties PMI and SI are not related in general (also see Example \ref{ExampleEVC} below).
Due to the additive structure of $E_C$, similar statements regarding the properties (P\ref{P5}) - (P\ref{P7}) are not to be expected.
\end{remark}

From Examples \ref{Ex.Frechet}, \ref{Ex.FGM} and \ref{ExampleEVC} below we observe that neither PMI implies PQD, 
nor MK-TP2 
implies PMI.
Within certain copula families, however, the equivalence of PMI and the dependence properties (P\ref{P1}) - (P\ref{P7}) can be verified.

\begin{exa}[Invariant copulas and Fr\'echet-Hoeffding bounds]~~
\begin{enumerate}
    \item Every invariant copula $C$ fulfills $E_C = 0$ and hence is PMI (and NMI).
    In particular, $\Pi$, $V$ and $M_\Gamma$ are PMI; compare Example \ref{Ex.InvariantCop}. 

    \item The upper Fr\'echet-Hoeffding bound $M$ fulfills $E_M = 2 (M-W)$ and is PMI. \\
    Indeed, since for every $u \in (0,\frac{1}{2})$, the mapping
    \begin{align*}
      v 
      & \mapsto K_M(u,[0,v]) - K_M(1-u,[0,v]) + K_M(u,[0,1-v]) - K_M(1-u,[0,1-v])
      \\
      & = \mathds{1}_{[u,1]}(v) - \mathds{1}_{[1-u,1]}(v) + \mathds{1}_{[0,1-u]}(v) - \mathds{1}_{[0,u]}(v)
      \\
      & = \mathds{1}_{[u,1-u)}(v) + \mathds{1}_{(u,1-u]}(v)  
    \end{align*}
    is non-decreasing on $(0,\frac{1}{2})$, we immediately obtain that $M$ is PMI.

    \item The lower Fr\'echet-Hoeffding bound $W$ fulfills $E_W = 2 (W-M) = - E_M$ and is not PMI (but NMI). 
\end{enumerate}
\end{exa}

\begin{exa}[Fr{\'e}chet copulas]\label{Ex.Frechet}~~\\
For $\alpha, \beta \in \I$ with $\alpha+\beta \leq 1$, the mapping $C_{\alpha,\beta}: \I^2 \to \I$ given by
\begin{align*}
  C_{\alpha,\beta} (u,v)
  = \alpha \, M(u,v) + (1-\alpha -\beta) \, \Pi(u,v) + \beta \, W(u,v)
\end{align*}
is a copula and called Fr{\'e}chet copula.
$C_{\alpha,\beta}$ fulfills 
$E_{C_{\alpha,\beta}} = 2 \, (\alpha-\beta) (M-W) = (\alpha-\beta) \, E_M$ so that the following statement can be easily verified:
\begin{enumerate}
\item $C_{\alpha,\beta}$ is PMI if and only if $\alpha \geq \beta$.

\item $C_{\alpha,\beta}$ is NMI if and only if $\alpha \leq \beta$.
\end{enumerate}
\noindent By recalling that a Fr{\'e}chet copula $C_{\alpha,\beta}$ is PQD if and only if $\beta=0$,
it immediately becomes apparent that PMI neither implies PQD nor LTD / RTI / SI / TP2 / MK-TP2. 
\end{exa}

We now show that FGM copulas with a non-negative parameter are PMI.
Notice that FGM copulas are one of the very few copulas whose density is even $2$-increasing.

\begin{exa}[Generalized FGM copulas]\label{Ex.FGM}~~\\
    For differentiable functions $f,g: \I \to \R$ with $f(0) = f(1) = g(0) = g(1) = 0$ and
    $f'(u)g'(v) \geq -1 $ for all $u,v \in\I$,
    the function $C_{f,g}: \I^2 \to \I$ given by
    \begin{align}\label{ParametricExample}
        C_{f,g} (u,v) := uv + f(u)g(v)
    \end{align}
    defines a copula with density $c_{f,g}(u,v) = 1 + f'(u)g'(v)$ and fulfills 
    $E_{C_{f,g}} (u,v) = [f(u) + f(1-u)] [g(v) + g(1-v)]$.
    According to Ineq. \eqref{PMI.density}, 
    $C_{f,g}$ is PMI if and only if
    \begin{align*}
      [f'(u) - f'(1-u)] [g'(v) - g'(1-v)] \geq 0
    \end{align*}
    for all $(u,v) \in (0,\tfrac{1}{2})^2$, 
    which is the case if $f,g$ are either both convex or both concave. 
    In particular,
    \begin{enumerate}
    \item
    if, for $\alpha \in [-1,1]$, $f(u) := \alpha \, u(1-u)$ and $g(v) = v(1-v)$, 
    then $C_{f,g}$ coincides with the usual \emph{Farlie-Gumbel-Morgenstern (FGM) copula} and the following statements are equivalent:
    \begin{enumerate}[(a)]
        \item $\alpha \in [0,1]$.
        \item $C_{f,g}$ is PQD / LTD / RTI / SI / TP2 / MK-TP2 / d-TP2.
        \item $C_{f,g}$ is PMI.
    \end{enumerate}
    We note in passing that $C_{f,g}$ is NMI if and only if $\alpha \in [-1,0]$.\pagebreak
    
    \item 
    if, for $\alpha \in [-1,1]$, $f(u) := \alpha \, u^2(1-u)^2$ and $g(v) = v(1-v)$, 
    then $C_{f,g}$ is PMI if and only if $\alpha \in [0,1]$ and $C_{f,g}$ is NMI if and only if $\alpha \in [-1,0]$.
    \item 
    if $f(u) := u(1-u)(1-2u)$ and $g(v) = v(1-v)$, 
    then $C_{f,g}$ is PMI but fails to be PQD / LTD / RTI / SI / TP2 / MK-TP2 / d-TP2.
    \end{enumerate}
\end{exa}

In contrast to FGM copulas, Gaussian copulas and Frank copulas fail to have $2$-increasing densities, in general.
However, taking advantage of their geometric structure, below we prove that Gaussian and Frank copulas are in fact PMI and we make a conjecture for general Archimedean copulas; 
proofs of Examples \ref{Cor.Gaussian} and \ref{ExampleFrank} are deferred to the Appendix \ref{Sec.App}.

\begin{exa}[Gaussian copulas]\label{Cor.Gaussian}~~\\
For $\rho \in (-1,0) \cup (0,1)$, the mapping $C_\rho: \I^2 \to \I$ given by
\begin{align*}
    C_\rho(u,v) 
    := \int\limits_{(-\infty, \Phi^{-1}(u)] \times (-\infty, \Phi^{-1}(v)]}
    \frac{1}{2\pi \sqrt{1-\rho^2}} \; \exp{\left(-\frac{s^2-2\rho st + t^2}{2(1-\rho^2)} \right)} 
    \; \mathrm{d} \lambda(s,t)
\end{align*}
is a copula and called Gaussian copula,
where $\Phi^{-1}$ denotes the inverse of the standard normal distribution function.
Then the following statements are equivalent:
\begin{enumerate}[(a)]
  \item $\rho \in (0,1)$.
  \item $C_\rho$ is PQD / LTD / RTI / SI / TP2 / MK-TP2 / d-TP2.
  \item $C_\rho$ is PMI.
\end{enumerate}
We note in passing that a Gaussian copula is NMI if and only if $\rho \in (-1,0)$.
\end{exa}

In the sequel we study Archimedean copulas (see, e.g., \cite{durante2016principles,nelsen2007introduction}). 
A convex, strictly decreasing function $\varphi: \I \to [0,\infty]$ with $\varphi(1)=0$ is called a generator.
According to \cite{kasper2021weak} we may assume that all generators are right-continuous at $0$. Every generator $\varphi$ induces a symmetric copula C via
$$
    C(u,v) = \psi (\varphi(u) + \varphi(v))
$$
for all $(u,v) \in \I^2$ where 
$\psi: [0,\infty] \to \I$ denotes the pseudo-inverse of $\varphi$ defined by
$$
  \psi(x)
	:= \begin{cases}
     \varphi^{-1}(x) & \text{if } x\in[0,\varphi(0)) \\
     0              & \text{if } x\geq\varphi(0).
     \end{cases}
$$
The copula $C$ is called \emph{Archimedean copula}.


\begin{exa}[Frank copulas]\label{ExampleFrank}\textcolor{white}{a}\\
For $\delta \in (-\infty,0) \cup (0,\infty)$, the mapping $C_\delta: \I^2 \to \I$ given by
\begin{align*}
  C_\delta(u,v) 
  & := - \delta^{-1} \log \left( \frac{1 - e^{-\delta} - (1 - e^{-\delta u}) (1 - e^{-\delta v}) }{1 - e^{-\delta}} \right)
\end{align*}
is a copula and called Frank copula.
The following statements are equivalent:
\begin{enumerate}[(a)]
  \item $\delta \in (0,\infty)$.
  \item $C_\delta$ is PQD / LTD / RTI / SI / TP2 / MK-TP2 / d-TP2.
  \item $C_\delta$ is PMI.
\end{enumerate}
We note in passing that a Frank copulas is NMI if and only if $\delta \in (-\infty,0)$.
\end{exa}

While studying the class of Archimedean copulas we came to the following conjecture,
which we were not yet able to prove:

\begin{conjecture}[Archimedean copulas]~~\\
For Archimedean copulas $C$ we conjecture that the following statements are equivalent:
\begin{enumerate}
  \item[(a)] $C$ is PQD.
  \item[(b)] $C$ is PMI. 
\end{enumerate}
We further conjecture that an Archimedean copulas $C$ is NMI if and only if $C(u,v) \leq \Pi(u,v)$ holds for all $(u,v) \in (0,1)^2$.
\end{conjecture}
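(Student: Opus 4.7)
The plan is to reduce the conjecture to an analytic statement about the Archimedean generator, accepting that one crucial step forms the main obstacle. For an Archimedean copula $C(u,v) = \psi(\varphi(u)+\varphi(v))$, a direct substitution yields
\begin{align*}
E_C(u,v) &= \psi(\varphi(u)+\varphi(v)) + \psi(\varphi(1-u)+\varphi(v)) \\
&\quad + \psi(\varphi(u)+\varphi(1-v)) + \psi(\varphi(1-u)+\varphi(1-v)) - 1,
\end{align*}
and PQD within the Archimedean class is well-known to be equivalent to the log-superadditivity of $\psi$, namely $\psi(s+t) \geq \psi(s)\psi(t)$ for all $s,t \in [0,\varphi(0))$; analogously, negative quadrant dependence corresponds to log-subadditivity of $\psi$.

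For the direction (b) $\Rightarrow$ (a), the first step is to upgrade PMI to the pointwise bound $E_C \geq 0$ on $(0,1)^2$. Since $E_C$ is 2-increasing on $(0,\tfrac12)^2$ by PMI and vanishes on $\{u=0\} \cup \{v=0\}$, we get $E_C \geq 0$ on $(0,\tfrac12)^2$; the symmetries $E_C(u,v) = E_C(1-u,v) = E_C(u,1-v)$ then extend this to all of $(0,1)^2$. The remaining task is to deduce $C \geq \Pi$ from $E_C \geq 0$ within the Archimedean class. This reduction, while plausible given the rigidity of Archimedean copulas, is itself a non-trivial sub-problem. A natural plan is to fix $s,t > 0$, set $u := \psi(s)$ and $v := \psi(t)$, and read $E_C(u,v) \geq 0$ as an inequality among four values of $\psi$ at arguments coupled through $\psi(s_1)+\psi(s_2)=1$; then, by specializing to the diagonal $u=v$ and passing to appropriate limits, one would hope to derive $\psi(s+t) \geq \psi(s)\psi(t)$ directly.

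For the harder direction (a) $\Rightarrow$ (b), I would assume $\varphi \in C^2$ (the general case follows by approximation of the generator) and work with the density formulation of PMI. Writing $c(u,v) = \psi''(\varphi(u)+\varphi(v))\varphi'(u)\varphi'(v)$ and substituting $s := \varphi(u)$, $t := \varphi(v)$, $\bar s := \varphi(1-u)$, $\bar t := \varphi(1-v)$, inequality \eqref{PMI.density} becomes
\begin{align*}
\frac{\psi''(s+t)}{\psi'(s)\psi'(t)} + \frac{\psi''(\bar s+\bar t)}{\psi'(\bar s)\psi'(\bar t)} \geq \frac{\psi''(s+\bar t)}{\psi'(s)\psi'(\bar t)} + \frac{\psi''(\bar s+t)}{\psi'(\bar s)\psi'(t)},
\end{align*}
to be checked for all $s \geq \bar s > 0$ and $t \geq \bar t > 0$ subject to the coupling $\psi(s)+\psi(\bar s) = 1 = \psi(t)+\psi(\bar t)$. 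Setting $h := \log\psi$, so that PQD becomes superadditivity of $h$, one checks that the terms $\psi''/(\psi'\,\psi')$ rearrange into expressions involving $h''$ and products of $h'$; the target then takes the shape of a two-dimensional second difference of $h$ across $(s,\bar s) \times (t,\bar t)$, plus lower-order cross terms. The plan is to dominate these corrections via Chebyshev-type rearrangement inequalities using the coupling constraints, and to conclude that the dominant second difference has the correct sign thanks to superadditivity of $h$.

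The main obstacle is precisely this last step: bridging the zeroth-order condition of log-superadditivity of $\psi$ to a second-order, two-dimensional inequality on $h''$ at four coupled arguments. A potentially more tractable route runs through the Bernstein--Widder representation of completely monotone generators as Laplace transforms of probability measures on $(0,\infty)$, which induces a mixture decomposition $C = \int C_\theta \, \mathrm{d}G(\theta)$ into simpler Archimedean copulas $C_\theta$; one would then show that the PMI cone is closed under such mixtures and verify the conjecture only for the extremal generators. Finally, the NMI statement, that $C \leq \Pi$ is equivalent to NMI, follows by strictly parallel reasoning with inequalities reversed throughout, since the form of $E_C$, the symmetries, and the density characterization are all insensitive to the sign flip.
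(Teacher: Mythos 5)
The statement you are addressing is presented in the paper as a \emph{conjecture}: the authors explicitly write that they were not yet able to prove it, so there is no proof of record, and your proposal does not close the gap either --- it is a programme in which both decisive steps are left open. For (b)$\Rightarrow$(a), the passage from PMI to $E_C\geq 0$ on $(0,\tfrac12)^2$ (and hence, by the reflection symmetries of $E_C$, on $(0,1)^2$) is sound. But $E_C(u,v)=C(u,v)+C(1-u,v)+C(u,1-v)+C(1-u,1-v)-1\geq 0$ only says that the \emph{sum} of the four deviations of $C$ from $\Pi$ at the reflected points is non-negative, whereas PQD requires each deviation to be non-negative individually. Your plan of specializing to the diagonal does not decouple them: $E_C(u,u)$ still mixes the diagonal values $C(u,u)$, $C(1-u,1-u)$ with the anti-diagonal value $C(1-u,u)$, and no mechanism is given by which the Archimedean structure forces the separation. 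This reduction is the entire content of that direction, and it is missing.

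For (a)$\Rightarrow$(b), the reformulation of Ineq.~\eqref{PMI.density} in terms of $\psi''/(\psi'\psi')$ at four coupled arguments is correctly derived for twice-differentiable strict generators, but the claimed Chebyshev-type rearrangement step that would deduce this second-order inequality from mere log-superadditivity of $\psi$ --- a zeroth-order condition --- is neither formulated precisely nor proved; you yourself identify it as ``the main obstacle.'' The fallback via Bernstein--Widder is also flawed as stated: it applies only to completely monotone $\psi$, a strict subclass of the bivariate Archimedean generators considered in the paper (which need only be convex), and the frailty representation $\psi(x)=\int_0^\infty e^{-\theta x}\,\mathrm{d}G(\theta)$ yields $C(u,v)=\int_0^\infty e^{-\theta\varphi(u)}e^{-\theta\varphi(v)}\,\mathrm{d}G(\theta)$, a mixture of product functions whose margins are not uniform, i.e.\ \emph{not} a mixture of Archimedean copulas $C_\theta$; so ``verify the conjecture only for the extremal generators'' does not reduce the problem to checking PMI for a family of copulas, even though the PMI cone is indeed closed under convex combinations of copulas. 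In short, the proposal gives a reasonable translation of the conjecture into analytic statements about the generator but establishes neither implication, and the concluding NMI claim inherits all of the same gaps.
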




A copula $C_A$ is called \textit{extreme value copula} (EVC) if there exists a convex function $A : \I \to \I$ fulfilling \linebreak $\max\{1-t, t\} \leq A(t) \leq 1$ for all $t\in\I$ such that
\begin{align*}
    C_A(u,v) = (uv)^{A\big(\frac{\log(u)}{\log(uv)}\big)}
\end{align*}
for all $u,v \in (0,1)^2$ (see, e.g., \cite{GS2011, de1977limit, Pickands1981}). 
It is well-known that extreme value copulas are SI 
(see, e.g., \cite{guillem2000structure, joe2014dependence}).
The following example verifies that extreme value copulas fail to be PMI, in general.

\begin{exa}[Extreme value copulas]\label{ExampleEVC}~~
\begin{enumerate}
\item
Consider the Pickands dependence function $A$ given by
$$
    A(t) := \begin{cases}
	        1-t         & t \in [0, 0.2)	\\
			0.9 - 0.5 t & t \in [0.2, 0.5) \\
            0.4 + 0.5 t & t \in [0.5, 0.8) \\
            t           & t \in [0.8, 1).
		 \end{cases}
$$
Then $C_A$ is SI, hence 
$u \mapsto K_{C_A}(u,[0,v]) - K_{C_A}(1-u,[0,v]) + K_{C_A}(u,[0,1-v]) - K_{C_A}(1-u,[0,1-v])$  
is non-increasing due to \eqref{EC.SI}.
However, the mapping 
$v \mapsto K_{C_A}(u,[0,v]) - K_{C_A}(1-u,[0,v]) + K_{C_A}(u,[0,1-v]) - K_{C_A}(1-u,[0,1-v])$ 
fails to be non-decreasing at $(u,v_1)=(0.25,0.3)$ and $(u,v_2)=(0.25,0.4)$, 
such that, due to property \eqref{PMI.MK}, $C_A$ fails to be PMI.

\item 
For $\alpha, \beta \in \I$, the mapping $M_{\alpha,\beta}: \I^2 \to \I$ 
given by
$$
    M_{\alpha,\beta} 
    = \begin{cases}
	        u^{1-\alpha}\,v & u^\alpha \geq v^\beta \\
			u\,v^{1-\beta}  & u^\alpha < v^\beta
		 \end{cases}
$$
is a copula and called Marshall-Olkin copula.
According to \cite{fuchs2023total}, $M_{\alpha,\beta}$ is MK-TP2 if and only if $\beta=1$.
However, for $\alpha=0.05$ and $\beta=1$, $M_{\alpha,\beta}$ fails to be PMI
which can be easily verified by evaluating property \eqref{PMI.MK} at the points $(u,v_1)=(0.1,0.1)$ and $(u,v_2)=(0.1,0.15)$.
\end{enumerate}
\end{exa}

\section{The property PMI and its impact on a class of measures of concordance generated by invariant copulas}\label{SectionConcordanceMeasureByCopula}

The present section first recapitulates how invariant copulas are used to construct measures of concordance according to \citet{edwards2004measures},
then resumes an ordering $\preceq$ on invariant copulas that, finally, allows for an ordering of the induced measures of concordance (main Theorem \ref{MainResultPMI.Thm}).

\begin{definition}[Measures of concordance]\label{Def.MOC}~~\\
A map $\kappa: \C \to \mathbb{R}$ is said to be a measure of concordance if it has the following properties:
\begin{enumerate}
    \item[(i)] $\kappa(M) = 1$.
    \item[(ii)] $\kappa(\pi(C)) = \kappa(C)$ for all $C\in\C$.
    \item[(iii)] $\kappa(\nu_1(C)) = -\kappa(C)$ for all $C\in\C$.
    \item[(iv)] $\kappa(C) \leq \kappa(D)$ whenever $C\leq D$,
    i.e., whenever $C$ and $D$ are ordered pointwise.
    \item[(v)] $\lim_{n \to \infty} \kappa (C_n) = \kappa(C)$ for any sequence $\{C_n\}_{n \in \N} \subseteq \C$ and any copula $C\in\C$ such that $\lim_{n \to \infty} C_n = C$ pointwise.
\end{enumerate}
\end{definition}
\noindent Definition \ref{Def.MOC} is in accordance with \citet{sca1984} and
\cite{edwards2005measures,edwards2004measures,edwards2009,fuchs2016copula,fuchs2014bivariate}.
Here, property (iv) 
ensures that the inequality $-1 = \kappa(W) \leq \kappa(C) \leq \kappa(M) = 1$ holds for all $C\in\C$. Proofs and further details on measures of concordance (as defined above) may be found in \cite{fuchs2014bivariate} ($d=2$) and in \cite{fuchs2016copula}.

In the sequel we examine how to construct measures of concordance based on invariant copulas. Therefore, consider the map $[.,.]: \C \times \C \to \R$ given by
\begin{align}\label{Def.BiconvexForm}
    [C,D] := \int_{\I^2} C(u,v) \, \mathrm{d} \mu_D (u,v)\,.
\end{align}
The map $[.,.]$ is in either argument linear with respect to convex combinations and is therefore called a \emph{biconvex form}. 
Moreover, the map $[.,.]$ is symmetric, in either argument monotonically increasing with respect to the pointwise (or concordance) order $\leq$ and satisfies $0 \leq [C,D] \leq [M,M] = 1/2$ 
(see, e.g., \cite{fuchs2016biconvex}).  
\\
Now, consider a fixed invariant copula $A$. 
Then $[M,A] > [\Pi,A]$ so that the map $\kappa_A: \C \to \R$
given by
\begin{align*} 
    \kappa_A(C) := \frac{[C,A] - [\Pi, A]}{[M,A] - [\Pi, A]}
\end{align*}
is well-defined; see \cite{fuchs2016copula}. 
The following result by \citet{edwards2004measures} (see also \cite{behboodian2005measures, fuchs2016copula, fuchs2014bivariate}) 
links the measure of concordance $\kappa_A$ to the invariant copula $A$.

\begin{proposition}[Characterization of copula-induced measures of concordance] \label{ConstructionConcordanceMeasure}~~\\
The map $\kappa_A$ is a measure of concordance if and only if $A$ is invariant.
In either case, $\kappa_A$ is convex.
In particular, 
if $A$ is invariant, then $[\Pi, A] = 1/4$ and the identity 
\begin{align}\label{Def.KappaA}
    \kappa_A(C) = \frac{[C,A] - 1/4}{[M,A] - 1/4}
\end{align}
holds for all $C \in \C$.
\end{proposition}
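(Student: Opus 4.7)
My plan is to prove the two implications separately and then deduce convexity as a by-product.

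For the sufficient direction, assume $A$ is invariant. The first step is to establish $[\Pi,A] = 1/4$, which both identifies the denominator in \eqref{Def.KappaA} and ensures $\kappa_A$ is well defined. Writing $(U,V) \sim \mu_A$, the uniform marginal gives $\EX[V] = 1/2$, and invariance under $\nu_1$ means $(1-U,V) \sim \mu_A$, hence $\EX[UV] = \EX[(1-U)V] = 1/2 - \EX[UV]$, so $[\Pi,A] = \EX[UV] = 1/4$. The axioms are then verified as follows: (i) is immediate; (iv) is monotonicity of integration against the positive measure $\mu_A$; (v) is dominated convergence, since copulas are uniformly bounded by $1$. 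For (ii) and (iii) I would extract two adjoint-type identities by change of variables in $\mu_A$,
\[
  [\pi(C),A] = [C,\pi(A)], \qquad [\nu_1(C),A] + [C,\nu_1(A)] = \tfrac{1}{2},
\]
the second following from $\nu_1(C)(u,v) = v - C(1-u,v)$ combined with $\EX[V] = 1/2$. Substituting $\pi(A) = A$ and $\nu_1(A) = A$ yields axioms (ii) and (iii).

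For the necessary direction, assume $\kappa_A$ is a measure of concordance. Axiom (ii) combined with the first adjoint identity gives $[C,A] = [C,\pi(A)]$ for every $C \in \C$, while axiom (iii) combined with the second gives that $C \mapsto [C,A] - [C,\nu_1(A)]$ is constant on $\C$. The crux is then the following separation principle: if $B,B' \in \C$ satisfy $C \mapsto [C,B] - [C,B']$ is constant, then $B = B'$. I would prove this by noting that the constancy forces $\int (B-B')\,\mathrm{d}(\mu_{C_1} - \mu_{C_2}) = 0$ for all pairs of copulas; approximating, via shuffles of $M$, signed doubly-stochastic measures concentrated on the four corners of a small rectangle $\{u_0,u_1\} \times \{v_0,v_1\}$ then yields the two-additive condition
\[
  (B-B')(u_0,v_0) + (B-B')(u_1,v_1) = (B-B')(u_0,v_1) + (B-B')(u_1,v_0),
\]
which is well known to characterize continuous functions of the form $g(u) + h(v)$. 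The copula boundary values $B = B' = 0$ on $\{u = 0\} \cup \{v = 0\}$ then collapse $g$ and $h$ to zero. Applying this principle with $B' = \pi(A)$ and with $B' = \nu_1(A)$ gives $A = \pi(A)$ and $A = \nu_1(A)$; since $\Gamma = \langle \pi,\nu_1 \rangle$, $A$ is invariant, and the identity $[\Pi,A] = 1/4$ then follows from the sufficient direction already proved.

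Convexity is immediate: since $C \mapsto [C,A]$ is linear on convex combinations of copulas, $\kappa_A$ is in fact affine in $C$, and in particular convex. The principal technical obstacle is the separation principle in the necessary direction, specifically the step using shuffles of $M$ to realise the four-corner perturbations that drive the $2$-additive argument; this is an approximation device standard in the copula-theoretic literature but the one point where care is genuinely required.
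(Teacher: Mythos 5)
Your proposal is correct, but note that the paper itself offers no proof of this proposition: it is quoted from \cite{edwards2004measures} (see also \cite{fuchs2016copula,fuchs2014bivariate}), so the only meaningful comparison is with those sources, and your route is essentially the classical one found there. In the sufficiency direction, your two adjoint identities $[\pi(C),A]=[C,\pi(A)]$ and $[\nu_1(C),A]=\tfrac12-[C,\nu_1(A)]$ are precisely \cite[Corollary 5.5]{fuchs2016biconvex}, the identity the paper itself invokes in its proof of Lemma \ref{IntegralIdentity1}, and the computation $[\Pi,A]=\mathbb{E}[UV]=1/4$ from $\nu_1$-invariance is correct; axioms (i), (iv), (v) are indeed immediate from positivity of $\mu_A$ and dominated convergence. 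The necessity direction carries the real content, and your separation principle is sound: it is the standard Edwards--Mikusi\'{n}ski--Taylor perturbation argument, where a rectangular mass exchange on the four small squares $[u_i,u_i+\delta]\times[v_j,v_j+\delta]$ (shuffles of $M$ work, though a product-measure patchwork of a fixed copula is slightly cleaner to write down) realises the signed corner measure $\delta_{(u_0,v_0)}+\delta_{(u_1,v_1)}-\delta_{(u_0,v_1)}-\delta_{(u_1,v_0)}$ in the limit $\delta\to 0$; note this step also uses the symmetry $[C,B]=[B,C]$ of the biconvex form, which the paper records. Two small simplifications are available. First, once the four-corner identity holds for $f=B-B'$ you need not pass through the $g(u)+h(v)$ characterization: the identity says exactly that the $f$-volume of every rectangle vanishes, i.e.\ $\mu_B=\mu_{B'}$, whence $B=B'$ from $B(u,v)=\mu_B([0,u]\times[0,v])$. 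Second, well-definedness of $\kappa_A$ is not a consequence of $[\Pi,A]=1/4$ alone; it requires $[M,A]>[\Pi,A]$, which the paper asserts in the text preceding the proposition and which follows in one line from $M>\Pi$ on $(0,1)^2$ together with $\mu_A\bigl((0,1)^2\bigr)=1$. With these cosmetic points aside, the argument is complete and matches the cited literature in substance.
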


The class of measures of concordance given by \eqref{Def.KappaA} comprises the popular indices Spearman's rho and Gini's gamma and can also be used to construct new indices:

\begin{exa}[Well-known and new measures of concordance]~~\label{Ex.MOC.1}
    \begin{enumerate}
        \item \textbf{Spearman's rho:} 
        The copula $\Pi$ is invariant and $\kappa_\Pi$ satisfies
        \begin{align*}
            \kappa_\Pi(C) = 12 \, [C, \Pi] - 3
        \end{align*}
        which means that $\kappa_\Pi$ is Spearman's rho $\rho$; see \cite{fuchs2014bivariate, nelsen2007introduction}.
        
        \item \textbf{Gini's gamma:} The copula $M_{\Gamma} = (M+W)/2$ is invariant and $\kappa_{M_{\Gamma}}$ satisfies
        \begin{align*}
            \kappa_{M_{\Gamma}}(C) = 8 \, [C, M_{\Gamma}] - 2
        \end{align*}
        which means that $\kappa_{M_{\Gamma}}$ is Gini's gamma $\gamma$; see \cite{fuchs2014bivariate, nelsen2007introduction}.
        
        \item The copula $V$ is invariant and $\kappa_V$ fulfills
        \begin{align*}
            \kappa_V (C) = 16 \, [C, V] - 4
        \end{align*}
        
        \item \textbf{Linear interpolation:} For $\alpha \in [0,1]$, consider the Fr{\'e}chet copula
        \begin{align*}
            A_{\alpha} 
            := \frac{\alpha}{2} \, M + (1-\alpha) \, \Pi + \frac{\alpha}{2} \, W
            = \alpha \, M_\Gamma + (1-\alpha) \, \Pi
        \end{align*}
        Then $A_{\alpha}$ is an invariant copula and the induced measure of concordance $\kappa_{A_{\alpha}}$ satisfies
        \begin{align*}
            \kappa_{A_{\alpha}} (C) 
            & = \frac{2(1-\alpha)}{2+\alpha} \kappa_\Pi(C) + \frac{3\alpha}{2+\alpha} \kappa_{M_{\Gamma}} (C)
        \end{align*}
        meaning that $\kappa_{A_{\alpha}}$ is a weighted mean of Spearman's rho and Gini's gamma, 
        and for $\alpha \in (0,1)$ the respective weights are distinct from $1-\alpha$ and $\alpha$.
    \end{enumerate}
\end{exa}
\noindent The last example can be extended to the case of arbitrary invariant copulas in the place of $\Pi$ and $M_{\Gamma}$. 
Note that Kendall's tau can not be constructed using invariant copulas.

\bigskip
In \cite{durante2019reflection}, the authors have introduced a general approach to generate invariant copulas via a transformation $\vartheta: \mathcal{C} \to \mathcal{C}$ given by 
\begin{align*}
    \vartheta(C)(u,v) 
    & := \frac{1}{4} \, C(2u, 2v) \, \mathds{1}_{[0,1/2]^2}(u,v) 
         + \frac{1}{4} \, (\nu_1(C))(2u-1, 2v) \, \mathds{1}_{(1/2,1] \times [0,1/2]}(u,v) 
    \\
    & \phantom{00} + \frac{1}{4} \, (\nu_2(C))(2u, 2v-1) \, \mathds{1}_{[0,1/2] \times (1/2,1]}(u,v) 
         + \frac{1}{4} \, (\nu(C))(2u-1, 2v-1) \, \mathds{1}_{(1/2,1]^2 }(u,v) \,.
\end{align*}
According to \cite[Theorem 3.4]{durante2019reflection},
$C$ is symmetric if and only if $\vartheta(C)$ is invariant.
Figure \ref{ExampleVartheta} illustrates the idea behind the transformation $\vartheta$: 
first, the unit square $[0,1]^2$ is partitioned into the four subsets
$[0,1/2]^2 \cup [0,1/2] \times [1/2, 1] \cup [1/2, 1] \times [0,1/2] \cup [1/2, 1]^2$;
then, the symmetric copula $C$ is inserted into $[0,1/2]^2$ and reflected at $u=1/2$ and $v=1/2$ resulting in the invariant copula $\vartheta(C)$ (see again \cite{durante2019reflection}). 
For example, $\vartheta(M) = M_\Gamma$ as depicted in Figure \ref{ExampleVartheta}, $\vartheta(\Pi) = \Pi$ and $\vartheta(W) = V$.

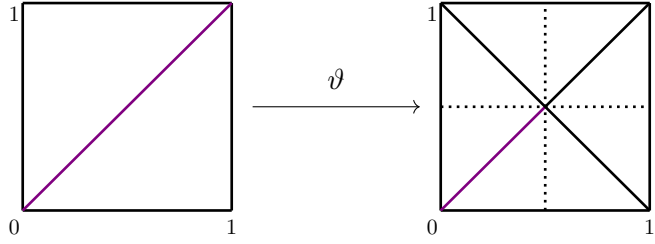
\begin{figure}[!ht]
    \centering
    \begin{tikzpicture}[scale=2.75]
	        \draw[-,line width=1] (0,0) -- (1,0);
	        \draw[-,line width=1] (0,0) -- (0,1);
	        \draw[-,line width=1] (1,0) -- (1,1);
	        \draw[-,line width=1] (0,1) -- (1,1);
	        \node[below=1pt of {(-0.04,0.02)}, scale= 0.75, outer sep=2pt] {$0$};
	        \node[below=1pt of {(1,0.02)}, scale= 0.75, outer sep=2pt] {$1$};
	        \node[below=1pt of {(-0.04,1.05)}, scale= 0.75, outer sep=2pt] {$1$};
	        \draw[-,line width=1, violet] (0,0) -- (1,1);
                \draw[->] (1.1,1/2) -- (1.9, 1/2) ;
                \node[above=1pt of {(1.5,1/2)}, scale= 1, outer sep=2pt,fill=white] {$\vartheta$};
	        \draw[-,line width=1] (2,0) -- (3,0);
	        \draw[-,line width=1] (2,0) -- (2,1);
	        \draw[-,line width=1] (3,0) -- (3,1);
	        \draw[-,line width=1] (2,1) -- (3,1);
	        \node[below=1pt of {(1.96,0.02)}, scale= 0.75, outer sep=2pt] {$0$};
	        \node[below=1pt of {(3,0.02)}, scale= 0.75, outer sep=2pt] {$1$};
	        \node[below=1pt of {(1.96,1.05)}, scale= 0.75, outer sep=2pt] {$1$};
	        \draw[-,line width=1, violet] (2,0) -- (2.5,0.5);
                \draw[-,line width=1] (2.5,0.5) -- (3,1);
                \draw[-,line width=1] (2,1) -- (3,0);
                \draw[-,line width=1, dotted] (2.5,0) -- (2.5,1);
                \draw[-,line width=1, dotted] (2,1/2) -- (3,1/2);
	    \end{tikzpicture}
    \caption{The copula $M$ and its transformed analogue $\vartheta(M) = M_\Gamma$.}
    \label{ExampleVartheta}
\end{figure}

Remarkably, the map $\vartheta$ preserves the concordance order in some sense  
(see \cite[Theorem 3.5, Theorem 3.9, Corollary 3.10]{durante2019reflection}):
\begin{proposition}[Order preserving properties of $\vartheta$]\label{LemmaVarthetaProperties}~~
    \begin{enumerate}
    \item 
    For every $C,D \in\C$ with $C \leq D$, 
    the inequality $ \vartheta(C)(u,v) \leq \vartheta(D)(u,v)$ holds for all $(u,v) \in [0,1/2]^2$.
    \item 
    For every invariant copula $A$ the inequality 
    $\vartheta(W)(u,v) = V(u,v) \leq A(u,v) \leq M_\Gamma(u,v) = \vartheta(M)(u,v)$
    holds for all $(u,v) \in [0,1/2]^2$.
    \item $\vartheta$ is a bijection between the class of symmetric copulas and the class of invariant copulas.
\end{enumerate}
\end{proposition}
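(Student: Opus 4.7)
My plan is to establish the three parts in the order (1), (3), (2), because (3) provides a representation by which (2) reduces to (1). Part (1) is essentially definitional: on $[0,1/2]^2$ the piecewise formula for the transformation reads $\vartheta(C)(u,v) = \frac{1}{4}\,C(2u,2v)$, so pointwise $C \leq D$ immediately yields $\vartheta(C)(u,v) \leq \vartheta(D)(u,v)$ on this quadrant.

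For part (3) I would start from Theorem 3.4 of \cite{durante2019reflection} (cited right above the statement), which asserts that $\vartheta(C)$ is invariant if and only if $C$ is symmetric; this already shows that $\vartheta$ sends symmetric copulas into invariant copulas. Injectivity is then immediate because $C$ is recovered from the lower-left quadrant via $C(a,b) = 4\,\vartheta(C)(a/2, b/2)$. For surjectivity, given an invariant copula $A$, define $B: \I^2 \to \I$ by $B(a,b) := 4\,A(a/2, b/2)$. The nontrivial step is verifying that $B$ is a copula: combining the invariance identities $A = \nu_1(A)$ and $A = \nu(A)$ yields the midline values $A(u, 1/2) = u/2$ and $A(1/2, v) = v/2$, from which the uniform-margin conditions $B(a,1) = a$ and $B(1,b) = b$ follow; grounding, $2$-increasingness and symmetry transfer directly from the corresponding properties of $A$. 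The identity $\vartheta(B) = A$ is then checked on each of the four dyadic subsquares by comparing the piecewise definition of $\vartheta$ with the reflection identities $A = \nu_1(A) = \nu_2(A) = \nu(A)$.

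For part (2) the equalities on $[0,1/2]^2$ are obtained by direct computation. From $W(a,b) = \max\{a+b-1,0\}$ one gets $\vartheta(W)(u,v) = \frac{1}{4}\max\{2u+2v-1,0\}$, which matches the case-wise definition of $V$ on $[0,1/2]^2$ (noting that the case $|u-v|>1/2$ cannot occur there); similarly $\vartheta(M)(u,v) = \frac{1}{2}\min\{u,v\} = M_\Gamma(u,v)$, since $W$ vanishes on $[0,1/2]^2$. For an arbitrary invariant copula $A$, invoke part (3) to write $A = \vartheta(B)$ for some symmetric copula $B$; the universal Fr\'echet--Hoeffding bounds $W \leq B \leq M$ together with part (1) then give $V = \vartheta(W) \leq \vartheta(B) = A \leq \vartheta(M) = M_\Gamma$ throughout $[0,1/2]^2$. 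The main obstacle I anticipate is the surjectivity step of part (3), specifically verifying that the candidate preimage $B(a,b) = 4A(a/2, b/2)$ is a genuine copula; this is where the full strength of the group invariance of $A$, rather than mere symmetry, is genuinely required, via the derivation of the midline values of $A$ from the reflection identities.
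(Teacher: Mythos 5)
Your argument is essentially correct, but it is worth noting that the paper itself does not prove this proposition at all: it is imported wholesale from Durante and Fuchs (2019) (their Theorems 3.5 and 3.9 and Corollary 3.10), so your write-up is a genuine self-contained proof of a result the paper only cites. Your route is efficient: part (1) is indeed immediate from $\vartheta(C)(u,v)=\tfrac{1}{4}C(2u,2v)$ on $[0,1/2]^2$; your derivation of the midline values $A(u,1/2)=u/2=A(1/2,u)$ from the reflection identities is exactly the right lever for showing that the candidate preimage $B(a,b)=4A(a/2,b/2)$ has uniform margins; and reducing part (2) to part (1) via the Fr\'echet--Hoeffding bounds $W\leq B\leq M$ and the surjectivity of $\vartheta$ is clean. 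One caution for the final verification $\vartheta(B)=A$ on the off-diagonal quadrants: the displayed formula for $\vartheta$ in this paper appears to be missing the marginal correction terms (for instance, on $(1/2,1]\times[0,1/2]$ one needs $\vartheta(C)(u,v)=\tfrac{v}{2}+\tfrac{1}{4}\nu_1(C)(2u-1,2v)$; the literal formula as printed gives $\vartheta(\Pi)(1,1/2)=\tfrac{1}{4}\neq\tfrac{1}{2}$, so it cannot define a copula). With the corrected patchwork formula, your computation closes: $\tfrac{v}{2}+\tfrac{1}{4}\nu_1(B)(2u-1,2v)=v-A(1-u,v)=\nu_1(A)(u,v)=A(u,v)$. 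Since parts (1) and (2) only use the formula on $[0,1/2]^2$, where no correction term appears, they are unaffected.
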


Motivated by the fact that $\vartheta$ is bijective and order preserving according to Proposition \ref{LemmaVarthetaProperties}, 
we now define an ordering on invariant copulas.

\begin{definition}[An ordering for invariant copulas]~~\\
For two invariant copulas $A$ and $B$ we write 
$A \preceq B$ if $\vartheta^{-1}(A) \leq \vartheta^{-1}(B)$.
\end{definition}
\noindent 
Notice that $A \preceq B$ if and only if $A|_{[0,1/2]^2} \leq B|_{[0,1/2]^2}$.
$\preceq$ is reflexive, transitive and antisymmetric, hence an order relation on invariant copulas.
As a consequence of Proposition \ref{LemmaVarthetaProperties}, every invariant copula $A$ fulfills $V \preceq A \preceq M_\Gamma$.

\bigskip
So far we have examined how invariant copulas $A$ can be used to generate measures of concordance of the form 
\begin{align*}
    \kappa_A(C) = \alpha(A) \, \big( [C,A]-1/4 \big)
\end{align*}
with $\alpha(A) = \big([M,A]-1/4\big)^{-1} > 0$.
Now, we verify that the ordering $\preceq$ relates to an ordering of the induced measures of concordance.

\begin{theorem}[Comparison result for measures of concordance] \label{MainResultPMI.Thm}~~\\
For invariant copulas $A$ and $B$ with $A \preceq B$, the inequality
\begin{align} \label{Thm.PMI.Ineq}
  \alpha(A) \, \kappa_B(C) \geq \alpha(B) \, \kappa_A(C)
\end{align}
holds for all those $C\in\C$ that are PMI. 
\end{theorem}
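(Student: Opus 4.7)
Since $\alpha(A),\alpha(B) > 0$ and $\kappa_A(C) = \alpha(A)\bigl([C,A] - \tfrac{1}{4}\bigr)$ by \eqref{Def.KappaA}, the target inequality \eqref{Thm.PMI.Ineq} is equivalent to $[C,B] \geq [C,A]$, and by symmetry of the biconvex form the latter reads
\[
\int_{\I^2}(B-A)\,\mathrm{d}\mu_C \;\geq\; 0.
\]
The plan is to fold this integral onto the principal quadrant $Q_1 := [0,\tfrac{1}{2}]^2$ by exploiting the $\Gamma$-invariance of $A$ and $B$, and then to identify the resulting measure as the signed measure associated with $E_C$; at this point PMI and $A \preceq B$ will deliver the conclusion.

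For the folding step, I would partition $\I^2$ into the four sub-squares $Q_1,\ldots,Q_4$ cut out by the midlines $u = \tfrac{1}{2}$, $v = \tfrac{1}{2}$, and observe that if $(U,V) \sim \mu_C$ then $(1-U,V) \sim \mu_{\nu_1(C)}$, $(U,1-V) \sim \mu_{\nu_2(C)}$ and $(1-U,1-V) \sim \mu_{\nu(C)}$. A change of variables then rewrites each of the integrals over $Q_2,Q_3,Q_4$ as an integral over $Q_1$ against $\mu_{\nu_1(C)}$, $\mu_{\nu_2(C)}$ and $\mu_{\nu(C)}$ respectively, with integrands $(B-A)(1-u,v)$, $(B-A)(u,1-v)$ and $(B-A)(1-u,1-v)$. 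Invoking the invariance of both $A$ and $B$ under $\Gamma$ — which yields $A(1-u,v) = v - A(u,v)$, $A(u,1-v) = u - A(u,v)$, $A(1-u,1-v) = 1-u-v+A(u,v)$, and the analogous identities for $B$ — these three integrands simplify to $-(B-A)(u,v)$, $-(B-A)(u,v)$ and $(B-A)(u,v)$, respectively. Collecting signs yields the key identity
\[
\int_{\I^2}(B-A)\,\mathrm{d}\mu_C \;=\; \int_{Q_1}(B-A)\,\mathrm{d}\mu^*,\qquad \mu^* := \mu_C - \mu_{\nu_1(C)} - \mu_{\nu_2(C)} + \mu_{\nu(C)}.
\]
By construction $\mu^*([0,u]\times[0,v]) = E_C(u,v)$, so $\mu^*$ is precisely the signed measure whose distribution function is $E_C$.

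The conclusion then combines three facts. First, the PMI hypothesis on $C$ says exactly that $\mu^*$ restricted to $(0,\tfrac{1}{2})^2$ is a non-negative measure. Second, $A \preceq B$ means $A \leq B$ on $Q_1$, whence $B - A \geq 0$ on $Q_1$. Third, $B - A$ vanishes on $\partial Q_1 \cap \I^2$: on $\{0\}\times[0,\tfrac{1}{2}]$ and $[0,\tfrac{1}{2}]\times\{0\}$ trivially, and on $\{\tfrac{1}{2}\}\times[0,\tfrac{1}{2}]$, $[0,\tfrac{1}{2}]\times\{\tfrac{1}{2}\}$ because $\nu_1$-, resp.\ $\nu_2$-invariance forces $A(\tfrac{1}{2},v) = v/2 = B(\tfrac{1}{2},v)$ and $A(u,\tfrac{1}{2}) = u/2 = B(u,\tfrac{1}{2})$. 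Consequently $\int_{Q_1}(B-A)\,\mathrm{d}\mu^* = \int_{(0,1/2)^2}(B-A)\,\mathrm{d}\mu^* \geq 0$, which closes the argument. I expect the main obstacle to lie in the bookkeeping of the folding step: one must verify carefully that the three sign flips produced by the reflections align exactly with the signs in $E_C = C - \nu_1(C) - \nu_2(C) + \nu(C)$, and that the PMI condition — formulated on the open square $(0,\tfrac{1}{2})^2$ rather than on all of $Q_1$ — is in fact sufficient thanks to the vanishing of $B-A$ on $\partial Q_1 \cap \I^2$.
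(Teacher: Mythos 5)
Your argument is correct and arrives at the same pivotal inequality as the paper, namely $\int_{(0,1/2)^2}(B-A)\,\mathrm{d}\mu_{E_C}\geq 0$, but by a genuinely different route. The paper first proves (Lemmas \ref{IntegralIdentity1} and \ref{IntegralIdentity2}) that $4[C,X]-1=4\int_{(0,1/2)^2}E_C\,\mathrm{d}\mu_X$ for invariant $X$ --- i.e.\ it folds the \emph{integrand} $E_C$ using the fourfold symmetry of $E_C$ and the invariance of $X$ --- and then invokes an integration-by-parts formula for Lebesgue integrals (citing \cite{ansari2023}) to convert $\int_{(0,1/2)^2}E_C\,\mathrm{d}\mu_A$ into $\int_{(0,1/2)^2}A\,\mathrm{d}\mu_{E_C}$, using $A(1/2,t)=t/2$ and $E_C(0,v)=0=E_C(u,0)$ to dispose of the boundary terms. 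You instead apply the symmetry of the biconvex form once at the outset, so that $B-A$ sits in the integrand from the start, and then fold the \emph{measure} $\mu_C$ via the reflections $(U,V)\mapsto(1-U,V)$, $(U,1-V)$, $(1-U,1-V)$; the sign pattern produced by $\nu_1$-, $\nu_2$- and $\nu$-invariance of $A$ and $B$ reproduces exactly the signs in $E_C=C-\nu_1(C)-\nu_2(C)+\nu(C)$, so the signed measure with distribution function $E_C$ appears directly and no integration by parts is needed. Your observation that $B-A$ vanishes on $\partial([0,1/2]^2)$ (groundedness on the axes, $A(1/2,t)=t/2=B(1/2,t)$ on the midlines) does double duty: it makes the overlap of the four closed sub-squares harmless in the folding step and shows that the PMI hypothesis on the \emph{open} square suffices --- precisely the role the boundary identities play in the paper's integration by parts. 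What your route buys is the elimination of the external integration-by-parts lemma; what the paper's route buys is the intermediate representation $\kappa_A(C)=\alpha(A)\int_{\I^2}E_C\,\mathrm{d}\mu_A$ (Corollary \ref{RepresentationRhoGamma}), which it reuses for the concrete formulas for $\rho$, $\gamma$ and $\kappa_V$.
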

\noindent 
Since $V \preceq \Pi \preceq M_\Gamma$,
from Theorem \ref{MainResultPMI.Thm} we immediately obtain the following comparison result for the measures of concordance Spearman's rho, Gini's gamma and $\kappa_V$.

\begin{corollary} \label{MainResultPMI.Cor}
The inequality
$$6\, \gamma(C) \geq 4\, \rho(C) \geq 3\, \kappa_V(C)$$
holds for all those $C\in\C$ that are PMI. 
\end{corollary}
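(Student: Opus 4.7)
The plan is to derive the corollary by two applications of Theorem \ref{MainResultPMI.Thm} along the chain of invariant copulas $V \preceq \Pi \preceq M_\Gamma$, once with the pair $(A,B)=(\Pi,M_\Gamma)$ and once with the pair $(A,B)=(V,\Pi)$, after first pinning down the normalizing constants $\alpha(A)$ for the three relevant invariant copulas.

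First, I would read off the constants $\alpha(A)=([M,A]-1/4)^{-1}$ from Example \ref{Ex.MOC.1}: the representations $\rho(C)=12\,[C,\Pi]-3$, $\gamma(C)=8\,[C,M_\Gamma]-2$ and $\kappa_V(C)=16\,[C,V]-4$ rewrite, via the identity $\kappa_A(C)=\alpha(A)\bigl([C,A]-1/4\bigr)$ of Proposition \ref{ConstructionConcordanceMeasure}, as $\alpha(\Pi)=12$, $\alpha(M_\Gamma)=8$ and $\alpha(V)=16$. Next, I would invoke Proposition \ref{LemmaVarthetaProperties}(ii), which states that every invariant copula $A$ satisfies $V|_{[0,1/2]^2}\leq A|_{[0,1/2]^2}\leq M_\Gamma|_{[0,1/2]^2}$; applied to the invariant copula $\Pi$, this yields $V\preceq\Pi\preceq M_\Gamma$ by definition of the order $\preceq$.

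With the ordering and the constants in hand, I would apply Theorem \ref{MainResultPMI.Thm} to $(A,B)=(\Pi,M_\Gamma)$ under the hypothesis that $C$ is PMI, obtaining
\begin{equation*}
  12\,\gamma(C) \;=\; \alpha(\Pi)\,\kappa_{M_\Gamma}(C) \;\geq\; \alpha(M_\Gamma)\,\kappa_\Pi(C) \;=\; 8\,\rho(C),
\end{equation*}
which after dividing by $2$ is exactly $6\,\gamma(C)\geq 4\,\rho(C)$. A second application to $(A,B)=(V,\Pi)$ gives
\begin{equation*}
  16\,\rho(C) \;=\; \alpha(V)\,\kappa_\Pi(C) \;\geq\; \alpha(\Pi)\,\kappa_V(C) \;=\; 12\,\kappa_V(C),
\end{equation*}
which after dividing by $4$ reads $4\,\rho(C)\geq 3\,\kappa_V(C)$. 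Concatenating these two inequalities yields the desired chain $6\,\gamma(C)\geq 4\,\rho(C)\geq 3\,\kappa_V(C)$.

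There is no real obstacle here: the corollary is essentially a bookkeeping consequence of Theorem \ref{MainResultPMI.Thm}. The only points requiring minor care are the correct identification of the scaling factors $\alpha(\cdot)$ (so that the inequality $\alpha(A)\kappa_B\geq\alpha(B)\kappa_A$ is instantiated with the right constants and not inverted) and the sanity check that the relation $\preceq$, which is defined via restriction to $[0,1/2]^2$, indeed places $\Pi$ between $V$ and $M_\Gamma$, which is free from Proposition \ref{LemmaVarthetaProperties}(ii).
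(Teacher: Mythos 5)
Your proposal is correct and coincides with the paper's own (implicit) argument: the paper derives the corollary exactly by noting $V \preceq \Pi \preceq M_\Gamma$ and applying Theorem \ref{MainResultPMI.Thm} with the constants $\alpha(\Pi)=12$, $\alpha(M_\Gamma)=8$, $\alpha(V)=16$ read off from Example \ref{Ex.MOC.1}. Your two instantiations $(A,B)=(\Pi,M_\Gamma)$ and $(A,B)=(V,\Pi)$ are applied with the correct orientation and yield precisely the stated chain.
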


\begin{remark} \label{MainResultPMI.Rem}
It is worth mentioning that Theorem \ref{MainResultPMI.Thm} has an NMI analogue: 
For invariant copulas $A$ and $B$ with $A \preceq B$, the inequality
$\alpha(A) \, \kappa_B(C) \leq \alpha(B) \, \kappa_A(C)$
holds for all those $C\in\C$ that are NMI. 
Consequently, the inequality
$6\, \gamma(C) \leq 4\, \rho(C) \leq 3\, \kappa_V(C)$
holds for all those $C\in\C$ that are NMI. 
\end{remark}

\section{Estimation}\label{SectionEstimation}

Estimators are proposed for the measures of concordance $\kappa_A$ studied in previous Section \ref{SectionConcordanceMeasureByCopula}.
Two different approaches for estimating these quantities are employed, 
one is the construction via \emph{empirical copulas}, which are, however, not copulas,
and the other is the intrinsic construction via \emph{empirical checkerboard copulas} (also known as \emph{empirical bilinear copulas}), see, e.g., \cite{genest2017asymptotic}.
\\
To this end, we consider a bivariate random vector $\textbf{X} = (X_1, X_2)$ with distribution function $F$, continuous univariate marginal distribution functions $F_1$ of $X_1$, $F_2$ of $X_2$ and connecting copula $C$.
Further, let $\textbf{X}_1, \dots, \textbf{X}_n$ be a random sample of i.i.d. copies from $\textbf{X} = (X_1, X_2)$,
and denote by $R_{i1}$ the rank of $X_{i1}$ among $X_{11}, \dots, X_{n1}$
and by $R_{i2}$ the rank of $X_{i2}$ among $X_{12}, \dots, X_{n2}$.
Since the univariate marginals are continuous ties only occur with probability $0$.

\subsection{Copula estimation}
Denote by $F_n: \R^2 \to \{0, \frac{1}{n}, \dots \frac{n-1}{n}, 1\}$ and $F_{n,1}, F_{n,2}: \R \to \{0, \frac{1}{n}, \dots \frac{n-1}{n}, 1\}$ the empirical distribution functions corresponding to $F$, $F_1$, and $F_2$, respectively, i.e., for $j \in \{1,2\}$,
\begin{align*}
    F_{n,j} (x_j) 
    = \frac{1}{n} \sum_{i=1}^n \mathds{1}_{(-\infty, x_j]} (X_{ij}) 
    \qquad \qquad &
    F_{n} (x_1,x_2) 
    = \frac{1}{n} \sum_{i=1}^n \prod_{j=1}^2 \mathds{1}_{(-\infty, x_j]} (X_{ij})
\end{align*}
for all $(x_1,x_2) \in \R^2$, so that $F_{n,j}(X_{ij}) = R_{ij}/n$ for all $i \in \{1,\dots,n\}$.
The \emph{empirical copula} then is defined, for all $(u_1, u_2) \in \I^2$, 
by
\begin{align*}
  C_n(u_1, u_2) 
  & = \frac{1}{n} \sum_{i=1}^n \prod_{j=1}^2 \mathds{1}_{[0, u_j]} \left( \frac{R_{ij}}{n+1} \right)
\end{align*}
and is (since we assume continuity of the univariate marginal distribution functions) asymptotically equivalent (see Remark \ref{Emp.Cop.Remark}) to the \emph{empirical checkerboard copula} defined, for all $(u_1, u_2) \in \I^2$, 
by (according to \cite{genest2017asymptotic})
\begin{align}\label{EmpBilinearCopula}
    \hat{C}_n(u_1, u_2) 
    := \frac{1}{n} \sum_{i=1}^n \prod_{j=1}^2 
       \big[ \beta_{F_{nj}}(u_j) \, \mathds{1}_{(-\infty,F_{n,j}^{-1}(u_j)]} (X_{ij}) 
             + (1-\beta_{F_{nj}}(u_j)) \mathds{1}_{(-\infty,F_{n,j}^{-1}(u_j))} (X_{ij}) \big]
\end{align}
where 
$G^{-1}$ denotes the pseudo-inverse of the distribution function $G$,
$G(.-)$ its left-hand limit and 
$$  
  \beta_{G}(u) := 
  \begin{cases}
     \frac{u - G(G^{-1}(u)-)}{G(G^{-1}(u)) - G(G^{-1}(u)-)} & \textrm{if } G(G^{-1}(u)) > G(G^{-1}(u)-)
     \\
     1 & \textrm{otherwise.}
  \end{cases} 
$$
Recall that empirical checkerboard copulas are absolutely continuous copulas with density $\hat{c}_n$ that is piecewise constant on the interior of each rectangle 
$\big[\frac{k-1}{n}, \frac{k}{n}\big] \times \big[\frac{l-1}{n}, \frac{l}{n}\big]$,
$k,l \in \{1,\dots,n\}$.

\begin{remark}\label{Emp.Cop.Remark}
It is worth mentioning that the term \enquote{empirical copula} is not uniquely defined and there exist different versions, all of which behave asymptotically the same.
Besides $C_n$, the literature also recognizes the empirical copulas
\begin{align}\label{Emp.Cop.Alternative}
    C_n^\ast (u_1, u_2) 
    & := F_n \big( F_{n,1}^{-1}(u_1), F_{n,2}^{-1}(u_2) \big)
    \\
    C_n^{\ast\ast} (u_1, u_2) 
    & := \frac{1}{n} \sum_{i=1}^n \prod_{j=1}^2 \mathds{1}_{[0, u_j]} (F_{n,j}(X_{ij}))
       = \frac{1}{n} \sum_{i=1}^n \prod_{j=1}^2 \mathds{1}_{[0, u_j]} \left(\frac{R_{ij}}{n}\right)
    \label{Emp.Cop.Classical}
\end{align}
(see, e.g, \cite{genest2017asymptotic, fermanian2004weak, janssen2012large, segers2012asymptotics})
where the latter is usually referred to as the \emph{classical empirical copula} and differs from $C_n$ only by a slightly varying normalization of the ranks.
All three versions are asymptotically equivalent, 
i.e., 
$$ 
  d_\infty(C_n,C_n^\ast) \leq \frac{2}{n} 
  \qquad \textrm{ and } \qquad
  d_\infty(C_n,C_n^{\ast\ast}) \leq \frac{2}{n} 
$$ 
(compare, e.g., \cite{fermanian2004weak})
and are asymptotically equivalent to the empirical checkerboard estimator $\hat{C}_n$
(see, e.g., \cite[Remark 2]{genest2017asymptotic}).
\end{remark}


In what follows, we briefly summarize the conditions under which the empirical processes
(see, e.g., \cite{genest2017asymptotic, segers2012asymptotics})
\begin{equation*}
  \mathbb{C}_n := \sqrt{n} \, (C_n - C)
  \qquad \textrm{and} \qquad
  \hat{\mathbb{C}}_n := \sqrt{n} \, (\hat{C_n} - C)
\end{equation*}
converge weakly to a centered Gaussian process.
The next condition which originates from \cite[Condition 2.1]{segers2012asymptotics} 
and involves the copula's first order partial derivatives $\dot{C}_1$ and $\dot{C}_2$,
is needed to ensure the convergence of the empirical copula (EC for short) process $\mathbb{C}_n$ and the empirical checkerboard copula (ECC for short) process $\hat{\mathbb{C}}_n$.

\begin{condition}\label{ConditionClassical} 
  For $j \in \{1,2\}$, the partial derivative $\dot{C}_j$ exists and is continuous on the set $\{ (u_1, u_2) \in \I^2: u_j \in (0,1)\}$.
\end{condition}
\noindent We note in passing that, in case the partial derivative $\dot{C}_j$, $j \in \{1,2\}$, exists, it coincides with the copula's Markov kernel almost surely (see, e.g., \cite{durante2016principles}).

Now, with Condition \ref{ConditionClassical} at hand, we are able to restate the key result concerning the convergence of the above introduced copula processes
to the limit
\begin{align}\label{GaussianLimit}
    \mathbb{C} (u_1, u_2) 
    & :=
    \mathbb{B}_{C}(u_1,u_2) - \mathbb{B}_{C}(u_1,1) \, \dot{C}_1(u_1,u_2) - \mathbb{B}_{C}(1,u_2) \, \dot{C}_2 (u_1,u_2)
\end{align}
where $\mathbb{B}_{C}$ denotes a centered Gaussian process on $\I^2$ such that
$\text{cov}(\mathbb{B}_{C}(\textbf{u}), \mathbb{B}_{C}(\textbf{v})) 
= C (\textbf{u} \wedge \textbf{v}) - C(\textbf{u}) \, C(\textbf{v})$
for all $\textbf{u} = (u_1, u_2), \textbf{v} = (v_1, v_2) \in [0,1]^2$;
here $\wedge$ denotes the component-wise minimum.
The next result is due to \cite[Proposition 3.1]{segers2012asymptotics} 
(see also \cite[Theorem 1 \& Corollary 1]{genest2017asymptotic})
and the asymptotic equivalence of the various copula estimators mentioned in Remark \ref{Emp.Cop.Remark}; 
here $l^\infty(\I^2)$ denotes the space of all bounded functions from $\I^2$ to $\mathbb{R}$ equipped with the topology of uniform convergence. 

\begin{proposition}[Convergence of copula processes]~~\\
Suppose that copula $C$ satisfies Condition \ref{ConditionClassical}.  
Then $\mathbb{C}_n$ and $\hat{\mathbb{C}}_n$ converge weakly to $\mathbb{C}$ in $l^\infty(\I^2)$. 
\end{proposition}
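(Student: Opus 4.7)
The plan is to reduce the claimed convergence to two well-established results in the literature and then transfer the conclusion between the various empirical copula variants by means of uniform asymptotic equivalence. The underlying probability space will be the product space supporting the i.i.d. sample, and weak convergence is understood in $l^\infty(\I^2)$ with respect to the uniform topology.

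First I would treat the classical empirical copula process associated with $C_n^{\ast\ast}$ from \eqref{Emp.Cop.Classical}. Under Condition \ref{ConditionClassical}, Proposition~3.1 of \cite{segers2012asymptotics} gives weak convergence of $\sqrt{n}\,(C_n^{\ast\ast} - C)$ to the centered Gaussian process $\mathbb{C}$ defined in \eqref{GaussianLimit}; note that Condition \ref{ConditionClassical} is exactly Segers' weakest sufficient condition. Next, I would invoke the uniform bound $d_\infty(C_n, C_n^{\ast\ast}) \leq 2/n$ recorded in Remark \ref{Emp.Cop.Remark}, which yields $\sqrt{n}\, d_\infty(C_n, C_n^{\ast\ast}) \leq 2/\sqrt{n} \to 0$. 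Together with Slutsky's lemma in $l^\infty(\I^2)$ (for example in the form of van der Vaart--Wellner's addition/continuous-mapping theorem), this transfers weak convergence from $\sqrt{n}(C_n^{\ast\ast}-C)$ to $\mathbb{C}_n = \sqrt{n}(C_n - C)$, giving $\mathbb{C}_n \curly \mathbb{C}$.

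For the empirical checkerboard process $\hat{\mathbb{C}}_n$, I would appeal directly to Corollary~1 of \cite{genest2017asymptotic}, which shows that, under the same smoothness hypothesis on the partial derivatives (i.e. Condition \ref{ConditionClassical}), $\sqrt{n}(\hat{C}_n - C)$ converges weakly to the same limit $\mathbb{C}$. Alternatively, as noted in Remark~2 of \cite{genest2017asymptotic} and in Remark \ref{Emp.Cop.Remark} above, $\hat{C}_n$ is asymptotically equivalent to $C_n$ in the sense that $\sqrt{n}\, d_\infty(\hat{C}_n, C_n) \to 0$, so the conclusion for $\hat{\mathbb{C}}_n$ also follows from the first step by Slutsky.

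Since the statement is essentially a compilation of known weak convergence results, no real obstacle arises; the only subtlety worth flagging is that Condition \ref{ConditionClassical} concerns continuity of $\dot{C}_j$ only on the strip $\{u_j \in (0,1)\}$ rather than the entire unit square, so one has to verify that the formulation used in \cite{segers2012asymptotics,genest2017asymptotic} coincides with ours. This is routine: continuity of $\dot{C}_j$ on $\{u_j \in (0,1)\}$ is precisely Segers' hypothesis, and the process \eqref{GaussianLimit} is well-defined there because $\mathbb{B}_C$ vanishes on the boundary $\{u_j = 0\} \cup \{u_j = 1\}$ almost surely, so the values of $\dot{C}_j$ on that null set do not affect $\mathbb{C}$.
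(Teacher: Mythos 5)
Your proposal is correct and follows essentially the same route as the paper, which itself establishes this proposition purely by citation: weak convergence of the classical empirical copula process under Condition \ref{ConditionClassical} from Proposition~3.1 of \cite{segers2012asymptotics}, the checkerboard case from Theorem~1 and Corollary~1 of \cite{genest2017asymptotic}, and transfer between the variants via the $O(1/n)$ uniform asymptotic equivalence of Remark \ref{Emp.Cop.Remark} combined with Slutsky's lemma in $l^\infty(\I^2)$. Your closing observation about the boundary behaviour of $\mathbb{B}_C$ is a sensible sanity check but not needed beyond what the cited results already cover.
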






Due to Remark \ref{Emp.Cop.Remark},
the following strong consistency rates for classical empirical copulas and empirical checkerboard copulas can be derived from \cite{janssen2012large}.

\begin{proposition}[Consistency rates]\label{ConvergenceEmpCopula}~~\\
With probability $1$ we have
\begin{equation*}
    d_\infty(C_n,C) = \mathcal{O} \left( \sqrt{\frac{\log \log n}{n}} \right)
    \qquad \textrm{and} \qquad 
    d_\infty(\ec,C) = \mathcal{O} \left( \sqrt{\frac{\log \log n}{n}} \right)
\end{equation*}
\end{proposition}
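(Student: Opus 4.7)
The plan is to reduce the statement to the law-of-the-iterated-logarithm rate for the \emph{classical} empirical copula $C_n^{\ast\ast}$ defined in \eqref{Emp.Cop.Classical}, and then transfer it to $C_n$ and $\hat{C}_n$ by the uniform $\mathcal{O}(1/n)$ comparisons recorded in Remark \ref{Emp.Cop.Remark}. Nothing about the underlying dependence structure (PMI, symmetry, etc.) is needed here; only the continuity of the univariate margins, which is standing throughout the section and guarantees that ranks are almost surely unique.

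First, I would invoke the strong consistency rate of \cite{janssen2012large} for $C_n^{\ast\ast}$: under continuous margins,
\begin{equation*}
  d_\infty\bigl(C_n^{\ast\ast},C\bigr)
  \;=\; \mathcal{O}\!\left(\sqrt{\tfrac{\log\log n}{n}}\right) \quad \text{almost surely}.
\end{equation*}
This is the substantive input and is a bivariate analogue of Kiefer's LIL for the empirical process, obtained there by combining the classical LIL for the bivariate empirical distribution function with a control on the marginal rank transform through Chung's LIL.

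Second, I would use the triangle inequality with the sharp uniform bounds from Remark \ref{Emp.Cop.Remark}. Since $d_\infty(C_n,C_n^{\ast\ast}) \leq 2/n$ deterministically (the two estimators differ only in whether ranks are normalised by $n$ or $n+1$), we obtain almost surely
\begin{equation*}
  d_\infty(C_n,C) \;\leq\; d_\infty(C_n,C_n^{\ast\ast}) + d_\infty(C_n^{\ast\ast},C)
  \;\leq\; \tfrac{2}{n} + \mathcal{O}\!\left(\sqrt{\tfrac{\log\log n}{n}}\right)
  \;=\; \mathcal{O}\!\left(\sqrt{\tfrac{\log\log n}{n}}\right).
\end{equation*}
For the empirical checkerboard copula $\hat{C}_n$, \cite[Remark 2]{genest2017asymptotic} yields $d_\infty(\hat{C}_n,C_n^{\ast\ast}) = \mathcal{O}(1/n)$ (the interpolation weights $\beta_{F_{nj}}$ shift each cumulative rank by at most $1/n$), and the same triangle-inequality argument delivers $d_\infty(\hat{C}_n,C) = \mathcal{O}(\sqrt{\log\log n / n})$ almost surely.

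The proof is essentially a bookkeeping exercise and I do not anticipate a real obstacle: the only point that requires a small amount of care is checking that the $\mathcal{O}(1/n)$ deterministic comparison between $\hat{C}_n$ and $C_n^{\ast\ast}$ really is uniform in $(u_1,u_2)\in\I^2$. This follows because both estimators place the same probability mass $1/n$ on the support points determined by the ranks and differ only in how this mass is redistributed within each cell of the $1/n$-grid, a modification that can change the joint distribution function by at most $\mathcal{O}(1/n)$ at any point. Once this is noted, the conclusion is immediate from the Janssen--Wellner LIL.
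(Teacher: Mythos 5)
Your proposal is correct and takes essentially the same route the paper intends: the law-of-the-iterated-logarithm rate is imported from \cite{janssen2012large} for one version of the empirical copula and then transferred to $C_n$ and $\hat{C}_n$ via the uniform $\mathcal{O}(1/n)$ comparisons of Remark \ref{Emp.Cop.Remark} and the triangle inequality. (One trivial nit: the cited authors are Janssen, Swanepoel and Veraverbeke, not ``Janssen--Wellner''.)
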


\subsection{Estimation of measures of concordance}

We now focus on estimating measures of concordance $\kappa_A$ induced by invariant copulas $A$,
using the empirical copula $C_n$ and the empirical checkerboard copula $\hat{C}_n$ as plug-ins to \eqref{Def.KappaA}.
To this end, we extend the biconvex form defined in \eqref{Def.BiconvexForm} to a map
$[.,.]: l^\infty(\I^2) \times \mathcal{C} \to \R$ and observe the following results:

\begin{lemma}\label{EstimaorMoCClassicalAndCheckerboard}
For an invariant copula $A$ and a random sample $\textbf{X}_1, \dots, \textbf{X}_n$ of i.i.d. copies from $\textbf{X} = (X_1, X_2)$, the empirical copula and the empirical checkerboard copula fulfill
\begin{align*}
  [C_n,A]
  & = \frac{1}{n} \sum_{i=1}^n A \left( \tfrac{R_{i1}}{n+1}, \tfrac{R_{i2}}{n+1} \right)
  \\
  [\hat{C}_n,A] 
  & = \frac{1}{n} \sum_{i=1}^n n^2
      \int_{\big(\frac{R_{i1}-1}{n},\frac{R_{i1}}{n}\big) \times \big(\frac{R_{i2}-1}{n},\frac{R_{i2}}{n}\big)} A(u_1,u_2) \, \mathrm{d}  \lambda(u_1,u_2)
\end{align*}
and 
\begin{align*}
  [M_n,A]
  & = \frac{1}{n} \sum_{i=1}^n A \left(\frac{i}{n+1}, \frac{i}{n+1} \right) 
  \\
  [\hat{M}_n,A] 
  & =  \frac{1}{n} \sum_{i=1}^n n^2 \int_{\big(\frac{i-1}{n}, \frac{i}{n}\big)^2} A(u_1,u_2) \, \mathrm{d}  \lambda(u_1,u_2)
\end{align*}
In particular, 
$[M_n,A] > 1/4$ for every $n \geq 4$ and $[\hat{M}_n,A] > 1/4$ for every $n \geq 2$.
\end{lemma}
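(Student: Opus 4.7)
The plan is to derive the four explicit formulas by direct computation from the definition $[C,D]=\int C\,d\mu_D$ of the biconvex form, and then establish the strict inequalities by reducing to the extremal invariant copula $V$.

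For $[C_n,A]$ I would insert the indicator representation of $C_n$ into the integral, interchange sum and integral, and evaluate each summand as a $\mu_A$-mass of an upper-right rectangle; by inclusion--exclusion this equals $1-R_{i1}/(n+1)-R_{i2}/(n+1)+A(R_{i1}/(n+1),R_{i2}/(n+1))$, and summing with $\sum_i R_{ij}=n(n+1)/2$ makes the linear terms cancel. For $[\hat{C}_n,A]$ I would exploit that $\hat{C}_n$ is a bona fide copula (unlike $C_n$), so the biconvex form is symmetric: $[\hat{C}_n,A]=\int A\,d\mu_{\hat{C}_n}$, and since $\hat{c}_n\equiv n$ on each of the $n$ rectangles $((R_{i1}-1)/n,R_{i1}/n)\times((R_{i2}-1)/n,R_{i2}/n)$ the asserted formula follows after rewriting $n=n^2/n$. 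The formulas for $[M_n,A]$ and $[\hat{M}_n,A]$ arise by specializing to the comonotonic rank pattern $R_{i1}=R_{i2}=i$.

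For the strict inequalities, my strategy is to compare with $A=V$. Proposition~\ref{LemmaVarthetaProperties}(b) gives $V\le A$ on $[0,1/2]^2$; combined with the invariance identity $A(1-u,1-v)=A(u,v)-u-v+1$ (from $\nu(A)=A$) this extends to $V(u,u)\le A(u,u)$ for all $u\in\I$ and to $\int_{R_{ii}} V\le\int_{R_{ii}} A$ on every diagonal cell $R_{ii}=((i-1)/n,i/n)^2$ that lies entirely in a corner quadrant. For odd $n$ the central cell $R_{mm}$ with $m=(n+1)/2$ straddles $(1/2,1/2)$ and the pointwise bound fails, but a short argument using all four reflections $\pi,\nu_1,\nu_2,\nu$ shows $\int_{R_{mm}} A=1/(4n^2)$ independently of the invariant $A$, so this cell cancels in the comparison. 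Hence $[M_n,A]\ge[M_n,V]$ and $[\hat{M}_n,A]\ge[\hat{M}_n,V]$, and it suffices to verify $[M_n,V]>1/4$ for $n\ge 4$ and $[\hat{M}_n,V]>1/4$ for $n\ge 2$. Both reduce, via the pairing $i\leftrightarrow n+1-i$ and the identity $V(1-u,1-u)=1-2u+V(u,u)$, to explicit sums involving the piecewise-linear function $V(u,u)=\max(u-1/4,0)$ on $[0,1/2]$ that can be evaluated by hand.

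The main obstacle is the sharpness of the thresholds $n\ge 4$ and $n\ge 2$: because $V$ realises the extremal case, the bounds $[M_n,V]\ge 1/4$ and $[\hat{M}_n,V]\ge 1/4$ are tight for small $n$ (equality at $n\in\{1,2,3\}$ and $n=1$, respectively), so no crude lower bound suffices and every cancellation in the explicit sum has to be tracked precisely; the reflection-based cancellation of the middle cell for odd $n$, while the cleanest ingredient, requires invoking the full reflection group $\Gamma$.
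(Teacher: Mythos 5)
Your derivations of the four identities and of the first inequality follow the paper's proof essentially verbatim: the same indicator/inclusion--exclusion computation with $\sum_{i=1}^n R_{ij}=n(n+1)/2$ for $[C_n,A]$, the same appeal to the symmetry of the biconvex form and the piecewise constant density $\hat c_n$ for $[\hat C_n,A]$, specialization to comonotonic ranks for $[M_n,A]$ and $[\hat M_n,A]$, and the same reduction $[M_n,A]\geq[M_n,V]$ followed by an explicit evaluation of $\tfrac1n\sum_i V\bigl(\tfrac{i}{n+1},\tfrac{i}{n+1}\bigr)$ (the paper is terser than you about extending $V\le A$ from $[0,1/2]^2$ to the whole diagonal via $\nu$-invariance, but that is exactly the step you spell out). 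Where you genuinely diverge is the inequality $[\hat M_n,A]>1/4$. You again reduce to the extremal copula $V$ by a cell-by-cell comparison of $\int_{R_{ii}}A$ and $\int_{R_{ii}}V$, which forces a separate treatment of the central cell for odd $n$; your observation that $\int_{R_{mm}}A\,\mathrm{d}\lambda=1/(4n^2)$ for every invariant $A$ (obtained by summing $A(1-u,v)=v-A(u,v)$, $A(u,1-v)=u-A(u,v)$ and $A(1-u,1-v)=A(u,v)-u-v+1$ over the reflection-symmetric cell) is correct and rather elegant, and the resulting explicit sums do close the argument with the sharp thresholds. The paper avoids all of this: it writes $[\hat M_n,A]-1/4=\int_{\I^2}(\hat M_n-\Pi)\,\mathrm{d}\mu_A$ (using $[\Pi,A]=1/4$ from Proposition \ref{ConstructionConcordanceMeasure}) and observes that the inner integral $\int_{[0,u_1]\times[0,u_2]}(\hat c_n-1)\,\mathrm{d}\lambda=\hat M_n(u_1,u_2)-u_1u_2$ is strictly positive on $(0,1)^2$, so integrating against $\mu_A$ gives the claim with no case distinction and no computation involving $V$. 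Your route buys an explanation of \emph{why} the thresholds $n\ge4$ and $n\ge2$ are sharp ($V$ realises equality for small $n$), at the cost of considerably more bookkeeping; a tiny quibble is that the reflections needed for the middle-cell identity are $\iota,\nu_1,\nu_2,\nu$ rather than $\pi$.
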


\noindent
Lemma \ref{EstimaorMoCClassicalAndCheckerboard} provides the key ingredients for defining suitable estimators for $\kappa_A(C)$ with $A$ being invariant based on a random sample $\textbf{X}_1, \dots, \textbf{X}_n$ of i.i.d. copies from $\textbf{X} = (X_1, X_2)$ with connecting copula $C$.
For $n \geq 4$, define 
\begin{align} \label{Def.KM.CEC}
  \kappa_{A,n} 
  & := \frac{[C_n,A] - \frac{1}{4}}{[M_n,A] - \frac{1}{4}}
     = \alpha_n(A) \; 
     \left(  \frac{1}{n} \sum_{i=1}^n A \left( \tfrac{R_{i1}}{n+1}, \tfrac{R_{i2}}{n+1} \right) - \frac{1}{4} \right) 
\end{align}
with
$ \alpha_n(A)^{-1} 
  := [M_n,A] - \frac{1}{4} 
   = \frac{1}{n} \sum_{i=1}^n A \left(\frac{i}{n+1}, \frac{i}{n+1} \right) - \frac{1}{4} $
and, for $n \geq 2$, define
\begin{align} \label{Def.KM.ECC}
  \hat{\kappa}_{A,n} 
  & := \frac{[\hat{C}_n,A] - \frac{1}{4}}{[\hat{M}_n,A] - \frac{1}{4}}
     = \hat{\alpha}_n(A) \; 
     \left( \frac{1}{n} \, \sum_{i=1}^n n^2
      \int_{\big(\frac{R_{i1}-1}{n},\frac{R_{i1}}{n}\big) \times \big(\frac{R_{i2}-1}{n},\frac{R_{i2}}{n}\big)} A(u_1,u_2) \, \mathrm{d}  \lambda(u_1,u_2) - \frac{1}{4} \right) 
\end{align}
with 
$\hat{\alpha}_n(A)^{-1} 
     := [\hat{M}_n,A] - \frac{1}{4}
      = \frac{1}{n} \, \sum_{i=1}^n n^2 \int_{\big(\frac{i-1}{n}, \frac{i}{n}\big)^2} A(u_1,u_2) \, \mathrm{d}  \lambda(u_1,u_2) - \frac{1}{4}$.

\bigskip
In Theorem \ref{Asymp.Norm.kappa} below we show asymptotic normality of $\kappa_{A,n}$ and $\hat{\kappa}_{A,n}$.
Before that, we analyse the behaviour of $\alpha_n(A)$ and $\hat{\alpha}_n(A)$ when $n$ tends to $\infty$.

\begin{lemma}\label{AlphaConvergenceSpeed}
We have $\vert \alpha_n(A) - \alpha(A) \vert 
  = \mathcal{O} \left( \frac{1}{n} \right) = \vert \hat{\alpha}_n(A) - \alpha(A) \vert$. 
\end{lemma}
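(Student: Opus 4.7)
The plan is to use the identity
\begin{align*}
  \alpha_n(A) - \alpha(A) = \frac{[M,A] - [M_n,A]}{\big([M_n,A] - \tfrac{1}{4}\big)\big([M,A] - \tfrac{1}{4}\big)}
\end{align*}
and an analogous identity for $\hat{\alpha}_n(A) - \alpha(A)$ with $\hat M_n$ in place of $M_n$, to reduce the claim to showing the numerator bounds $|[M_n,A] - [M,A]| = \mathcal{O}(1/n)$ and $|[\hat M_n,A] - [M,A]| = \mathcal{O}(1/n)$. The denominators are bounded away from zero: by Proposition \ref{ConstructionConcordanceMeasure}, $[M,A] > 1/4$, and Lemma \ref{EstimaorMoCClassicalAndCheckerboard} guarantees $[M_n,A], [\hat M_n,A] > 1/4$ for $n$ sufficiently large, so numerator control immediately transfers to the desired rate for $\alpha_n(A)$ and $\hat\alpha_n(A)$.

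To handle the two numerators I would first rewrite $[M,A]$ using disintegration with respect to the singular measure $\mu_M$ supported on the diagonal, obtaining $[M,A] = [A,M] = \int_0^1 A(t,t)\,\mathrm{d}\lambda(t)$. The point is then that the map $t \mapsto A(t,t)$ is $2$-Lipschitz on $\I$, since each univariate section of a copula is $1$-Lipschitz. For the classical case, I would split $[M,A] = \sum_{i=1}^n \int_{(i-1)/n}^{i/n} A(t,t)\,\mathrm{d}\lambda(t)$ and compare term by term with $\frac{1}{n} A(i/(n+1), i/(n+1))$. A direct calculation shows $|t - i/(n+1)| \leq 1/n$ for every $t \in [(i-1)/n, i/n]$, so by the Lipschitz estimate each summand differs by at most $(2/n)(1/n)$; summing over $i$ yields $|[M_n,A] - [M,A]| \leq 2/n$.

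For the checkerboard variant the same splitting works: writing $S_i := ((i-1)/n, i/n)^2$, the $i$-th term of $[\hat M_n,A]$ equals $\frac{1}{n}$ times the mean of $A$ on $S_i$, while $\int_{(i-1)/n}^{i/n} A(t,t)\,\mathrm{d}\lambda(t)$ equals $\frac{1}{n}$ times the mean of $A$ along the diagonal segment $D_i := \{(t,t): t \in [(i-1)/n, i/n]\} \subseteq S_i$. Since any two points of $S_i$ are at most $2/n$ apart in the $\ell^1$ sense, the $2$-Lipschitz bound gives that the two means differ by at most $2/n$; multiplying by $1/n$ and summing over $i$ again yields $|[\hat M_n,A] - [M,A]| \leq 2/n$.

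Combining the numerator bounds with the denominator lower bound $\big([M_n,A]-\tfrac14\big)\big([M,A]-\tfrac14\big) \geq c > 0$ (valid for all $n$ large enough, with the analogous statement for $\hat M_n$) gives the claimed $\mathcal{O}(1/n)$ rates. The only non-routine ingredient is the identification $[M,A] = \int_0^1 A(t,t)\,\mathrm{d}\lambda(t)$, which is a one-line application of the symmetry of the biconvex form together with the singular structure of $\mu_M$; the rest is bookkeeping with Lipschitz constants on a regular grid.
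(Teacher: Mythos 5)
Your proof is correct, and it takes a somewhat different route from the paper's. The paper gets the numerator bound in one stroke by citing the strong approximation result $d_\infty(\hat{M}_n,M)\leq \tfrac{2}{n}$ of Li, Mikusi\'nski and Taylor and then integrating against $\mu_A$, i.e.\ $\vert[\hat{M}_n,A]-[M,A]\vert\leq\int_{\I^2}\vert\hat{M}_n-M\vert\,\mathrm{d}\mu_A\leq\tfrac{2}{n}$; it then converts this into the rate for $\hat{\alpha}_n(A)$ via the Lipschitz continuity of $x\mapsto(x-1/4)^{-1}$ on $(c,\infty)$ for $c>1/4$, which is the same algebra as your difference-of-reciprocals identity. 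You instead bypass the external approximation result entirely: you use the symmetry of the biconvex form to write $[M,A]=\int_0^1 A(t,t)\,\mathrm{d}\lambda(t)$ and then compare this Riemann-type integral term by term with the explicit sums from Lemma \ref{EstimaorMoCClassicalAndCheckerboard}, using only the $1$-Lipschitz property of copula sections on a grid of mesh $1/n$. Your computations check out (in particular $i/(n+1)\in[(i-1)/n,i/n]$, so $\vert t-i/(n+1)\vert\leq 1/n$ on each cell, and the $\ell^1$-diameter $2/n$ of each square controls the checkerboard case), and the handling of the denominator — positivity for each $n$ from Lemma \ref{EstimaorMoCClassicalAndCheckerboard} upgraded to a uniform lower bound for large $n$ via the convergence of the numerator — is sound. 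What your approach buys is self-containedness and explicit constants for both estimators, including the $M_n$ case which the paper's written proof treats only implicitly; what the paper's approach buys is brevity, since the $d_\infty$ bound does all the work at once.
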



\begin{theorem}[Asymptotic normality]~~\label{Asymp.Norm.kappa}\\
Suppose that copula $A$ is invariant and copula $C$ satisfies Condition \ref{ConditionClassical}.
Then  $\sqrt{n} \, (\kappa_{A,n} - \kappa_A(C))$ and $\sqrt{n} \, (\hat{\kappa}_{A,n} - \kappa_A(C))$ converge weakly to the centered Gaussian random variable
\begin{align*}
  \alpha(A) \, [\mathbb{C},A] 
  = \alpha(A) \int_{\I^2} \mathbb{C}(\textbf{u}) \, \mathrm{d} \mu_A(\textbf{u})
\end{align*}
with variance 
\begin{align}\label{Asymp.Norm.kappa.Variance}
  \sigma_A^2 
  & := \alpha(A)^2 \int_{\I^2} \int_{\I^2} {\rm cov} (\mathbb{C}(\textbf{u}), \mathbb{C}(\textbf{v})) \, \, \mathrm{d} \mu_A(\textbf{u}) \, \mathrm{d} \mu_A(\textbf{v})\,. 
\end{align}
\end{theorem}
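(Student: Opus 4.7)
The plan is to reduce the statement to the weak convergence of the copula processes $\mathbb{C}_n$ and $\hat{\mathbb C}_n$ established previously, combined with a Slutsky-type argument that absorbs the deterministic nuisance factors $\alpha_n(A)$ and $\hat\alpha_n(A)$. Exploiting the linearity of $[\cdot,A]$ in its first argument, I would first write down the algebraic identity
\begin{align*}
  \kappa_{A,n}-\kappa_A(C)
  &= \alpha_n(A)\bigl([C_n,A]-[C,A]\bigr) + \bigl(\alpha_n(A)-\alpha(A)\bigr)\bigl([C,A]-\tfrac{1}{4}\bigr).
\end{align*}
Multiplying by $\sqrt n$ and invoking Lemma \ref{AlphaConvergenceSpeed} (which gives $|\alpha_n(A)-\alpha(A)|=\mathcal{O}(1/n)$), the second summand is of order $\mathcal{O}(1/\sqrt n)$ and thus vanishes in the limit.

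Next I would rewrite the first summand as
\begin{align*}
  \alpha_n(A)\,\sqrt n\bigl([C_n,A]-[C,A]\bigr)
  &= \alpha_n(A)\int_{\I^2}\mathbb{C}_n\,\mathrm{d}\mu_A
   = \alpha_n(A)\,[\mathbb{C}_n,A].
\end{align*}
The map $\Lambda_A:l^\infty(\I^2)\to\mathbb{R}$, $D\mapsto [D,A]=\int_{\I^2} D\,\mathrm{d}\mu_A$, is a bounded (hence continuous) linear functional because $|\Lambda_A(D)|\le\|D\|_\infty$. Since $\mathbb{C}_n\Rightarrow\mathbb{C}$ in $l^\infty(\I^2)$ under Condition \ref{ConditionClassical}, the continuous mapping theorem yields $[\mathbb{C}_n,A]\Rightarrow[\mathbb{C},A]$ in $\mathbb{R}$; combining this with $\alpha_n(A)\to\alpha(A)$ via Slutsky's theorem then furnishes $\sqrt n\,(\kappa_{A,n}-\kappa_A(C))\Rightarrow\alpha(A)\,[\mathbb{C},A]$. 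The argument for $\hat\kappa_{A,n}$ proceeds verbatim after replacing $C_n,\mathbb{C}_n,\alpha_n$ by $\hat C_n,\hat{\mathbb{C}}_n,\hat\alpha_n$.

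Finally, since $\mathbb{C}$ is centered Gaussian and $\Lambda_A$ is linear and continuous, $\alpha(A)\,[\mathbb{C},A]$ is centered Gaussian as well. Formula \eqref{Asymp.Norm.kappa.Variance} then follows from a Fubini exchange,
\begin{align*}
  \mathrm{Var}\bigl([\mathbb{C},A]\bigr)
  &= \mathbb{E}\!\left[\int_{\I^2}\!\!\int_{\I^2}\mathbb{C}(\mathbf{u})\,\mathbb{C}(\mathbf{v})\,\mathrm{d}\mu_A(\mathbf{u})\,\mathrm{d}\mu_A(\mathbf{v})\right]\\
  &= \int_{\I^2}\!\!\int_{\I^2}\mathrm{cov}\bigl(\mathbb{C}(\mathbf{u}),\mathbb{C}(\mathbf{v})\bigr)\,\mathrm{d}\mu_A(\mathbf{u})\,\mathrm{d}\mu_A(\mathbf{v}),
\end{align*}
and multiplication by $\alpha(A)^2$ produces the stated expression.

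The main obstacle, in my view, is purely technical rather than conceptual: one has to deal with the non-separability of $l^\infty(\I^2)$ in the continuous mapping step and justify the Fubini swap used to compute the variance. Under Condition \ref{ConditionClassical}, however, $\mathbb{C}$ admits a version with continuous sample paths (and hence is almost surely bounded), so $\mathbb{C}\in L^2(\mu_A\otimes\mathbb{P})$, Fubini applies, and the weak-convergence statement can be framed in the usual Hoffmann--J\o{}rgensen sense in which the continuous mapping theorem is available for the bounded linear functional $\Lambda_A$.
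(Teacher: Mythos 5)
Your proposal is correct and follows essentially the same route as the paper: the identical decomposition isolating $\alpha_n(A)\,[\mathbb{C}_n,A]$ plus remainder terms killed by Lemma \ref{AlphaConvergenceSpeed} and Slutsky, followed by passing the weak convergence $\mathbb{C}_n\Rightarrow\mathbb{C}$ through the bounded linear functional $h\mapsto[h,A]$. The only cosmetic difference is that you invoke the continuous mapping theorem directly where the paper phrases the same step as Hadamard differentiability plus the functional delta method, which for a continuous linear map amounts to the same thing.
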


\begin{remark} \leavevmode
\begin{enumerate}
\item 
It is worth mentioning that, by following the approach developed in \cite{genest2017asymptotic},
the asymptotic normality established in Theorem \ref{Asymp.Norm.kappa} can be extended to copulas for which there exists an open set $\Lambda \subseteq (0,1)^2$ such that $\mu_A(\Lambda) = 1$ and on which $\dot{C}_1$ and $\dot{C}_2$ exist and are continuous.  

\item 
As an immediate consequence of Proposition \ref{ConvergenceEmpCopula} and Lemma \ref{AlphaConvergenceSpeed}, 
the two estimators $\kappa_{A,n}$ and $\hat{\kappa}_{A,n}$ are strongly consistent.
\end{enumerate}
\end{remark}

Example \ref{Ex.MOC.2} provides concrete and simple representations for (most of) the estimators $\kappa_{A,n}$ and $\hat{\kappa}_{A,n}$ when $A \in \{\Pi, M_\Gamma, V\}$;
unfortunately, we were not able to produce a handsome representation for $\hat{\kappa}_{V,n}$.

\begin{exa}[Estimators for measures of concordance]~~\label{Ex.MOC.2} \\
According to Example \ref{Ex.MOC.1}, 
$\kappa_\Pi$ equals Spearman's rho $\rho$ and $\kappa_{M_\Gamma}$ equals Gini's gamma $\gamma$.
Straightforward but rather tedious calculations lead to the following simple representations for the estimators of 
$\kappa_\Pi$, $\kappa_{M_\Gamma}$ and $\kappa_V$.
\begin{enumerate}
\item 
For \textbf{Spearman's rho}, we have 
\begin{equation*}
    \alpha_n(\Pi) = 12 \, \frac{(n+1)^2}{n^2-1}
    \qquad \qquad
    \hat{\alpha}_n(\Pi) = 12 \, \frac{n^2}{n^2-1}
\end{equation*} 
and
\begin{align*}
  \kappa_{\Pi,n} 
  & = \hat{\kappa}_{\Pi,n} 
    = 1 - \frac{6}{n(n^2-1)} \sum_{i=1}^n (R_{i1} - R_{i2})^2
    = \frac{12}{n(n^2-1)} \, \sum_{i=1}^n \, R_{i1}\, R_{i2} - 3 \, \frac{n+1}{n-1} 
\end{align*}
which shows that the estimators are just the well-known sample version of Spearman’s rho; 
compare \cite{kruskal1958, joe1990, perez2016, schmid2006multivariate}.
Interestingly, both estimators of Spearman's rho coincide.

\item 
For \textbf{Gini's gamma}, we have
\begin{equation*}
  \alpha_n(M_\Gamma) 
  = \frac{4n(n+1)}{\left\lfloor n^2/2 \right\rfloor}
  \qquad \qquad
  \hat{\alpha}_n(M_\Gamma)
  = \frac{6n^2}{\left\lfloor \frac{n(3n-2)}{4} \right\rfloor}
\end{equation*} 
and
\begin{align*}
  \kappa_{M_\Gamma,n} 
  & = \frac{1}{\lfloor n^2/2 \rfloor} \, 
      \left( \sum_{i=1}^n \left| R_{i1} + R_{i2} - (n+1) \right|
           - \sum_{i=1}^n \left| R_{i1} - R_{i2} \right| \right)
  \\
  \hat{\kappa}_{M_\Gamma,n}
  & = \frac{3}{2 \, \left\lfloor \frac{n(3n-2)}{4} \right\rfloor} \;  
            \left( \sum_{i=1}^n \left|R_{i1}+R_{i2} -(n+1)) \right|  
            - \sum_{i=1}^n \left|R_{i1}-R_{i2} \right|          
            + \frac{1}{3} \left( \sum_{\substack{1\leq i \leq n:\\ R_{i1} + R_{i2} = n + 1}} - \sum_{\substack{1\leq i\leq n:\\ R_{i1} = R_{i2}}} \right)   
            \right)
\end{align*}
which shows that the estimator $\kappa_{M_\Gamma,n}$ is just the sample version of Gini’s gamma; see, e.g., \cite[Section 5.1.4]{nelsen2007introduction} and \cite{genest2010spearman}).
Notice that the estimators differ, in general.

\item For $\kappa_V$, we have
\begin{equation*}
  \alpha_n(V) 
  = \frac{2n(n+1)}{\lfloor \frac{1}{8} (n-1)^2\rfloor}
\end{equation*}
and 
\begin{align*} 
  \kappa_{V,n} 
  & = \frac{n(n+1)}{2 \, \lfloor \frac{1}{8} (n-1)^2\rfloor} \, 
       - \frac{1}{\lfloor \frac{1}{8} (n-1)^2\rfloor} \sum_{i=1}^n
    \begin{cases}
      \left\vert R_{i1} - R_{i1} \right\vert
      & \text{ if } \left\vert R_{i1} - R_{i2} \right\vert > \frac{n+1}{2} 
      \\
      (n+1) - \left\vert R_{i1} + R_{i1} - (n+1) \right\vert 
      & \text{ if } \left\vert R_{i1} + R_{i2} - (n+1) \right\vert > \frac{n+1}{2} 
      \\
      \frac{n+1}{2} 
      & \text{ otherwise.}
    \end{cases} 
\end{align*}
We were not able to produce a handsome representation for the estimator $\hat{\kappa}_{V,n}$.
\end{enumerate}
\end{exa}

In Section \ref{SectionAsymptoticTesting} below we construct an asymptotic test for detecting whether Ineq. \eqref{Thm.PMI.Ineq} in Theorem \ref{MainResultPMI.Thm} holds.
This test can be used for detecting whether the underlying dependence structure of a given sample is PMI.
For this purpose, we need the following result which shows that the processes 
$$ \alpha_n(B) \sqrt{n} \, (\kappa_{A,n} - \kappa_A(C))  
  - \alpha_n(A) \sqrt{n} \, (\kappa_{B,n} - \kappa_B(C))$$
and 
$$ \hat{\alpha}_n(B) \sqrt{n} \, (\hat{\kappa}_{A,n} - \kappa_A(C)) 
  - \hat{\alpha}_n(A) \sqrt{n} \, (\hat{\kappa}_{B,n} - \kappa_B(C))$$
are asymptotically normal.
Notice that, since the estimators for $\kappa(A)$ and $\kappa(B)$ are not independent, in general, we cannot simply merge the two limit distributions for $\sqrt{n} \, (\kappa_{A,n} - \kappa_A(C))$ and $\sqrt{n} \, (\kappa_{B,n} - \kappa_B(C))$ derived in Theorem \ref{Asymp.Norm.kappa}.

\begin{theorem} \label{Test.Thm}
Suppose that copulas $A$ and $B$ are invariant 
and copula $C$ satisfies Condition \ref{ConditionClassical}. 
Then
$ \alpha(B) \sqrt{n} \, (\kappa_{A,n} - \kappa_A(C)) 
  - \alpha(A) \sqrt{n} \, (\kappa_{B,n} - \kappa_B(C))$
and 
$ \alpha(B) \sqrt{n} \, (\hat{\kappa}_{A,n} - \kappa_A(C)) 
  - \alpha(A) \sqrt{n} \, (\hat{\kappa}_{B,n} - \kappa_B(C))$
converge weakly to the centered Gaussian random variable
\begin{align}\label{Asymp.Norm.Limit.Process}
    \alpha(A) \alpha(B) \, \Big( [\mathbb{C},A] - [\mathbb{C},B] \Big)
\end{align}
with variance
\begin{align}\label{Asymp.Norm.Limit.Variance}
  \sigma_{A,B}^2 
  & := \alpha(A)^2 \alpha(B)^2 
       \left( \int_{\I^2} \int_{\I^2} {\rm cov} (\mathbb{C}(\textbf{u}), \mathbb{C}(\textbf{v})) \, \, \mathrm{d} \mu_A(\textbf{u}) \, \mathrm{d} \mu_A(\textbf{v}) +\int_{\I^2} \int_{\I^2} {\rm cov} (\mathbb{C}(\textbf{u}), \mathbb{C}(\textbf{v})) \, \, \mathrm{d} \mu_B(\textbf{u}) \, \mathrm{d} \mu_B(\textbf{v}) \right)
  \\
  & \qquad - 2 \, \alpha(A)^2 \alpha(B)^2 
      \int_{\I^2} \int_{\I^2} {\rm cov} (\mathbb{C}(\textbf{u}), \mathbb{C}(\textbf{v})) \, \, \mathrm{d} \mu_A(\textbf{u}) \, \mathrm{d} \mu_B(\textbf{v})\,. \notag
\end{align}
\end{theorem}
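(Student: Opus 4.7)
The plan is to reduce the claim to the joint weak convergence of the pair $([\mathbb{C}_n, A], [\mathbb{C}_n, B])$ (and its empirical checkerboard analogue), combining the proposition on the convergence of copula processes with Lemma \ref{AlphaConvergenceSpeed}. For each invariant copula $D \in \{A, B\}$ I would first decompose
$$\sqrt{n}\,(\kappa_{D,n} - \kappa_D(C)) = \alpha_n(D)\,[\mathbb{C}_n, D] + \sqrt{n}\,(\alpha_n(D) - \alpha(D))\,\big([C,D] - \tfrac{1}{4}\big),$$
and observe that the second summand is bounded in absolute value by $\sqrt{n}\cdot \mathcal{O}(1/n)\cdot \mathrm{const}$ and hence tends to $0$ by Lemma \ref{AlphaConvergenceSpeed}. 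Thus only the first summand, which is linear in the empirical copula process, governs the asymptotics. The same decomposition applies to $\hat{\kappa}_{D,n}$ using $\hat{\mathbb{C}}_n$ and $\hat{\alpha}_n(D)$.

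Next, I would establish the joint weak convergence by applying the continuous mapping theorem to the bounded linear functional $\Phi: l^\infty(\I^2) \to \R^2$ defined by $\Phi(f) := ([f,A], [f,B])$; boundedness is immediate from $|[f, D]| \leq \|f\|_\infty$ since $\mu_A$ and $\mu_B$ are probability measures. Because $\mathbb{C}_n$ converges weakly to $\mathbb{C}$ in $l^\infty(\I^2)$, it follows that $([\mathbb{C}_n, A], [\mathbb{C}_n, B])$ converges weakly in $\R^2$ to $([\mathbb{C}, A], [\mathbb{C}, B])$. Combining this with $\alpha_n(A) \to \alpha(A)$ and $\alpha_n(B) \to \alpha(B)$ from Lemma \ref{AlphaConvergenceSpeed} via Slutsky's theorem, the linear combination
$$\alpha(B)\sqrt{n}\,(\kappa_{A,n} - \kappa_A(C)) - \alpha(A)\sqrt{n}\,(\kappa_{B,n} - \kappa_B(C))$$
converges weakly to $\alpha(A)\alpha(B)\big([\mathbb{C}, A] - [\mathbb{C}, B]\big)$, which is precisely the limit \eqref{Asymp.Norm.Limit.Process}. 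The ECC version follows by the same argument applied to $\hat{\mathbb{C}}_n$ and $\hat{\alpha}_n$, since both $\mathbb{C}_n$ and $\hat{\mathbb{C}}_n$ share the same Gaussian limit $\mathbb{C}$.

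Finally, the variance formula \eqref{Asymp.Norm.Limit.Variance} would be obtained by expanding
$$\mathrm{Var}\big([\mathbb{C},A] - [\mathbb{C},B]\big) = \mathrm{Var}\,[\mathbb{C},A] + \mathrm{Var}\,[\mathbb{C},B] - 2\,\mathrm{Cov}\big([\mathbb{C},A],[\mathbb{C},B]\big),$$
rewriting each term as a double integral of the covariance kernel ${\rm cov}(\mathbb{C}(\mathbf{u}), \mathbb{C}(\mathbf{v}))$ against the corresponding product measure via Fubini's theorem (justified because $\mathbb{C}$ has a.s.\ bounded sample paths and $\mu_A, \mu_B$ are probability measures), and then multiplying by $\alpha(A)^2\alpha(B)^2$. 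The only mildly delicate point, and the reason the theorem cannot be obtained by simply summing the individual limits from Theorem \ref{Asymp.Norm.kappa}, is the dependence between the estimators of $\kappa_A(C)$ and $\kappa_B(C)$; this is exactly what the joint convergence of $\Phi(\mathbb{C}_n)$ addresses, and it goes through cleanly because both estimators are continuous linear functionals of the \emph{same} underlying empirical process.
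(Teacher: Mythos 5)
Your proposal is correct and follows essentially the same route as the paper: both reduce the claim, via the decomposition $\sqrt{n}(\kappa_{D,n}-\kappa_D(C))=\alpha_n(D)[\mathbb{C}_n,D]+o_{\mathbb{P}}(1)$ (justified by Lemma \ref{AlphaConvergenceSpeed}) and Slutsky's theorem, to transferring the weak convergence of $\mathbb{C}_n$ and $\hat{\mathbb{C}}_n$ through a continuous linear functional built from the biconvex form. The only cosmetic difference is that you establish joint convergence of the pair $([\mathbb{C}_n,A],[\mathbb{C}_n,B])$ by the continuous mapping theorem and then take the linear combination, whereas the paper applies the functional delta method directly to the scalar map $h\mapsto[h,A]-[h,B]$; for bounded linear functionals these arguments coincide.
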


\section{Asymptotic testing \label{SectionAsymptoticTesting}}

Building upon the main result Theorem \ref{MainResultPMI.Thm} asymptotic tests are constructed for detecting whether Ineq. \eqref{Thm.PMI.Ineq} holds. 
These tests have the potential of being used for detecting whether the underlying dependence structure of a given sample is PMI (or NMI).
This is of particular interest in practice since, for example in the case of a rejection, certain families of PMI copulas such as Gaussian copulas, EFGM copulas, Frank copulas (or as we conject those Archimedean copulas that are PQD) may be excluded for model building.
\\
Consider again a bivariate random vector $\textbf{X} = (X_1, X_2)$ with continuous univariate marginal distribution functions $F_1$ of $X_1$, $F_2$ of $X_2$ and connecting copula $C$.
Further, let $\textbf{X}_1, \dots, \textbf{X}_n$ be a random sample of i.i.d. copies from $\textbf{X} = (X_1, X_2)$.

\bigskip
According to Theorem \ref{MainResultPMI.Thm}, 
in case the copula $C$ is PMI, the inequality $\alpha(A) \, \kappa_B(C) \geq \alpha(B) \, \kappa_A(C)$
holds for every two invariant copulas $A$ and $B$ with $A \preceq B$.
We therefore propose to use the test 
\begin{equation*}
    H_0: \alpha(A) \, \kappa_B(C) \geq \alpha(B) \, \kappa_A(C)
    \qquad \text{ vs. } \qquad
    H_1: \alpha(A) \, \kappa_B(C) < \alpha(B) \, \kappa_A(C)
\end{equation*}
for evaluating the property PMI of a given dependence structure.
As test statistic we then use either
\begin{equation} \label{TestStat.}
  T_{A,B,n} 
  := \sqrt{n} \;\frac{\alpha(B) \, \kappa_{A,n} - \alpha(A) \, \kappa_{B,n}}{\sigma_{A,B,n}}
  \qquad \text{ or } \qquad 
  \hat{T}_{A,B,n} 
  := \sqrt{n} \;\frac{\alpha(B) \, \hat{\kappa}_{A,n} - \alpha(A) \, \hat{\kappa}_{B,n}}{\hat{\sigma}_{A,B,n}}
\end{equation}
where $\sigma_{A,B,n}^2 > 0$ and $\hat{\sigma}_{A,B,n}^2 > 0$ denote consistent estimators for the unknown variance $\sigma_{A,B}^2$ of the limiting distribution $\alpha(A) \alpha(B) \, ([\mathbb{C}, A] - [\mathbb{C}, B])$ presented in \eqref{Asymp.Norm.Limit.Variance}.
Then, the rejection rule $T_{A,B,n} > z_{1-\alpha}$ respectively $\hat{T}_{A,B,n} > z_{1-\alpha}$, where $z_{1-\alpha}$ denotes the $1-\alpha$-quantile of the standard normal distribution, asymptotically rejects the null hypothesis $H_0$ on significance level at most $\alpha$. 
More precisely, under the null we have \pagebreak
\begin{align*}
  \PP_0 (T_{A,B,n} > z_{1-\alpha}) 
  &   =  \PP_0 \left( \sqrt{n} \;\frac{\alpha(B) \, \kappa_{A,n} - \alpha(A) \, \kappa_{B,n}}{\sigma_{A,B,n}}  > z_{1-\alpha} \right) 
  \\ 
  & \leq \PP_0 \left( \sqrt{n}\; \frac{\alpha(B) \, \kappa_{A,n} - \alpha(A) \, \kappa_{B,n} + \alpha(A) \, \kappa_B(C) - \alpha(B) \, \kappa_A(C)}{\sigma_{A,B,n}}  > z_{1-\alpha} \right) 
  \\
  &   =  \PP_0 \left(\frac{\alpha(B) \sqrt{n} \, (\kappa_{A,n} - \kappa_A(C)) 
  - \alpha(A) \sqrt{n} \, (\kappa_{B,n} - \kappa_B(C))}{\sigma_{A,B,n}}  > z_{1-\alpha} \right)
   \\
  & \overset{\text{d}}{\to} 1 - \Phi(z_{1-\alpha}) = \alpha\,,
\end{align*}
where the weak convergence is due to Theorem \ref{Test.Thm} and Slutsky's theorem; 
analogously for the test statistic $\hat{T}_{A,B,n}$.

\begin{remark}~~\label{Test.Rem}
\begin{enumerate}
\item 
Since the variance $\sigma_{A,B}^2$ \eqref{Asymp.Norm.Limit.Variance} of the limiting distribution \eqref{Asymp.Norm.Limit.Process} is typically unknown and depends on the partial derivatives of the unknown copula (see \eqref{GaussianLimit}), 
we here use a multiplier bootstrap approach for approximating $\sigma_{A,B}^2$
(see, e.g., \cite{dette2010,remillard2009,genest2017asymptotic}).
In a nutshell, following the multiplier bootstrap algorithm presented in \cite{dette2010} for the empirical copula and in \cite{genest2017asymptotic} for the empirical checkerboard copula, respectively, we construct a sample of the limiting process \eqref{GaussianLimit} which leads, applying the same arguments as in Theorem \ref{Test.Thm}, to a sample of the limiting distribution \eqref{Asymp.Norm.Limit.Process} whose empirical variance $\sigma_{A,B,n}^2$ and $\hat{\sigma}_{A,B,n}^2$, respectively, we use as estimate for $\sigma_{A,B}^2$.

\item
It is worth mentioning that, due to Remark \ref{MainResultPMI.Rem}, an asymptotic test for property NMI can be constructed using test statistics $- T_{A,B,n}$ and $- \hat{T}_{A,B,n}$.
\end{enumerate}
\end{remark}

\subsection{Simulation study}

We now illustrate the finite sample performance of the above introduced asymptotic tests by means of a simulation study in case the connecting copula of the random variables $X$ and $Y$ is either a
\begin{enumerate}
    \item Gaussian copula $C_\rho$ with varying parameter $\rho \in \{-0.95,-0.75,-0.5,-0.15,0,0.15,0.5,0.75,0.95\}$ where $C_0=\Pi$, or a
    \item Frank copula $C_\delta$ with varying parameter $\delta \in \{-20,-10,-5,-1,0,1,5,10,20\}$ where $C_0=\Pi$.
\end{enumerate}
Recall that, according to Examples \ref{Cor.Gaussian} and \ref{ExampleFrank},
$C_\rho$ is PMI if and only if $\rho \in (0,1)$ and $C_\delta$ is PMI if and only if $\delta \in (0,\infty)$.
Three different asymptotic tests are designed based on the comparison of the three measures of concordance
\begin{enumerate}
    \item[(T1)] Spearman's rho and Gini's gamma, 
    \item[(T2)] Spearman's rho and $\kappa_V$, and
    \item[(T3)] Gini's gamma and $\kappa_V$,
\end{enumerate}
where the estimators are either based on the empirical copula $C_n$ or the empirical checkerboard copula $\hat{C}_n$, and the significance level is set to $\alpha=0.05$.

Figures \ref{Fig.Test} and \ref{Fig.Test2} depict the rejection rates for the Gaussian and Frank family for different parameter and varying sample size $n \in \{50, 100, 250, 1000\}$. 
The results are similar for both dependence structures and both estimation principles considered.
We can conclude that the tests for evaluating Ineq. \eqref{Thm.PMI.Ineq} are slightly liberal but become more accurate with increasing sample size.
In view of testing the property PMI, the tests appear to be \enquote{overconservative}.
We also observe an increase in power with increasing sample size.
\begin{figure}[!ht]
  \centering
  \includegraphics[width=1\textwidth]{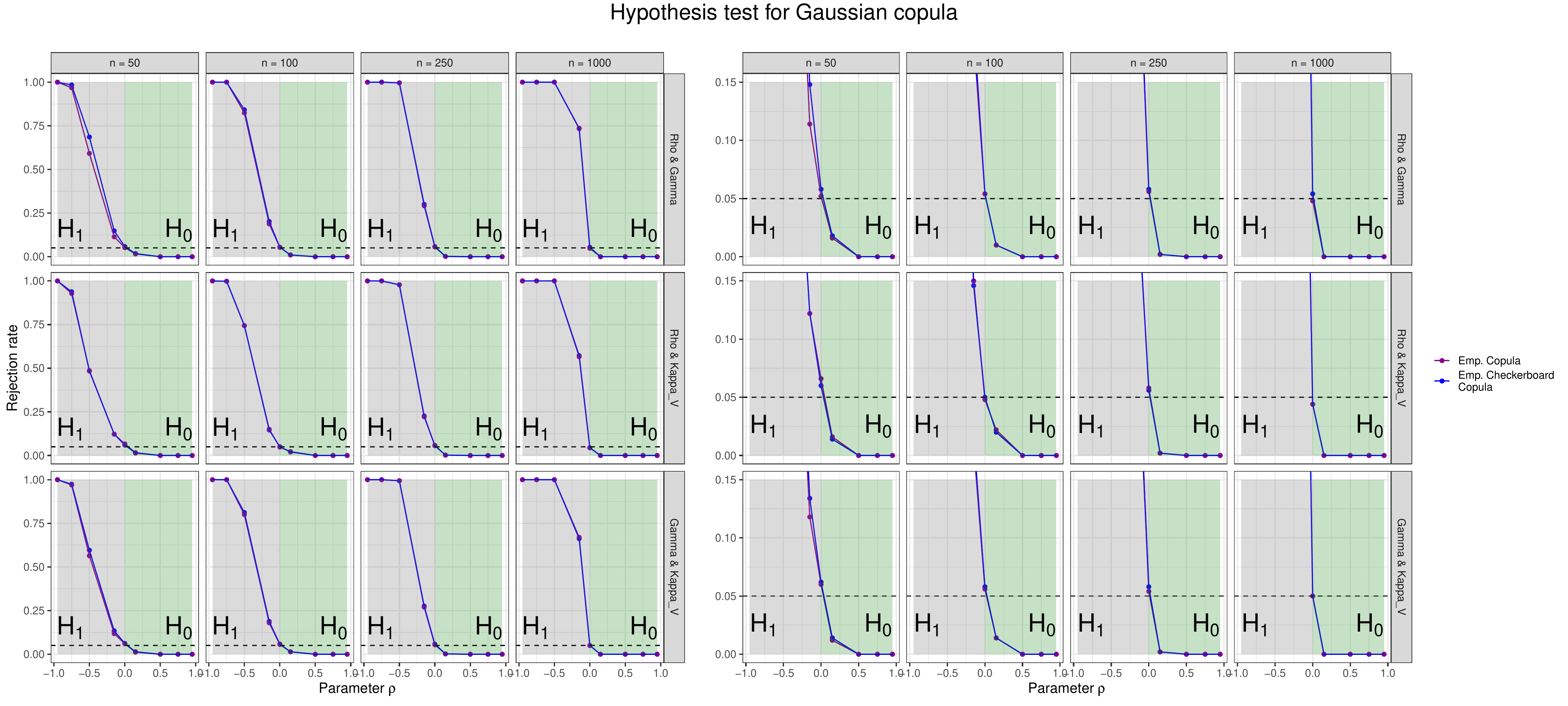}
  \caption{Gauss copula: Rejection rates of hypothesis tests ($\alpha = 0.05$) based on 
  (T1) Spearman's rho \& Gini's gamma, 
  (T2) Spearman's rho \& $\kappa_V$, and 
  (T3) Gini's gamma \& $\kappa_V$ for different parameter and varying sample size $n \in \{50, 100, 250, 1000\}$.} 
  \label{Fig.Test}
\end{figure}
\begin{figure}[!ht]
  \centering
  \includegraphics[width=1\textwidth]{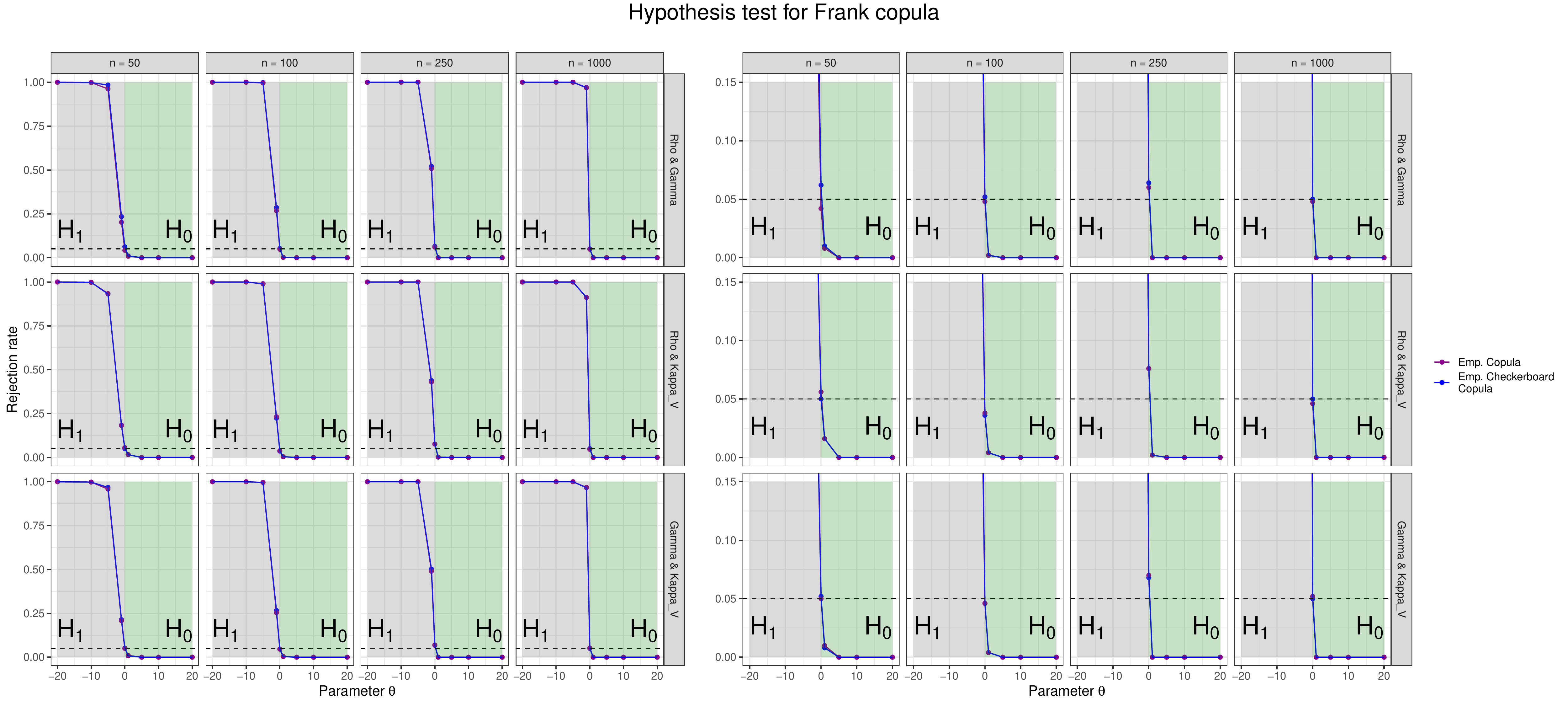}
  \caption{Frank copula: Rejection rates of hypothesis tests ($\alpha = 0.05$) based on 
  (T1) Spearman's rho \& Gini's gamma, 
  (T2) Spearman's rho \& $\kappa_V$, and 
  (T3) Gini's gamma \& $\kappa_V$ for different parameter and varying sample size $n \in \{50, 100, 250, 1000\}$.} 
  \label{Fig.Test2}
\end{figure}

\noindent
Within each test, the unknown variance is estimated via multiplier bootstrap where, for each sample, $R= 1,000$ bootstrap replicates are calculated.
As illustrated in Figure \ref{Fig.Var} for Gaussian and Frank copulas, 
the variance strongly depends on the concrete test design and the parameter used, but rather little on the underlying estimation principle.
\begin{figure}[!ht]
  \centering
  \includegraphics[width=1\textwidth]{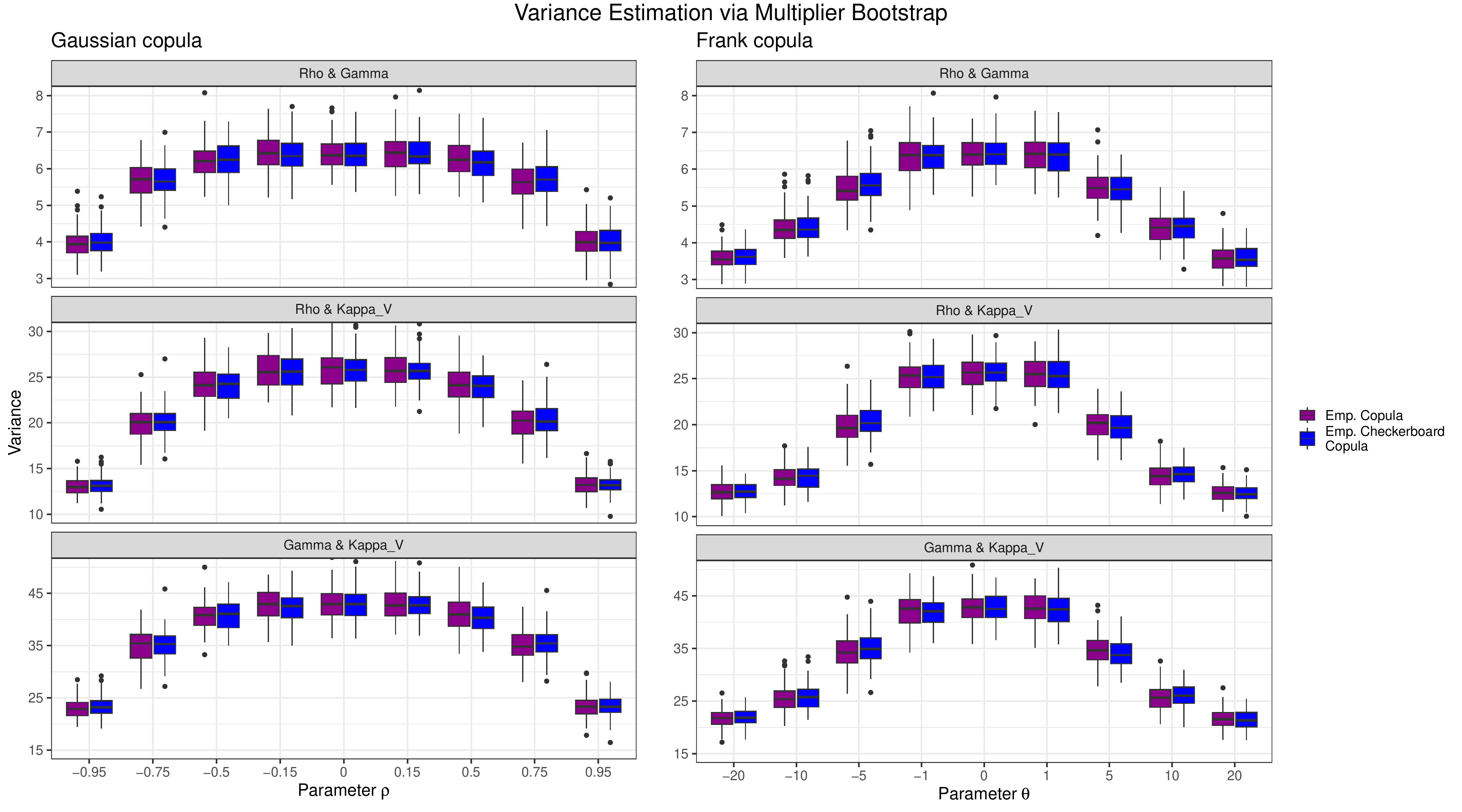}
  \caption{Boxplots summarizing the $100$ obtained variance estimates when using 
  (T1) Spearman's rho \& Gini's gamma, 
  (T2) Spearman's rho \& $\kappa_V$, and 
  (T3) Gini's gamma \& $\kappa_V$ for different parameter and sample size $n = 500$.} 
  \label{Fig.Var}
\end{figure}

\subsection{Real data example}
Finally, we illustrate the potential and importance of the introduced asymptotic tests by analyzing two real data sets.

First, let us consider the data set {\sf faithful} provided in the R package {\sf datasets}. 
The data set contains $n = 272$ observations of the waiting times between eruptions
(variable {\sf waiting}) and the duration of the eruption (variable {\sf eruptions}) for the Old Faithful geyser in Yellowstone National Park, Wyoming, USA.
Right panel of Figure \ref{Fig.Geyser} depicts the dependence structure between the variables {\sf waiting} and {\sf eruptions}, which resembles an ordinal sum structure of $\Pi$ and may therefore be regarded as PMI.
\begin{figure}[!ht]
  \centering
  \includegraphics[width=0.625\textwidth]{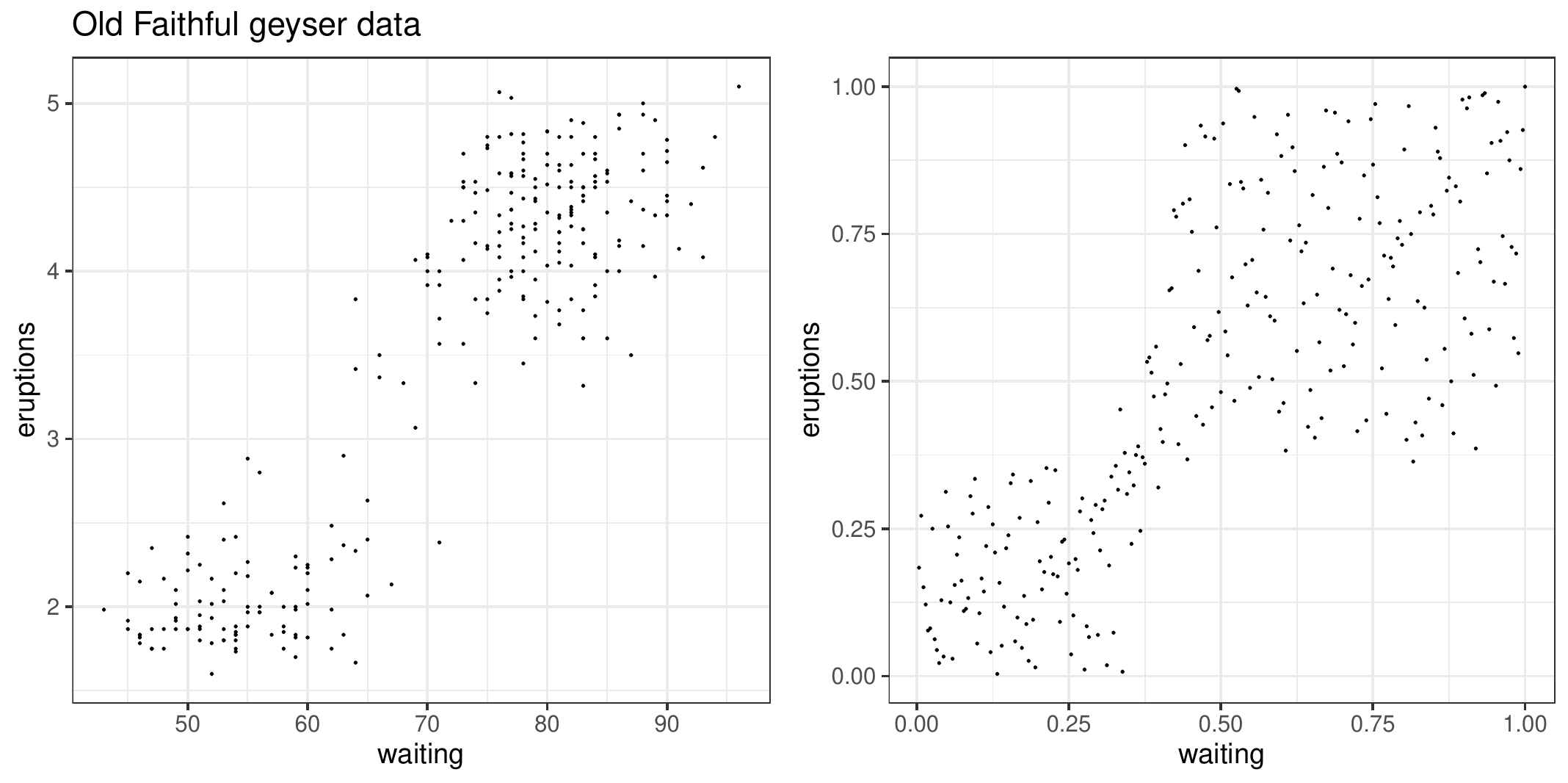}
  \caption{Observations (left panel) and pseudo-observations (right panel) of data set faithful.} 
  \label{Fig.Geyser}
\end{figure}

\noindent
All three tests (T1), (T2) and (T3) based on the empirical copula do not reject the null hypothesis
(test statistics (T1): $-14.35$, (T2): $-13.39$, (T3): $-14.59$; threshold: $1.64$), so that also from a statistical point of view the assumption PMI appears reasonable.
As mentioned in Remark \ref{Test.Rem}, the test statistic can also be used to test for the property NMI, in which case the null hypothesis is rejected regardless of the test chosen (test statistics (T1): $14.35$, (T2): $13.39$, (T3): $14.59$; threshold: $1.64$).

Second, we consider a data set of bioclimatic variables for $n = 1862$ locations homogeneously distributed over the global landmass from CHELSEA (\cite{karger2017,karger2018})
and restrict to the variables {\sf temperature seasonality} (TS) and {\sf precipitation seasonality} (PS).
Right panel of Figure \ref{Fig.Chelsea} depicts the dependence structure between the variables TS and PS.
Although their Spearman's rank correlation equals $0.0013$, the value of Gini's gamma is $-0.0152$ and their Kendall correlation equals $0.02314$,  
the variables are not independent.
All three tests (T1), (T2) and (T3) based on the empirical copula reject the null hypothesis
(test statistics (T1): $3.43$, (T2): $1.76$, (T3): $2.70$; threshold: $1.64$), 
so we may conclude that the two variables fail to be PMI and thus certain copula (sub)classes such as Gaussian copulas, FGM copulas, Fr{\'e}chet copulas, Frank copulas or even those Archimedean copulas that are PQD are not suitable for modelling the underlying dependence structure.
\begin{figure}[!ht]
  \centering
  \includegraphics[width=0.625\textwidth]{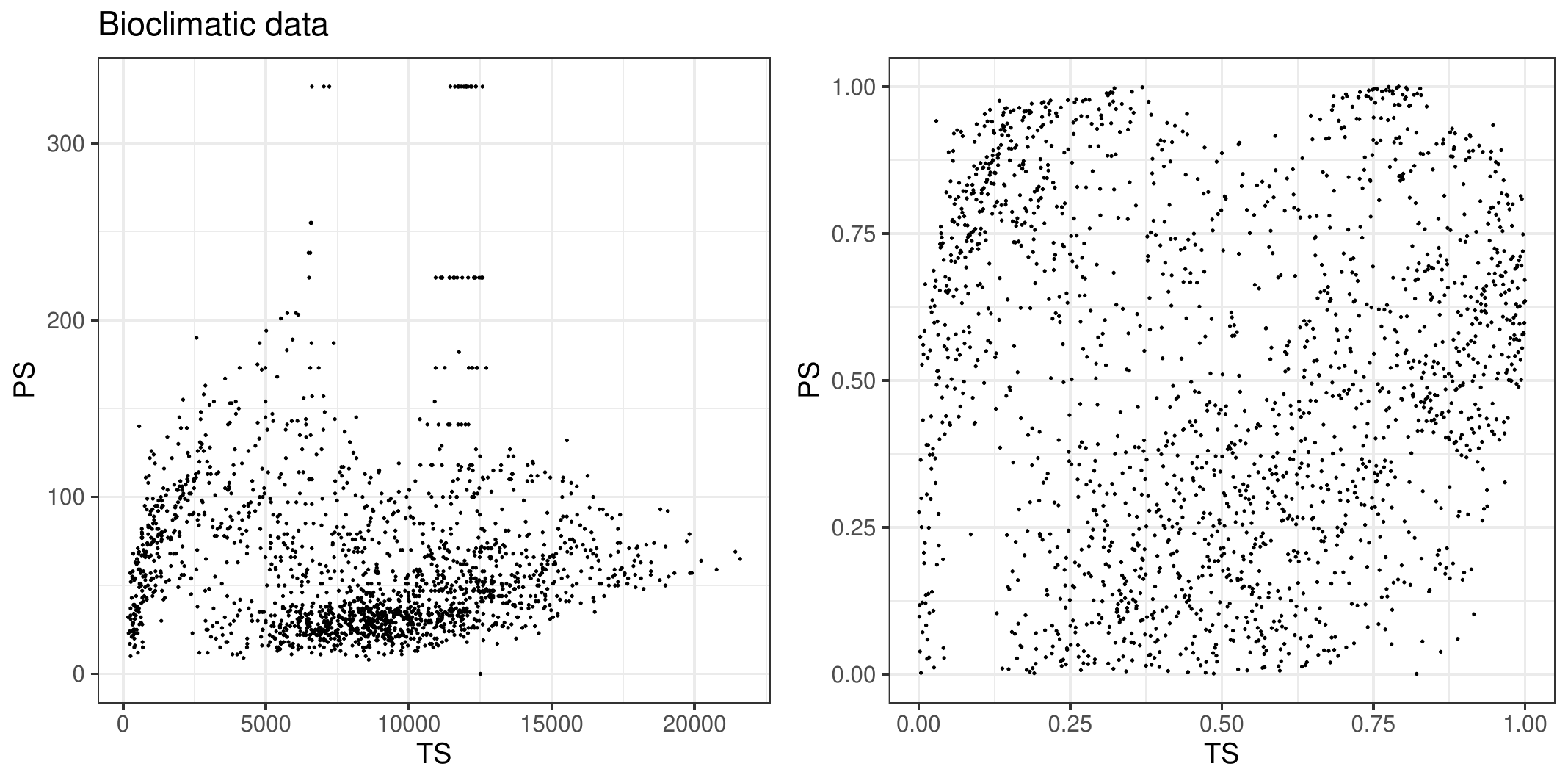}
  \caption{Observations (left panel) and pseudo-observations (right panel) of variables TS and PS of data set CHELSEA.} 
  \label{Fig.Chelsea}
\end{figure}

\section*{Acknowledgement}

SF wishes to thank Klaus D. Schmidt for helpful discussions.
SF further gratefully acknowledges the support of the Austrian Science Fund (FWF) project
{P 36155-N} \emph{ReDim: Quantifying Dependence via Dimension Reduction}
and the support of the WISS 2025 project 'IDA-lab Salzburg' (20204-WISS/225/197-2019 and 20102-F1901166-KZP).
MT gratefully acknowledges the financial support from AMAG Austria Metall AG within the project ProSa.


\section{Appendix} \label{Sec.App}

\subsection{Proofs of Section \ref{Sec.PMI}}

\begin{proof} \textbf{(Example \ref{Cor.Gaussian})}: 
To prove that, for $\rho \in (0,1)$, the Gaussian copula $C_\rho$ is PMI, we verify Ineq. \eqref{PMI.density}.
Therefore, let $c_\rho$ denote the density of $C_\rho$ and define \begin{align*}
  \Tilde{c}_\rho (s,t)
  := \frac{1}{2\pi \sqrt{1-\rho^2}} \exp{\left(-\frac{s^2-2\rho st + t^2}{2(1-\rho^2)} \right)}\,. 
\end{align*}
Then $c_\rho$ equals
\begin{align*}
    c_\rho(u,v) 
    & = \frac{\Tilde{c}_\rho(\Phi^{-1}(u), \Phi^{-1}(v))}{\Phi'(\Phi^{-1}(u)) \, \Phi'(\Phi^{-1}(v))}
\end{align*}
for all $ (u,v) \in (0,1)^2$. 
According to \cite{joe2014dependence} the density $c_\rho$ fulfills  
$c_\rho(u,v) = c_\rho(1-u, 1-v)$ for all $(u,v) \in (0,1)^2$, 
so that \eqref{PMI.density} simplifies to 
\begin{align}\label{AppendixExGaussianEq1}
    0 \leq c_\rho(u,v) - c_\rho(u,1-v) = \frac{1}{\Phi'(\Phi^{-1}(u))} \left( \frac{\Tilde{c}_\rho(\Phi^{-1}(u), \Phi^{-1}(v))}{\Phi'(\Phi^{-1}(v))} - \frac{\Tilde{c}_\rho(\Phi^{-1}(u), \Phi^{-1}(1-v))}{\Phi'(\Phi^{-1}(1-v))} \right)
\end{align}
Since $\Phi'(x) = \frac{1}{\sqrt{2\pi}} e^{-\frac{x^2}{2}}$ is positive, 
\eqref{AppendixExGaussianEq1} is equivalent to
\begin{align}\label{AppendixExGaussianEq2}
    0 
    & \leq \Phi'(\Phi^{-1}(1-v)) \, \Tilde{c}_\rho(\Phi^{-1}(u), \Phi^{-1}(v)) 
           - \Phi'(\Phi^{-1}(v)) \, \Tilde{c}_\rho(\Phi^{-1}(u), \Phi^{-1}(1-v)) \notag
    \\
    & = 
    \frac{1}{\sqrt{2\pi} 2\pi \sqrt{1-\rho^2}} \exp\left( \frac{-\Phi^{-1}(u)^2 + 2\rho \, \Phi^{-1}(u)\Phi^{-1}(v) - \Phi^{-1}(v)^2 - (1-\rho^2) \, \Phi^{-1}(1-v)^2}{2(1-\rho^2)} \right) \notag
    \\
    &\hspace{0.5cm} - 
    \frac{1}{\sqrt{2\pi} 2\pi \sqrt{1-\rho^2}} \exp\left( \frac{-\Phi^{-1}(u)^2 + 2\rho \, \Phi^{-1}(u)\Phi^{-1}(1-v) - \Phi^{-1}(1-v)^2 - (1-\rho^2) \, \Phi^{-1}(v)^2}{2(1-\rho^2)} \right)
\end{align}
which then is, due to the fact that $\Phi^{-1}(v) = -\Phi^{-1}(1-v)$ for all $v \in (0,1)$, 
equivalent to 
\begin{align}\label{AppendixExGaussianEq4}
   2\rho \,\Phi^{-1}(u) \big( \Phi^{-1}(v) - \Phi^{-1}(1-v) \big)
   & = 4\rho \, \Phi^{-1}(u) \, \Phi^{-1}(v) 
   \geq 0
\end{align}
Finally, since $\Phi^{-1}$ is increasing and negative on $(0, \frac{1}{2})$,
Ineq. \eqref{AppendixExGaussianEq4} holds for all $(u,v) \in (0,\tfrac{1}{2})^2$ 
which proves Ineq. \eqref{PMI.density} and hence the assertion.
\end{proof}

\begin{proof} (\textbf{Example \ref{ExampleFrank}}):
To prove that, for $\delta \in (0,\infty)$, the Frank copula $C_\delta$ is PMI, we verify Ineq. \eqref{PMI.density}.
Therefore, let $c_\delta$ denote the density of $C_\delta$ given by
\begin{align*}
  c_\delta(u,v) = \frac{\delta(1 - e^{-\delta})e^{-\delta(u+v)}}{\left( 1 - e^{-\delta} - (1 - e^{-\delta u}) (1 - e^{-\delta v} )\right)^2}\,.
\end{align*}
According to \cite{joe2014dependence} the density $c_\delta$ fulfills  
$c_\delta(u,v) = c_\rho(1-u, 1-v)$ for all $(u,v) \in (0,1)^2$, 
so that \eqref{PMI.density} simplifies to 
\begin{align*}
  0 \leq c_\delta(u,v) - c_\delta(u,1-v)\,.
\end{align*}
To prove the latter define  
\begin{align*}
  \Delta 
  := \frac{c_\delta(u,v) - c_\delta(u,1-v) }{\delta (1-e^{-\delta}) e^{-\delta u}}
  & = \frac{e^{-\delta v}}{\left( 1 - e^{-\delta} - (1 - e^{-\delta u}) (1 - e^{-\delta v}) \right)^2}
      - \frac{ e^{-\delta(1-v)}}{\left( 1 - e^{-\delta} - (1 - e^{-\delta u}) (1 - e^{-\delta (1-v)}) \right)^2}
\end{align*}
and verify that 
\begin{align*}
  & \Delta \cdot \left( 1 - e^{-\delta} - (1 - e^{-\delta u}) (1 - e^{-\delta v}) \right)^2
           \cdot \left( 1 - e^{-\delta} - (1 - e^{-\delta u}) (1 - e^{-\delta (1-v)}) \right)^2
  \\
  & = e^{-\delta v} \left( 1 - e^{-\delta} - (1 - e^{-\delta u})( 1 - e^{-\delta (1-v)} )\right)^2 - e^{-\delta(1-v)} \left( 1 - e^{-\delta} - (1 - e^{-\delta u}) (1 - e^{-\delta v}) \right)^2 \\&=
  e^{-\delta v} \left( (1 - e^{-\delta})^2 - 2(1 - e^{-\delta}) (1 - e^{-\delta u})(1 - e^{-\delta (1-v)}) + (1 - e^{-\delta u})^2( 1 - e^{-\delta (1-v)})^2  \right) \\&\hspace{0.5cm}- e^{-\delta(1-v)} \left( (1 - e^{-\delta})^2 - 2 (1 - e^{-\delta}) (1 - e^{-\delta u}) (1 - e^{-\delta v}) + (1 - e^{-\delta u})^2 (1 - e^{-\delta v})^2 \right) \\&=
  (1 - e^{-\delta})^2 (e^{-\delta v} - e^{-\delta (1-v)}) - 2(1 - e^{-\delta}) (1 - e^{-\delta u}) \underbrace{\left((1 - e^{-\delta (1-v)}) e^{-\delta v} - (1 - e^{-\delta v}) e^{-\delta(1-v)}\right)}_{= e^{-\delta v} - e^{-\delta (1-v)}} \\&\hspace{1cm} + (1 - e^{-\delta u})^2 \underbrace{\left( (1 - e^{-\delta (1-v)})^2 e^{-\delta v} - (1 - e^{-\delta v})^2 e^{-\delta(1-v)} \right)}_{= e^{-\delta v} - e^{-\delta (1-v)} + e^{-\delta (2-v)} - e^{-\delta (v+1)} } \\&=
  (e^{-\delta v} - e^{-\delta (1-v)}) \left( (1 - e^{-\delta}) - (1 - e^{-\delta u}) \right)^2 + (1 - e^{-\delta u})^2 \left( e^{-\delta (2-v)} - e^{-\delta (v+1)}\right) \\&=
  (e^{-\delta v} - e^{-\delta (1-v)}) ( e^{-\delta u} - e^{-\delta} )^2 + (1 - e^{-\delta u})^2 e^{-\delta } (e^{-\delta (1-v)} - e^{-\delta v}) \\&= 
  (e^{-\delta v} - e^{-\delta (1-v)}) \left( (e^{-\delta u} - e^{-\delta} )^2 - (1 - e^{-\delta u})^2 e^{-\delta } \right) \\&=
  (e^{-\delta v} - e^{-\delta (1-v)}) (1 - e^{-\delta}) (e^{-2\delta u} - e^{-\delta})\,.
\end{align*}
Then, for all $(u,v) \in (0,\tfrac{1}{2})^2$,
\begin{align*}
  & c_\delta(u,v) - c_\delta(u,1-v) 
  \\
  & = \big( \delta (1-e^{-\delta}) e^{-\delta u} \big) \cdot \Delta
  \\
  & = \big( \delta (1-e^{-\delta}) e^{-\delta u} \big) \cdot 
      \frac{(e^{-\delta v} - e^{-\delta (1-v)}) (1 -    e^{-\delta}) (e^{-2\delta u} - e^{-\delta})}
      {\big( 1 - e^{-\delta} - (1 - e^{-\delta u}) (1 - e^{-\delta v}) \big)^2
           \cdot \big( 1 - e^{-\delta} - (1 - e^{-\delta u}) (1 - e^{-\delta (1-v)}) \big)^2}
  \geq 0
\end{align*}
which proves Ineq. \eqref{PMI.density} and hence the assertion.
\end{proof}

\begin{proof} (\textbf{Example \ref{ExampleEVC}}):
To prove that $C_A$ fails to be PMI, we apply Ineq. \eqref{PMI.MK}.
Therefore, let $h:(0,1)^2 \to \R$ given by $h(u,v) := \log(u) / \log(uv)$ and $F_A:[0,1) \to \R$ given by $F_A(t) := A(t) + (1-t) D^+A(t)$ with $D^+$ denoting the right-hand derivative of $A$. Then (a version of) the Markov kernel of $C_A$ (see \cite{trutschnig2016mass}) is given by 
\begin{align*} 
    K_{C_A}(u,[0,v]) = \begin{cases}
                        1 & \text{if } u \in \lbrace 0,1 \rbrace\\
                        \frac{C_A(u,v)}{u} F_A(h(u,v)) & \text{if } u,v \in (0,1) \\
                        v & \text{if } (u,v) \in (0,1) \times \lbrace 0,1 \rbrace
                    \end{cases}
\end{align*}
By setting $u \in [\exp(-0.99),\exp(-1.01)]$, $v_1 = \exp(-2)=0.1353$ and $v_2 = \exp(-1.5)=0.2231$ we observe
\begin{align*}
    & K_{C_A}(u,[0,v_1]) - K_{C_A}(1-u,[0,v_1]) + K_{C_A}(u,[0,1-v_1]) - K_{C_A}(1-u,[0,1-v_1]) 
    \\
    & > K_{C_A}(u,[0,v_2]) - K_{C_A}(1-u,[0,v_2]) + K_{C_A}(u,[0,1-v_2]) - K_{C_A}(1-u,[0,1-v_2])
\end{align*}
and therefore the mapping 
$$v \mapsto K_{C_A}(u,[0,v]) - K_{C_A}(1-u,[0,v]) + K_{C_A}(u,[0,1-v]) - K_{C_A}(1-u,[0,1-v])$$ 
fails to be non-decreasing on $(0,1/2)$, hence $C_A$ fails to be PMI.
\end{proof}

\subsection{Proofs of Section \ref{SectionConcordanceMeasureByCopula}}

In what follows we prove Theorem \ref{MainResultPMI.Thm}.
To do so, we first derive some useful representations of $4[C,B]-1$ for copulas $B$ being invariant.

\begin{lemma}\label{IntegralIdentity1}~~
\begin{enumerate}
\item For every copula $B$ satisfying $\nu_1(B) = B$ the identity
\begin{align*}
  4[C,B] - 1 = 2 \int_{\I^2} (C-\nu_1(C))(u,v) \, \mathrm{d} \mu_B (u,v)
\end{align*}
holds for all $C\in\C$.

\item For every copula $B$ satisfying $\nu_1(B) = \nu_2(B) = B$ the identity
\begin{align*}
  4[C,B] - 1 = \int_{\I^2} E_C(u,v) \, \mathrm{d} \mu_B (u,v)
\end{align*}
holds for all $C\in\C$.
\end{enumerate}
\end{lemma}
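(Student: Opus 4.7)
My plan is to evaluate $[\nu_1(C),B]$ via a change-of-variables argument and then assemble the claim. First, observe the probabilistic meaning of the partial reflection $\nu_1$: if $(U,V)$ has copula $C$ then $(1-U,V)$ has copula $\nu_1(C)$, so the measure $\mu_{\nu_1(B)}$ is the pushforward of $\mu_B$ under the map $(u,v)\mapsto(1-u,v)$. In particular, by the change-of-variables formula,
$$ [C,\nu_1(B)] = \int_{\I^2} C(u,v)\,\mathrm{d}\mu_{\nu_1(B)}(u,v) = \int_{\I^2} C(1-u,v)\,\mathrm{d}\mu_B(u,v). $$
Inserting the formula $\nu_1(C)(u,v)=v-C(1-u,v)$ directly into the biconvex form and using that the second marginal of $\mu_B$ is uniform gives
$$ [\nu_1(C),B] = \int_{\I^2} v\,\mathrm{d}\mu_B(u,v) - \int_{\I^2} C(1-u,v)\,\mathrm{d}\mu_B(u,v) = \tfrac{1}{2} - [C,\nu_1(B)]. $$

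For part (1), the hypothesis $\nu_1(B)=B$ turns this into $[\nu_1(C),B]=\tfrac{1}{2}-[C,B]$. Subtracting yields $[C-\nu_1(C),B] = 2[C,B]-\tfrac{1}{2}$, which after multiplication by $2$ is precisely the claim $4[C,B]-1 = 2\int (C-\nu_1(C))\,\mathrm{d}\mu_B$.

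For part (2), the identical calculation applied to $\nu_2(C)(u,v)=u-C(u,1-v)$ delivers the analogous identity $[\nu_2(C),B]=\tfrac{1}{2}-[C,\nu_2(B)]$. Since $\nu=\nu_1\circ\nu_2$, iterating the first relation produces $[\nu(C),B] = [\nu_1(\nu_2(C)),B] = \tfrac{1}{2} - [\nu_2(C),\nu_1(B)]$. Under the combined hypothesis $\nu_1(B)=\nu_2(B)=B$, these simplify to $[\nu_2(C),B]=\tfrac{1}{2}-[C,B]$ and $[\nu(C),B]=[C,B]$. Summing the four terms defining $E_C$ then gives
$$ \int_{\I^2} E_C\,\mathrm{d}\mu_B = [C,B]-[\nu_1(C),B]-[\nu_2(C),B]+[\nu(C),B] = [C,B] - \bigl(\tfrac{1}{2}-[C,B]\bigr) - \bigl(\tfrac{1}{2}-[C,B]\bigr) + [C,B] = 4[C,B]-1. $$

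The argument is essentially bookkeeping; the only obstacle is keeping track of the four reflections and the constant terms contributed by the uniform marginals of $\mu_B$. The core computational fact is the single identity $[\nu_1(C),B]=\tfrac{1}{2}-[C,\nu_1(B)]$ and its $\nu_2$-analogue, which together with the invariance hypotheses reduce the whole lemma to a short algebraic combination.
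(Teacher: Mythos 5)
Your proof is correct and rests on the same key identity $[\nu_1(C),B]=\tfrac{1}{2}-[C,\nu_1(B)]$ that the paper invokes (citing a corollary from the biconvex-form literature) --- you simply re-derive it from the uniform marginals of $\mu_B$ and the pushforward description of $\mu_{\nu_1(B)}$, which makes the argument self-contained. The only organizational difference is in part (2): the paper exploits the point symmetry $(C-\nu_1(C))(1-u,1-v)=(\nu(C)-\nu_2(C))(u,v)$ together with $\mu_{\nu(B)}=\mu_B$ to upgrade part (1), whereas you evaluate the four terms $[C,B]$, $[\nu_1(C),B]$, $[\nu_2(C),B]$, $[\nu(C),B]$ individually via the $\nu_1$- and $\nu_2$-reflection identities; both routes are equally short and correct.
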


\begin{proof}
\cite[Corollary 5.5]{fuchs2016biconvex} gives 
$[\nu_1(C), B] = 1/2 - [C, \nu_1(B)] = 1/2 - [C, B]$ for all $C\in\C$,
which directly implies
\begin{align*}
    2[C,B] - \frac{1}{2}  
    = [C,B] - [\nu_1(C),B] 
    = \int_{\I^2} (C-\nu_1(C))(u,v) \, \mathrm{d} \mu_B (u,v)
\end{align*}
for all $C\in\C$, which proves the first assertion. \\
Now, for $(u,v)\in\I^2$ we have 
$(C-\nu_1(C))(1-u,1-v) = (\nu(C)-\nu_2(C))(u,v)$
which, together with \cite[Lemma 2.3]{fuchs2016biconvex}, implies
\begin{align*}
    4[C,B] - 1
    & = 2 \int_{\I^2} (C-\nu_1(C))(u,v) \, \mathrm{d} \mu_B(u,v)
    \\
    & = \int_{\I^2} (C-\nu_1(C))(u,v) \, \mathrm{d} \mu_B(u,v) + \int_{\I^2} (\nu(C)-\nu_2(C))(1-u,1-v) \, \mathrm{d} \mu_B(u,v)
    \\
    & = \int_{\I^2} (C-\nu_1(C))(u,v) \, \mathrm{d} \mu_B(u,v) + \int_{\I^2} (\nu(C)-\nu_2(C))(u,v) \, \mathrm{d} \mu_{\nu(B)}(u,v) 
    \\
    & = \int_{\I^2} (C-\nu_1(C))(u,v) \, \mathrm{d} \mu_B(u,v) + \int_{\I^2} (\nu(C)-\nu_2(C))(u,v) \, \mathrm{d} \mu_{B}(u,v)
    \\
    & = \int_{\I^2} E_C(u,v) \, \mathrm{d} \mu_{B}(u,v)
\end{align*}
for all $C\in\C$, which proves the second assertion.
\end{proof}

For convenience, define $\I_{i,j} := \left( \frac{i-1}{2}, \frac{i}{2} \right) \times \left( \frac{j-1}{2}, \frac{j}{2} \right)$, $i,j \in \{1,2\}$.

\begin{lemma}\label{IntegralIdentity2}
For every copula $B$ satisfying $\nu_1(B) = \nu_2(B) = B$ the identity 
\begin{align*}
    4[C,B] - 1 = 4 \, \int_{\I_{1,1}} E_C (u,v) \, \mathrm{d} \mu_B(u,v)
\end{align*}
holds for all $C\in\C$.
\end{lemma}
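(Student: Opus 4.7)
The plan is to start from Lemma~\ref{IntegralIdentity1}(2), which already gives
\[
  4[C,B] - 1 = \int_{\I^2} E_C(u,v)\, \mathrm{d}\mu_B(u,v),
\]
and then show that the integral over $\I^2$ equals four times the integral over $\I_{1,1}$ by exploiting the symmetries of $E_C$ and of $\mu_B$.

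The first step is to observe that $E_C$ is invariant under each of the reflections $(u,v) \mapsto (1-u,v)$ and $(u,v) \mapsto (u,1-v)$. Indeed, writing out $\nu_1(C)$, $\nu_2(C)$, $\nu(C)$ from their definitions gives
\[
  E_C(u,v) = C(u,v) + C(1-u,v) + C(u,1-v) + C(1-u,1-v) - 1,
\]
from which it is immediate that $E_C(1-u,v) = E_C(u,v) = E_C(u,1-v)$. The second step is to translate the algebraic identities $\nu_1(B) = B$ and $\nu_2(B) = B$ into statements about the measure $\mu_B$: the assumption forces $\mu_B$ to be invariant under the pushforwards induced by the maps $T_1(u,v) := (1-u,v)$ and $T_2(u,v) := (u,1-v)$, i.e. $\int f\, \mathrm{d}\mu_B = \int f\circ T_i\, \mathrm{d}\mu_B$ for $i \in \{1,2\}$ and every bounded measurable $f$.

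Combining these two symmetries yields, for $f = E_C \cdot \mathds{1}_{\I_{2,1}}$,
\[
  \int_{\I_{2,1}} E_C\, \mathrm{d}\mu_B
  = \int_{\I^2} E_C(1-u,v)\, \mathds{1}_{\I_{2,1}}(1-u,v)\, \mathrm{d}\mu_B(u,v)
  = \int_{\I_{1,1}} E_C\, \mathrm{d}\mu_B,
\]
since $\mathds{1}_{\I_{2,1}}(1-u,v) = \mathds{1}_{\I_{1,1}}(u,v)$ and $E_C(1-u,v) = E_C(u,v)$. The same computation with $T_2$ and with $T_1 \circ T_2$ gives the analogous identities for $\I_{1,2}$ and $\I_{2,2}$, so that each of the four subrectangle integrals coincides with $\int_{\I_{1,1}} E_C\, \mathrm{d}\mu_B$.

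The final step is to assemble these four integrals into $\int_{\I^2} E_C\, \mathrm{d}\mu_B$. Here one has to account for the lines $\{u = 1/2\}$ and $\{v = 1/2\}$ separating the four open rectangles $\I_{i,j}$; but since $B$ is a copula its univariate marginals are uniform, so $\mu_B(\{u = 1/2\} \times \I) = 0 = \mu_B(\I \times \{v = 1/2\})$, and these boundary lines contribute nothing. Hence
\[
  \int_{\I^2} E_C\, \mathrm{d}\mu_B
  = \sum_{i,j \in \{1,2\}} \int_{\I_{i,j}} E_C\, \mathrm{d}\mu_B
  = 4\int_{\I_{1,1}} E_C\, \mathrm{d}\mu_B,
\]
and combining with Lemma~\ref{IntegralIdentity1}(2) yields the claim. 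There is no serious obstacle in this argument; the only point requiring a moment's attention is the boundary null-set remark, which is automatic from the marginal structure of a copula.
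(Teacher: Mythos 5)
Your proof is correct and follows essentially the same route as the paper's: both start from Lemma \ref{IntegralIdentity1}(2), use the reflection symmetries $E_C(u,v)=E_C(1-u,v)=E_C(u,1-v)$, and exploit that $\nu_i(B)=B$ makes $\mu_B$ invariant under the corresponding reflections (the pushforward identity you assert is exactly the content of the cited Lemma 2.3 of \cite{fuchs2016biconvex}) to equate the four subrectangle integrals. Your explicit remark that the separating (and outer boundary) lines are $\mu_B$-null because copula marginals are uniform is a small point the paper leaves implicit.
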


\begin{proof}
From Lemma \ref{IntegralIdentity1} we know that
$4[C,B] - 1 = \int_{\I^2} E_C(u,v) \, \mathrm{d} \mu_B(u,v)$
for all $C\in\C$.
Since $E_C$ equals 
$E_C (u,v) = C(u,v) + C(1-u,v) + C(u,1-v) + C(1-u,1-v) - 1$
and hence
$E_C (u,v) = E_C (1-u,v) = E_C (u,1-v) = E_C (1-u,1-v)$, 
from \cite[Lemma 2.3]{fuchs2016biconvex}
we obtain \pagebreak
\begin{align*}
    & 4[C,B] - 1 = \int_{\I^2} E_C(u,v) \, \mathrm{d} \mu_B(u,v)
    \\
    & = \int_{\I_{1,1}} E_C(u,v) \, \mathrm{d} \mu_B(u,v) + \int_{\I_{2,1}} E_C(u,v) \, \mathrm{d} \mu_B(u,v) 
        + \int_{\I_{1,2}} E_C(u,v) \, \mathrm{d} \mu_B(u,v) + \int_{\I_{2,2}} E_C(u,v) \, \mathrm{d} \mu_B(u,v)
    \\
    & = \int_{\I_{1,1}} E_C(u,v) \, \mathrm{d} \mu_B(u,v) 
        + \int_{\I_{1,1}} E_C(1-u,v) \, \mathrm{d} \mu_{\nu_1(B)}(u,v) 
		\\
		& \qquad  + \int_{\I_{1,1}} E_C(u,1-v) \, \mathrm{d} \mu_{\nu_2(B)}(u,v)  
                   + \int_{\I_{1,1}} E_C(1-u,1-v) \, \mathrm{d} \mu_{\nu(B)}(u,v) 
    \\
    & = \int_{\I_{1,1}} E_C(u,v) \, \mathrm{d} \mu_B(u,v) 
        + \int_{\I_{1,1}} E_C(u,v) \, \mathrm{d} \mu_{B}(u,v) 
        + \int_{\I_{1,1}} E_C(u,v) \, \mathrm{d} \mu_{B}(u,v)  
                   + \int_{\I_{1,1}} E_C(u,v) \, \mathrm{d} \mu_{B}(u,v) 
\end{align*}
where the last identity follows from the invariance of copula $B$.
This proves the assertion.
\end{proof}

Since $M_\Gamma$, $\Pi$ and $V$ are invariant, 
Lemma \ref{IntegralIdentity1} and Lemma \ref{IntegralIdentity2} yield the following representations for the measures of concordance 
Spearman's rho, Gini's gamma and $\kappa_V$.

\begin{corollary}\label{RepresentationRhoGamma}
Spearman's rho $\kappa_\Pi$, Gini's gamma $\kappa_{M_\Gamma}$ and $\kappa_V$ satisfy
\begin{align*}
    \kappa_\Pi(C) 
    & = 3 \int_{\I^2} E_C(u,v) \, \mathrm{d} \mu_{\Pi}(u,v) 
      = 12 \int_{\I_{1,1}} E_C(u,v) \, \mathrm{d} \mu_{\Pi}(u,v) 
    \\
    \kappa_{M_\Gamma}(C) 
    & = 2 \int_{\I^2} E_C(u,v) \, \mathrm{d} \mu_{M_\Gamma}(u,v) 
      = 8 \int_{\I_{1,1}} E_C(u,v) \, \mathrm{d} \mu_{M_\Gamma}(u,v) 
    \\
    \kappa_V(C) 
    & = 4 \int_{\I^2} E_C(u,v) \, \mathrm{d} \mu_{V}(u,v)
      = 16 \int_{\I_{1,1}} E_C(u,v) \, \mathrm{d} \mu_{V}(u,v)
\end{align*}
for all $C\in\C$.
\end{corollary}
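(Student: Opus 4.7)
The plan is straightforward: the statement is a direct corollary of the two preceding lemmas, so I would simply combine them with the explicit scalar representations of $\kappa_\Pi$, $\kappa_{M_\Gamma}$ and $\kappa_V$ provided in Example \ref{Ex.MOC.1}. First I would note that the three copulas $\Pi$, $M_\Gamma$ and $V$ are invariant with respect to $\Gamma$ (Example \ref{Ex.InvariantCop}); in particular each of them satisfies the hypothesis $\nu_1(B) = \nu_2(B) = B$ required by Lemma \ref{IntegralIdentity1} (second part) and Lemma \ref{IntegralIdentity2}.

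Next I would rewrite the identities from Example \ref{Ex.MOC.1} as
\begin{align*}
\kappa_\Pi(C) &= 12 [C,\Pi] - 3 = 3 \, (4[C,\Pi] - 1), \\
\kappa_{M_\Gamma}(C) &= 8 [C,M_\Gamma] - 2 = 2 \, (4[C,M_\Gamma] - 1), \\
\kappa_V(C) &= 16 [C,V] - 4 = 4 \, (4[C,V] - 1).
\end{align*}
Substituting the identity $4[C,B] - 1 = \int_{\I^2} E_C(u,v) \, \mathrm{d} \mu_B(u,v)$ from Lemma \ref{IntegralIdentity1} for $B \in \{\Pi, M_\Gamma, V\}$ yields the three left-hand representations of the corollary, with prefactors $3$, $2$ and $4$ respectively.

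Analogously, substituting $4[C,B] - 1 = 4 \int_{\I_{1,1}} E_C(u,v) \, \mathrm{d} \mu_B(u,v)$ from Lemma \ref{IntegralIdentity2} yields the three right-hand representations with prefactors $12$, $8$ and $16$. Since no further manipulation is required, there is essentially no obstacle here; the whole content of the statement lies in the two lemmas preceding it. The only thing worth double-checking while writing out the argument is the invariance of $V$ (which ensures both $\nu_1(V) = V$ and $\nu_2(V) = V$), as this is the hypothesis bridging the general lemmas to the concrete copula $V$ appearing in the last identity.
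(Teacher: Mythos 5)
Your proposal is correct and matches the paper's own (implicit) argument exactly: the paper likewise derives the corollary by noting that $\Pi$, $M_\Gamma$ and $V$ are invariant and then combining the scalar representations $\kappa_\Pi(C)=3(4[C,\Pi]-1)$, $\kappa_{M_\Gamma}(C)=2(4[C,M_\Gamma]-1)$, $\kappa_V(C)=4(4[C,V]-1)$ from Example \ref{Ex.MOC.1} with Lemmas \ref{IntegralIdentity1} and \ref{IntegralIdentity2}. No gaps.
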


\begin{proof}[Proof of Theorem \ref{MainResultPMI.Thm}]\label{ComparisonMainThm}~~
In order to prove Ineq. \eqref{Thm.PMI.Ineq}, we apply Lemma \ref{IntegralIdentity2} and an integration by parts formula for Lebesgue integrals (see, e.g., \cite{ansari2023}).
Since $A$ is invariant (hence $A(1/2,t) = t/2 = A(t,1/2)$ for all $t \in (0,1)$) and $E_C(0,v) = 0 = E_C(u,0)$ we have
\begin{align*}
    4[C,A] - 1 
    & = 4 \, \int_{\I_{1,1}} E_C (u,v) \, \mathrm{d} \mu_A(u,v)
    \\
    & = 4 \, \int_{\I_{1,1}}  A(1/2,1/2) - A(u,1/2) - A(1/2,v) + A(u,v) \, \mathrm{d} \mu_{E_C}(u,v)
    \\
    & = 4 \, \int_{\I_{1,1}}  A(u,v) \, \mathrm{d} \mu_{E_C}(u,v)
        + \int_{\I_{1,1}}  1 - 2u - 2v \, \mathrm{d} \mu_{E_C}(u,v)
\end{align*}
Therefore,
\begin{align*}
  \alpha(A) \kappa_B(C) - \alpha(B) \kappa_A(C)
  & = \alpha(A) \, \alpha(B) \; \Big( [C,B]-1/4 \big) - \alpha(B) \, \alpha(A) \; \Big( [C,A]-1/4 \big) 
  \\
  & = \alpha(A) \, \alpha(B) \;
      \int_{\I_{1,1}}  B(u,v) - A(u,v) \, \mathrm{d} \mu_{E_C}(u,v)
    \geq 0
\end{align*}
where the inequality is due to the fact that $A \preceq B$.
This proves the result.
\end{proof}

\subsection{Proofs of Section \ref{SectionEstimation}}

\begin{proof}[Proof of Lemma \ref{EstimaorMoCClassicalAndCheckerboard}]
First, recall that due to the absence of ties,
$\sum_{i=1}^n R_{ij} =  \sum_{i=1}^n i = \frac{n(n+1)}{2}$, $j \in \{1,2\}$. 
Integrating the empirical copula yields
\begin{align*}
  \int_{\I^2} C_n (u_1,u_2) \, \mathrm{d} \mu_A (u_1,u_2) 
  & = \frac{1}{n} \sum_{i=1}^n \int_{[0,1]^2} \prod_{j=1}^2 \mathds{1}_{[0, u_j]} \left( \frac{R_{ij}}{n+1}\right) \, \mathrm{d} \mu_A (u_1,u_2)  
  \\
  & = \frac{1}{n} \sum_{i=1}^n\int_{[0,1]^2} \mathds{1}_{\left[\tfrac{R_{i1}}{n+1}, 1\right] \times \left[\tfrac{R_{i2}}{n+1}, 1\right]} (u_1, u_2) \, \mathrm{d} \mu_A (u_1,u_2)  
  \\
  & = \frac{1}{n} \sum_{i=1}^n \mu_A \left( \left[\tfrac{R_{i1}}{n+1}, 1\right] \times \left[\tfrac{R_{i2}}{n+1}, 1\right] \right) 
  \\
  & = \frac{1}{n} \sum_{i=1}^n 
      \Big[ A(1,1) - A \left( \tfrac{R_{i1}}{n+1},1 \right) - A \left( 1,\tfrac{R_{i2}}{n+1}\right) + A \left( \tfrac{R_{i1}}{n+1}, \tfrac{R_{i2}}{n+1} \right) \Big] 
  \\
  & = \frac{1}{n} \sum_{i=1}^n 
      \Big[ 1 - \tfrac{R_{i1}}{n+1} - \tfrac{R_{i2}}{n+1} + A \left( \tfrac{R_{i1}}{n+1}, \tfrac{R_{i2}}{n+1} \right) \Big]
  \\
  & = \frac{1}{n} \sum_{i=1}^n A \left( \tfrac{R_{i1}}{n+1}, \tfrac{R_{i2}}{n+1} \right)\,.
\end{align*}
Now, considering the empirical checkerboard copula $\ec$ and using the fact that $\ec$ is a copula and absolutely continuous with density $\hat{c}_n$ that is piecewise constant on the interior of each rectangle 
$\big[\frac{k-1}{n}, \frac{k}{n}\big] \times \big[\frac{l-1}{n}, \frac{l}{n}\big]$,
$k,l \in \{1,\dots,n\}$, i.e.,
\begin{align*}
  \hat{c}_n(u_1,u_2)
  = \begin{cases}
      n \, \mathds{1}_{\big(\frac{k-1}{n}, \frac{k}{n}\big) \times \big(\frac{l-1}{n}, \frac{l}{n}\big)} (u_1,u_2) 
      & \textrm{if } \big(\tfrac{k}{n},\tfrac{l}{n}\big) = (F_{n,1} (X_{i1}), F_{n,2}(X_{i2}))
      \\
      0 
      & \textrm{otherwise}
    \end{cases},
\end{align*}
symmetry of the biconvex form (see, e.g., \cite[Theorem 3.3.]{fuchs2016biconvex}) yields
\begin{align*}
  \int_{\I^2} \ec(u_1,u_2) \, \mathrm{d} \mu_A(u_1,u_2) 
  & = \int_{\I^2} A(u_1,u_2) \, \hat{c}_n(u_1,u_2)  \, \mathrm{d}  \lambda(u_1,u_2)
  \\
	& = n \, \sum_{i=1}^n 
      \int_{\big(\frac{R_{i1}-1}{n},\frac{R_{i1}}{n}\big) \times \big(\frac{R_{i2}-1}{n},\frac{R_{i2}}{n}\big)} A(u_1,u_2) \, \mathrm{d}  \lambda(u_1,u_2) \,.
\end{align*}
If $C=M$, then the random variables $X_1$ and $X_2$ are comonotonic 
which implies
\begin{align*}
  \int_{\I^2} M_n (u_1,u_2) \, \mathrm{d} \mu_A (u_1,u_2) 
  & = \frac{1}{n} \sum_{i=1}^n A \left( \tfrac{R_{i1}}{n+1}, \tfrac{R_{i2}}{n+1} \right)
    = \frac{1}{n} \sum_{i=1}^n A \left(\frac{i}{n+1}, \frac{i}{n+1} \right) 
\end{align*}
and
\begin{align*}
  \int_{\I^2} \hat{M}_n (u_1,u_2) \, \mathrm{d} \mu_A(u_1,u_2)  
  & = n \, \sum_{i=1}^n \int_{\big(\frac{R_{i1}-1}{n},\frac{R_{i1}}{n}\big) \times \big(\frac{R_{i2}-1}{n},\frac{R_{i2}}{n}\big)} A(u_1,u_2) \, \mathrm{d}  \lambda(u_1,u_2)
  \\
  & = n \, \sum_{i=1}^n \int_{\big(\frac{i-1}{n}, \frac{i}{n}\big) \times \big(\frac{i-1}{n}, \frac{i}{n}\big)} A(u_1,u_2) \, \mathrm{d}  \lambda(u_1,u_2)\,.
\end{align*}
This proves the identities.
\\
Since $V$ is the least element with respect to the order $\preceq$ (see Proposition \ref{LemmaVarthetaProperties}) we have 
\begin{align*}
  \int_{\I^2} M_n (u_1,u_2) \, \mathrm{d} \mu_A (u_1,u_2) 
  &   =  \frac{1}{n} \sum_{i=1}^n A \left(\frac{i}{n+1}, \frac{i}{n+1} \right) 
    \geq \frac{1}{n} \sum_{i=1}^n V \left(\frac{i}{n+1}, \frac{i}{n+1} \right)
     >  \frac{1}{4}
\end{align*}
whenever $n \geq 4$ and, for $n \geq 2$, we obtain
\begin{align*}
  \int_{\I^2} \hat{M}_n (u_1,u_2) \, \mathrm{d} \mu_A(u_1,u_2) - \frac{1}{4}
  & =  \int_{\I^2} \hat{M}_n (u_1,u_2) - \Pi(u_1,u_2) \, \mathrm{d} \mu_A(u_1,u_2) 
  \\
	& =  \int_{\I^2} \int_{[0,u_1]\times [0,u_2]} h(s,t) 
         \, \mathrm{d} \lambda (s,t) \, \mathrm{d} \mu_A(u_1,u_2)
\end{align*}
where $h(s,t) = (n-1) \mathds{1}_{B_n} (s,t) - \mathds{1}_{\I^2 \backslash B_n} (s,t)$ with 
$B_n := \bigcup_{i=1}^n \big(\frac{i-1}{n}, \frac{i}{n}\big)^2$.
Since the inner integral 
$\int_{[0,u_1]\times [0,u_2]} h(s,t) \, \mathrm{d} \lambda (s,t) > 0$ for every $(u_1,u_2) \in (0,1)^2$, we can conclude that $\int_{\I^2} \hat{M}_n (u_1,u_2) \, \mathrm{d} \mu_A(u_1,u_2) > \frac{1}{4}$.
This proves the assertion.
\end{proof}

\begin{proof} (of Lemma \ref{AlphaConvergenceSpeed})
According to \cite{li1998strong} we have 
$d_\infty (\hat{M}_n,M) \leq \frac{2}{n}$ which directly yields 
\begin{align*}
    \vert \, [\hat{M}_n, A] - [M, A] \, \vert 
    \leq \int_{[0,1]^2} \vert \hat{M}_n (u_1,u_2) - M(u_1,u_2) \vert \, \mathrm{d} \mu_A(u_1,u_2) \leq \frac{2}{n}.
\end{align*}
Finally, since $[\hat{M}_n, A] > 1/4$ for every $n\geq 2$ 
(see Lemma \ref{EstimaorMoCClassicalAndCheckerboard}) and the mapping $x \mapsto (x-1/4)^{-1}$ is Lipschitz continuous on $(c,\infty)$ for every $c > 1/4$, 
we obtain  $\vert \hat{\alpha}_n(A) - \alpha(A)\vert = \mathcal{O} \left( \frac{1}{n} \right)$. 
\end{proof}

\begin{proof} (of Theorem \ref{Asymp.Norm.kappa})
Due to Lemma \ref{AlphaConvergenceSpeed} and Slutsky's theorem (see, e.g., \cite{van2000asymptotic}) we may first rewrite the two processes in the following way
\begin{align*}
    \sqrt{n} \, \big(\kappa_{A,n} - \kappa_A(C)\big) 
    & = \sqrt{n} \, \Big( \alpha_n(A) \, \big( [C_n,A] - \tfrac{1}{4} \big) - \alpha(A) \, \big([C,A] - \tfrac{1}{4}\big) \Big) 
    \\
    & = \sqrt{n} \, \big( [C_n,A] - [C,A] \big) \, \alpha_n(A) 
       + \sqrt{n} \, \big( \alpha_n(A) \, - \alpha(A) \big) \, [C,A] 
       - \sqrt{n} \; \frac{\alpha_n(A) - \alpha(A)}{4}
    \\
    & = \big[ \mathbb{C}_n,A \big] \, \underbrace{\alpha_n(A)}_{\overset{[\mathbb{P}]}{\to}\; \alpha(A)} 
       + \underbrace{\sqrt{n} \, \big( \alpha_n(A) \, - \alpha(A) \big)}_{\overset{[\mathbb{P}]}{\to}\; 0} \, [C,A] 
       - \underbrace{\sqrt{n} \; \frac{\alpha_n(A) - \alpha(A)}{4}}_{\overset{[\mathbb{P}]}{\to}\; 0}
\end{align*}
and analogously 
\begin{align*}
    \sqrt{n} \, \big(\hat{\kappa}_{A,n} - \kappa_A(C)\big) 
    & = \big[ \hat{\mathbb{C}}_n,A \big] \, \underbrace{\hat{\alpha}_n(A)}_{\overset{[\mathbb{P}]}{\to}\; \alpha(A)}  
       + \underbrace{\sqrt{n} \, \big( \hat{\alpha}_n(A) \, - \alpha(A) \big)}_{\overset{[\mathbb{P}]}{\to}\; 0} \, [C,A] 
       - \underbrace{\sqrt{n} \; \frac{\hat{\alpha}_n(A) - \alpha(A)}{4}}_{\overset{[\mathbb{P}]}{\to}\; 0}\,,
\end{align*}
so it remains to prove that the two sequences
$\big[ \mathbb{C}_n,A \big]$ and $\big[ \hat{\mathbb{C}}_n,A \big]$
converge weakly to $[\mathbb{C},A]$.
Since the map $l^\infty(\I^2) \to \mathbb{R}$ with $h \mapsto \int_{\I^2} h \, \mathrm{d} \mu_A$ is Hadamard differentiable at $C$ with derivative $l^\infty(\I^2) \to \mathbb{R}$ given by $h \mapsto \int_{\I^2} h \, \mathrm{d} \mu_A$, by the delta method (see, e.g., \cite[Theorem 3.9.4.]{VaartWellner1996}) 
and the asymptotic equivalence of the various copula estimators mentioned in Remark \ref{Emp.Cop.Remark},
$\big[ \mathbb{C}_n,A \big]$ and $\big[ \hat{\mathbb{C}}_n,A \big]$ converge weakly to
$[\mathbb{C},A] = \int_{\I^2} \mathbb{C}(\textbf{u}) \, \mathrm{d} \mu_A(\textbf{u})$.
\\
Finally, due to the fact that $\mathbb{C}$ is a centered Gaussian process and the map $l^\infty(\I^2) \to \mathbb{R}$ with $h \mapsto \int_{\I^2} h \, \mathrm{d} \mu_A$ is continuous and linear, the limit $\alpha(A) \, [\mathbb{C}, A]$ is centered Gaussian with variance given in \eqref{Asymp.Norm.kappa.Variance} (see, e.g., \cite[Section 3.9.2]{VaartWellner1996}).
\end{proof}

\begin{proof} (of Theorem \ref{Test.Thm})
We first prove that 
$\alpha_n(B) \sqrt{n} \, (\kappa_{A,n} - \kappa_A(C)) 
  - \alpha_n(A) \sqrt{n} \, (\kappa_{B,n} - \kappa_B(C))$
and $ \hat{\alpha}_n(B) \sqrt{n} \, (\hat{\kappa}_{A,n} - \kappa_A(C)) 
  - \hat{\alpha}_n(A) \sqrt{n} \, (\hat{\kappa}_{B,n} - \kappa_B(C))$
converge weakly to $\alpha(A) \alpha(B)\, \big([\mathbb{C},A] - [\mathbb{C},B]\big)$.
\\
Due to Lemma \ref{AlphaConvergenceSpeed} and Slutsky's theorem (see, e.g., \cite{van2000asymptotic}) we may first rewrite the process as follows
\begin{align*}
  & \alpha_n(B) \sqrt{n} \, (\kappa_{A,n} - \kappa_A(C)) 
  - \alpha_n(A) \sqrt{n} \, (\kappa_{B,n} - \kappa_B(C))
  \\
  & = \left( \alpha_n(B) \sqrt{n} \, \Big( \alpha_n(A) [C_n,A] - \alpha(A) [C,A] \Big) 
      - \frac{1}{4} \, \alpha_n(B) \sqrt{n} \, \Big( \alpha_n(A)  - \alpha(A) \Big) \right)
  \\
  & \qquad - \left( \alpha_n(A) \sqrt{n} \, \Big( \alpha_n(B) [C_n,B] - \alpha(B) [C,B] \Big) 
      - \frac{1}{4} \, \alpha_n(A) \sqrt{n} \, \Big( \alpha_n(B)  - \alpha(B) \Big) \right)
  \\
  & = \sqrt{n} \, \Big( \alpha_n(A)\alpha_n(B) \, [C_n,A] - \alpha(A)\alpha_n(B) \, [C,A] 
      - \alpha_n(A)\alpha_n(B) \, [C_n,B] + \alpha_n(A)\alpha(B) \, [C,B] \Big)
  \\
  & \qquad  
     + \frac{\sqrt{n}}{4} \, \Big( \alpha(A) \, \alpha_n(B)  - \alpha(B) \, \alpha_n(A) \Big)
  \\
  & = \underbrace{\alpha_n(A)\alpha_n(B)}_{\overset{[\mathbb{P}]}{\to} \; \alpha(A)\alpha(B)} \, \Big( [\mathbb{C}_n,A] - [\mathbb{C}_n,B] \Big)
   + \underbrace{\frac{\sqrt{n}}{4} \, \Big( \alpha(A) \, \alpha_n(B)  - \alpha(B) \, \alpha_n(A) \Big)}_{\overset{[\mathbb{P}]}{\to} 0}
  \\
  & \qquad  
     + \underbrace{\sqrt{n} \, \Big( \alpha_n(A) \alpha_n(B) - \alpha(A)\alpha_n(B) \Big)}_{\overset{[\mathbb{P}]}{\to} 0} \, [C,A]
     + \underbrace{\sqrt{n} \, \Big( \alpha_n(A)\alpha(B) - \alpha_n(A)\alpha_n(B) \Big)}_{\overset{[\mathbb{P}]}{\to} 0} \, [C,B]\,,    
\end{align*}
and analogously for $ \hat{\alpha}_n(B) \sqrt{n} \, (\hat{\kappa}_{A,n} - \kappa_A(C)) 
  - \hat{\alpha}_n(A) \sqrt{n} \, (\hat{\kappa}_{B,n} - \kappa_B(C))$.
So it remains to prove that the two sequences 
$[\mathbb{C}_n,A] - [\mathbb{C}_n,B]$ and 
$[\hat{\mathbb{C}}_n,A] - [\hat{\mathbb{C}}_n,B]$
converge weakly to $[\mathbb{C},A] - [\mathbb{C},B]$.
\\
Since the map $l^\infty(\I^2) \to \mathbb{R}$ with $h \mapsto \int_{\I^2} h \, \mathrm{d} \mu_A - \int_{\I^2} h \, \mathrm{d} \mu_B$ is Hadamard differentiable at $C$ with derivative $l^\infty(\I^2) \to \mathbb{R}$ given by $h \mapsto \int_{\I^2} h \, \mathrm{d} \mu_A - \int_{\I^2} h \, \mathrm{d} \mu_B$, by the delta method (see, e.g., \cite[Theorem 3.9.4.]{VaartWellner1996}) 
and the asymptotic equivalence of the various copula estimators mentioned in Remark \ref{Emp.Cop.Remark},
$[\mathbb{C}_n,A] - [\mathbb{C}_n,B]$ and $[\hat{\mathbb{C}}_n,A] - [\hat{\mathbb{C}}_n,B]$ converge weakly to
$[\mathbb{C},A] - [\mathbb{C},B]$.
Weak convergence of the original processes then results from using Slutsky's theorem a second time.
\\
Finally, due to the fact that $\mathbb{C}$ is a centered Gaussian process and the map $l^\infty(\I^2) \to \mathbb{R}$ with $h \mapsto \int_{\I^2} h \, \mathrm{d} Q^A - \int_{\I^2} h \, \mathrm{d} Q^B$ is continuous and linear, the limit $\alpha(A) \alpha(B) \, ([\mathbb{C}, A] - [\mathbb{C}, B])$ is centered Gaussian with variance given in \eqref{Asymp.Norm.Limit.Variance} (see, e.g., \cite[Section 3.9.2]{VaartWellner1996}).
\end{proof}

\end{document}